\documentclass[twoside]{article}

\usepackage{placeins}
\usepackage{svg}
\usepackage{lipsum}
\usepackage{enumitem}
\usepackage[labelformat=parens,labelsep=space]{subcaption}
\usepackage{natbib}
\usepackage{snaptodo}

\snaptodoset{block rise=0.5em}
\snaptodoset{margin block/.style={font=\tiny}}
\PassOptionsToPackage{numbers, compress}{natbib}
\usepackage[utf8]{inputenc} 
\usepackage[T1]{fontenc}    
\usepackage{hyperref}       
\usepackage{url}            
\usepackage{booktabs}       
\usepackage{amsfonts}       
\usepackage{nicefrac}       
\usepackage{microtype}      
\usepackage{xcolor}         
\colorlet{darkgreen}{green!45!black}
\hypersetup{
  colorlinks,
  citecolor=darkgreen}
\usepackage{hyphenat}

\usepackage{adjustbox}
\usepackage{amsfonts}
\usepackage{xcolor}

\usepackage{pifont}

\usepackage{graphicx}
\usepackage{enumitem}

\usepackage{amsmath, amssymb, mathtools, amsthm}
\usepackage{thmtools, thm-restate}

\theoremstyle{plain}
\declaretheorem{theorem}
\declaretheorem[sibling=theorem, numberwithin=section]{proposition}
\declaretheorem[sibling=theorem, numberwithin=section]{lemma}
\declaretheorem[sibling=theorem, numberwithin=section]{corollary}
\theoremstyle{definition}
\declaretheorem[sibling=theorem, numberwithin=section]{definition}
\declaretheorem[sibling=theorem, numberwithin=section]{assumption}
\theoremstyle{remark}

\newcommand{\defeq}{\overset{\text{\tiny def}}{=}}
\newcommand{\AJ}{\rm AJ}
\renewcommand{\d}{\rm d}
\renewcommand{\P}{\mathbb{P}}
\newcommand{\E}{\mathbb{E}}
\newcommand{\D}{\mathcal{D}}

\newcommand{\softmax}{\rm softmax}
\newcommand\indep{\protect\mathpalette{\protect\indepT}{\perp}}
\def\indepT#1#2{\mathrel{\rlap{$#1#2$}\mkern2mu{#1#2}}}

\usepackage{bbold}
\usepackage{bigints}
\usepackage{stmaryrd}
\usepackage{booktabs,tabularx}
\usepackage{nicefrac, xfrac}
\usepackage{makecell}
\usepackage[nobiblatex]{xurl}
\newcommand\rurl[1]{%
  \href{http://#1}{\nolinkurl{#1}}%
}

\DeclareMathOperator*{\argmin}{arg\,min}

\usepackage{stmaryrd}

\usepackage{tikz}
\usetikzlibrary{backgrounds}
\usetikzlibrary{tikzmark}
\usetikzlibrary{calc}
\usetikzlibrary{arrows,shapes,positioning,shadows,trees,mindmap}
\usetikzlibrary{arrows.meta}

\usepackage{tcolorbox}
\newcommand{\highlight}[2]{\colorbox{#1!17}{$\displaystyle #2$}}

\renewcommand{\highlight}[2]{\colorbox{#1!17}{#2}}

\usepackage[accepted]{aistats2026}

\usepackage{algorithm}
\usepackage{algpseudocode}
\algtext*{EndWhile}
\algtext*{EndIf}
\algtext*{EndFor}
\algtext*{EndLoop}

\usepackage{geometry}
\begin{document}

\addtocontents{toc}{\protect\setcounter{tocdepth}{0}}
\twocolumn[

\aistatstitle{On the calibration of survival models with competing risks}
\aistatsauthor{Julie Alberge \And Tristan Haugomat \And Gaël Varoquaux \And Judith Abécassis}
\aistatsaddress{Inria de Saclay \And Inria de Saclay, Drees \And Inria de Saclay, :probabl. \And Inria de Saclay}
]

\begin{abstract}
    Survival analysis deals with modeling the time until an event occurs, and accurate probability estimates are crucial for decision-making, particularly in the competing-risks setting where multiple events are possible. While recent work has addressed calibration in standard survival analysis, the competing-risks setting remains under-explored as it is harder (the calibration applies to both probabilities across classes and time horizon). We show that existing calibration measures are not suited to the competing-risk setting and that recent models do not give well-behaved probabilities. To address this, we introduce a dedicated framework with two novel calibration measures that are minimized for oracle estimators (\emph{i.e.}, both measures are proper). We also introduce some methods to estimate, test, and correct the calibration. Our recalibration methods yield good probabilities while preserving discrimination.
\end{abstract}

\section{INTRODUCTION: \emph{FAITHFUL} PROBABILITIES MATTER}

\textit{Is my patient more likely to die of a heart attack or  breast cancer in the coming year?} In various application domains, including health or marketing, accurate probabilities are crucial. 
Competing risks analysis focuses on modeling several mutually exclusive \textit{time-to-event outcomes}. More specifically, one wants to predict which of the events of interest will happen first, and when. An application could be distinguishing between different causes of death. One challenge is right censoring: certain events are not observed. 
Practically, competing risks methods estimate the cumulative distribution functions of each event of interest for each individual, for example, the probabilities of dying from each cause at any time for a given patient. 
Initially, simple estimators were proposed, such as \citet{aalen_empirical_1978} (see Definition in Appendix \ref{def:aj}), a marginally consistent estimator that returns the incidence function for each event type for the entire population but does not predict individualized cumulative incidence functions.
Recent research has developed more and more complex models using covariates to individualize the predictions of the cumulative incidence functions for each patient \citep{fine_proportional_1999, ishwaran_random_2008, lee_deephit_2018, wang_survtrace_2022}. Those models estimate conditional probabilities that can be integrated over the whole population to obtain a marginal prediction. Some of these models, in particular, based on machine learning approaches like forests~\citep{ishwaran_random_2008} or deep neural networks~\citep{lee_deephit_2018, wang_survtrace_2022} exhibit excellent discriminative performance, but their calibration is not studied. \\
Indeed, while discrimination between patients is useful, probabilities are key to decision making \citep{VanCalster2019achilles,perez2025decision}, \emph{e.g.}, to arbitrage between treatments depending on the probabilities of different causes of death. But to be useful, these probabilities must be \emph{faithful} in some sense. \emph{Calibration} is one way to assess their faithfulness, comparing the average individualized predicted probabilities with population-level predictions from a marginally consistent estimator \cite[\emph{e.g.}][]{aalen_empirical_1978}. On the METABRIC dataset, following various causes of death~\citep{rueda2019dynamics}, this comparison reveals that existing methods produce predictions that, on average, deviate from the marginally consistent estimator, suggesting poor calibration (Figure~\ref{fig:meanincfunc}). 

\begin{figure*}[t]
    \hfill%
    \includegraphics[height=.29\textwidth]{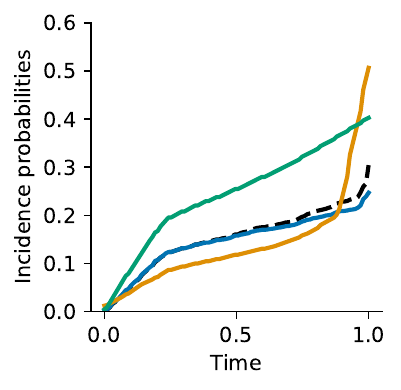}%
	\llap{\raisebox{.26\linewidth}{Event 1}\qquad}%
    \includegraphics[height=.28\textwidth]{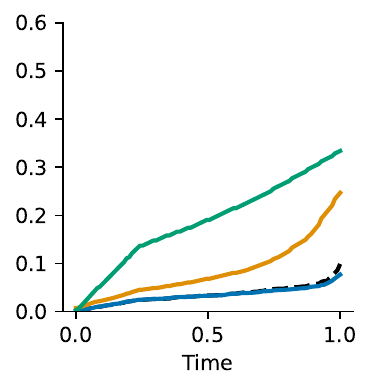}%
	\llap{\raisebox{.26\linewidth}{Event 2}\qquad}%
    \includegraphics[height=.28\textwidth]{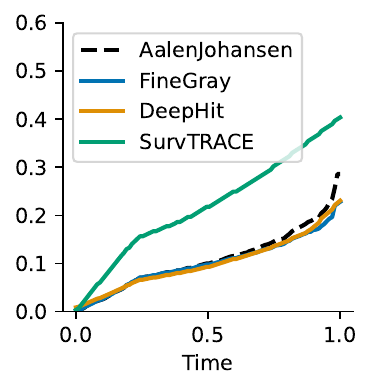}%
	\llap{\raisebox{.26\linewidth}{Event 3}\qquad}%
    \hfill\vbox{}%
    \caption{\textbf{Marginal probabilities of SOTA models do not behave as expected:} We compare the marginal probabilities of the different SOTA models on the SEER Dataset with 10k training samples. All these marginals should be close to the \citet{aalen_empirical_1978} estimator, which is marginally consistent.\label{fig:meanincfunc}}
\end{figure*}

For survival analysis, a simpler setting with a single event of interest, different measures of calibration have recently been studied \citep{haider_eective_nodate, qi_conformalized_2024}. 1-calibration focuses on predictions at a given time point to recover the usual binary setting \citep{haider_eective_nodate}; it can be extended to time-to-event regression by integrating over time, leading to the KM-calibration~\citep{qi_conformalized_2024}. D-calibration focuses on the properties of the cumulative distribution function of the survival function \citep{haider_eective_nodate}.\\

Survival-analysis calibration cannot be directly applied to competing-risks setting (see \autoref{fig:recalsurvival} and \autoref{tab:calibscospe}); the multiclass aspect needs to be taken into account. Practitioners in need of ad-hoc solutions have extended 1-calibration to calibration plots (checking the calibration of different classes at one given time point) \citep{huang_tutorial_2020}. But work is needed to define an actually \emph{proper} metric that addresses both the multiple classes and the whole period of prediction.
The challenge is indeed integrating in both directions (events and time), as illustrated in \autoref{fig:sotacalibs}. \\
This paper defines proper measures of calibration for competing risks, as well as estimators for those measures, by combining the notion of multi-class calibration and survival analysis calibration. 

\begin{table}[t]
    \centering
    \caption{\textbf{Inadequacy of survival-only calibration scores for competing risks models.} On METABRIC (10 seeds), survival calibration metrics fail to capture the consistency of the \citet{aalen_empirical_1978} estimator. While D-calibration \citep{haider_eective_nodate} suggests calibration for event 1 (death from breast cancer) but not event 2 (death from other causes), KM-calibration \citep{qi_conformalized_2024} incorrectly indicates poor calibration for both.
    }
    \begin{tabular}{lrr}
        \toprule
        Event & D-calibration & KM-calibration \\
        & (mean±std) & (mean±std) \\
        \midrule
        1 (breast-cancer)& 1.0±.0 & .0012±.0011 \\
        2 (other)& 0.2±.4 & .0393±.0113 \\
        \bottomrule
    \end{tabular}
    \label{tab:calibscospe}
\end{table}

\begin{figure}[t]
    \centering
    \includegraphics[width=\linewidth]{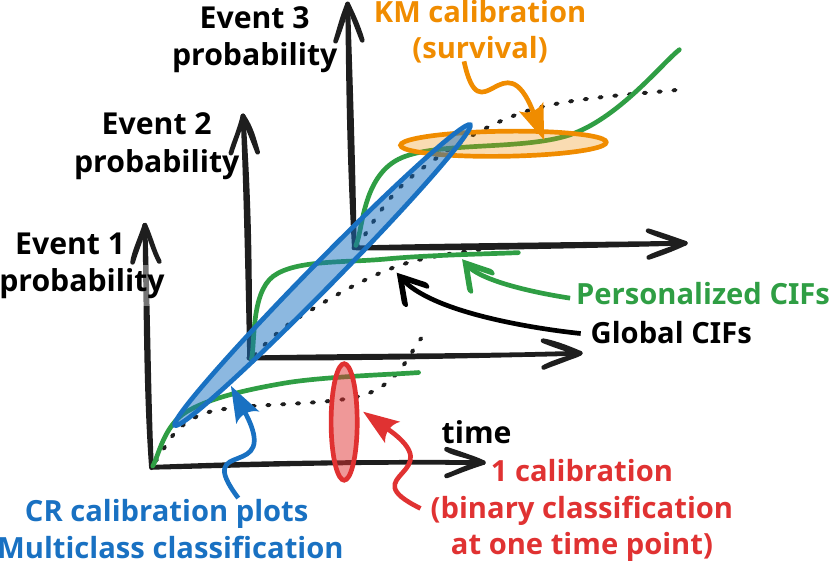}
    \caption{\textbf{Existing calibration measures cast to the competing-risks setting:} The CIF (Cumulative Incidence Function) for an individual specifies, for each event type, the probability of experiencing that event before time $\tau$. In this space, different existing calibration measures correspond to projecting the vector-valued CIF onto a specific "direction": 1-calibration (binary calibration at a fixed time), KM-calibration (survival/Kaplan–Meier calibration), multiclass CR calibration plots, or calibration of global (population-level) CIFs. Each of these captures only one marginal aspect of the problem and fails to assess the joint calibration of the full CIF vector across event types.}
    \label{fig:sotacalibs}
\end{figure}

\paragraph{Contributions}
After highlighting that state-of-the-art (SOTA) competing-risks methods do not return marginal meaningful probabilities, we contribute a full calibration framework for competing risks:
\begin{itemize}[leftmargin=1.7ex]
    \item We extend D-calibration to competing-risks settings, proposing the CR D-calibration. We prove that this metric is proper: the oracle functions are CR D-calibrated. We also prove that any marginally consistent estimator is asymptotically CR D-calibrated, assuming the strict monotonicity of the cumulative incidence functions. We propose a consistent estimator of this metric and a statistical test. 
    \item We introduce the ${\rm cal}_{K}^\alpha$-calibration metric that measures the difference between the marginal oracle function at any time point $\tau$ (\emph{i.e.} $F_k(\tau)$) with the probabilities obtained by the estimator integrated over the samples. We prove that this metric well behaved (being equals to 0 for the oracle functions and calibrated ones), give a consistent estimator and propose a test.
    \item We propose two post-hoc re-calibration methods for competing risks models. We show that they maintain the same discrimination (\emph{i.e.}, unchanged C-index), while also making models more calibrated. We study the impact of the recalibration on other competing risks performance metrics. 
\end{itemize}

\section{BACKGROUND}

\subsection{Problem formulation: competing-risks setting}
We consider $K \in \mathbb{N}^*$ mutually exclusive competing events. For $k \in \llbracket 1, K \rrbracket$, we denote $T^*_k \in \mathbb{R}_+$ the event time of the event $k$, depending on the covariates $ \mathbf{X}$. We also denote $T^* \in \mathbb{R}_+$, the time of the first event that occurs, $T^* = \min\limits_{k \in \llbracket 1, K \rrbracket}(T^*_k)$ and $\Delta^* = \argmin\limits_{k \in \llbracket 1, K \rrbracket}(T^*_k)$.
We observe $(\mathbf{X}, T, \Delta) \sim \mu$, where $\mu$ denotes the distribution followed by $(\mathbf{X}, T, \Delta)$, with $T = \min(T^*, C)$ where $C \in \mathbb{R}_+$ is the censoring time, which can depend on $\mathbf{X}$, and $\Delta \in \llbracket0, K \rrbracket, \Delta= \argmin\limits_{k \in \llbracket 0, K \rrbracket}(T^*_k) $, where 0 denotes a censored observation. 
However, we are primarily interested in the distribution of the uncensored data, $(\mathbf{X}, T^*, \Delta^*) \sim \mu^*$, particularly the conditional distribution of $T^*, \Delta^*|\mathbf{X} =\mathbf{x}$. We index the observations by $i$, and $\textbf{x}_i$ denotes the covariates associated with the $i^{th}$ observation. The outcome is represented by $(t_i, \delta_i)$, where $t_i$ is the observed time, and $\delta_i \in \llbracket0, K\rrbracket $ is the event indicator. For $k \in \llbracket 1, K \rrbracket$, $\delta_i = k$ indicates that the event of interest $k$ was observed at time $t_i$, while $\delta_i = 0$ indicates that the observation was censored at time $t_i$. 
We divide our data set $\mathcal{D} = (\mathcal{X}, \mathcal{Y})$ into a training set $\mathcal{D}_{train}$, a calibration data set $\D_{cal}$, and $\mathcal{D}_{test}$. We introduce the different quantities estimated by competing risks models: 

\begin{definition}[\emph{Quantities of interest}]\label{quantities_interest}
\begin{flalign*}%
    \text{\parbox{\linewidth}{\footnotesize{CIF} \small (Cumulative Incidence
    Function):}}\\[-.1em] F^*(\tau|\mathbf{x}) = \mathbb{P}(T^* \leq \tau
|\mathbf{X}= \mathbf{x}) \\[.1em]
    \text{\parbox{\linewidth}{\footnotesize{CIF of the $k^{th}$ event}:}}\\ F^*_k(\tau|\mathbf{x}) = \mathbb{P}(T^* \leq \tau \cap \Delta = k |\mathbf{X}= \mathbf{x}) \\
    \text{\parbox{\linewidth}{\footnotesize{Survival Function to any event}:}}\\[-.2em] 
    S^*(\tau|\mathbf{x}) = \mathbb{P}(T^* > \tau |\mathbf{X}= \mathbf{x})
\end{flalign*}%
\end{definition}%

In the remainder of this work, we will use the following assumptions: 
\begin{assumption}[\emph{Continuity of the CIFs and Survival Function}]\label{ass:continous}
    We assume that each cumulative incidence function $F_k$ and the survival function are continuous.
\end{assumption}

\begin{assumption}[\emph{Non-informative censoring}]\label{ass:noninfocensoring}
    We make the classic assumption in survival analysis or in the competing-risks setting that the censoring is non-informative with respect to covariates: $$ T^* \indep C \, | \, \mathbf{X}$$
\end{assumption}

\begin{assumption}[\emph{Independence}]\label{ass:exchangeability} 
    We suppose that the samples are drawn i.i.d.
\end{assumption}

\begin{assumption}[\emph{Each event may occur eventually}]\label{ass:nonnul}
    For each event of interest, we make the minor assumption that the event may happen to each individual one day, \emph{i.e.} $\P(\Delta^*=k | \mathbf{X}) \neq 0$. 
\end{assumption}

While in survival analysis, one assumes that the event will occur one day, \emph{i.e.} $\mathbb{P}(T^* < \infty |\mathbf{X}) =1$, in the competing-risks setting, we make an equivalent assumption here, which will only be necessary in the definition of the CR-D-calibration (Def. \ref{def:dcal}):
\begin{assumption}[\emph{One event will happen a.s.}]\label{ass:dieatinfty}
    For all individuals, we assume that one event will happen almost surely one day,  for a sufficiently long time, \emph{i.e.} $\sum_{k=1}^K \mathbb{P}(T^* < \infty, \Delta^* = k |\mathbf{X}) = 1$. 
\end{assumption}

\subsection{Related work} \label{related_work}

\paragraph{The importance of competing risks}
The competing-risks setting accounts for multiple, mutually exclusive event types in time-to-event studies. For example, consider estimating the number of breast cancer-related deaths in a population. Focusing on one possible cause of death while ignoring other potential causes leads to biased estimates \citep{gorfine_frailty-based_2011, wolbers_prognostic_2009}.While the competing-risks setting is more complex than typical survival analysis, various approaches exist to tackle this scenario \citep{aalen_empirical_1978, fine_proportional_1999, lee_deephit_2018,  VanCalster2019achilles, nagpal,alberge_survival_2025}.
 
\paragraph{Calibration metrics in survival analysis settings}  
Introduced by~\citet{haider_eective_nodate}, D-calibration in the survival analysis setting is based on a property of the cumulative distribution function. They assume that the event of interest will happen almost surely, \emph{i.e.} $\forall x, F(\infty|\mathbf{x}) = \mathbb{P}(T^* < \infty \mid \mathbf{X} = \mathbf{x}) = 1$, where $F$ is the cumulative distribution function of $T^*$. By the probability integral transform \citep{david_1948, wikith}, it follows that $F(T^*) \sim \mathcal{U}(0,1)$ (as detailed in~\citet{casella}. \citet{haider_eective_nodate} extend this idea and incorporate censored individuals, given that the patients have survived up to their censoring time. A formal definition of the D-calibration in the survival analysis setting can be found in Appendix \ref{def:dcalsurvival}.
The other main calibration metric in survival analysis is the 1-calibration \citep{huang_tutorial_2020}, which evaluates the model calibration at a fixed time point by considering the problem as a binary classification task. It comes with its extension: the \citet{kaplan_nonparametric_1958} calibration \citep{chapfuwa_calibration_2023}.

\paragraph{Calibration in multi-class settings} In classification without censoring, multi-class calibration can be characterized in several ways, some stronger than others \citep{ kull_beyond_2019, perez-lebel_beyond_2023, vaicenavicius_evaluating_2019}. Top-label calibration \citep{guo_calibration_2017} considers that a model is calibrated if the highest predicted class corresponds to the actual class. Class-wise calibration, more stringent, controls the probability of each class \citep{zadrozny_transforming_2002}. Finally, joint calibration evaluates the calibration of the entire prediction vector across all classes simultaneously \citep{kull_beyond_2019}. Post-hoc methods can recalibrate a multi-class classifier \citep{filho_classifier_2023}. Isotonic regression \citep{zadrozny_transforming_2002} is a classic technique, but it does not control for multiclass probabilities. These are often cobbled together using one-versus-rest \citep{berta_rethinking_2025, isotonic_scikit-learn}. Temperature scaling, using a softmax parameterization, is better behaved \citep{guo_calibration_2017}.

\paragraph{Calibration for competing risks models} The notion of calibration for competing risks lacks theoretical grounding. Nevertheless, as \citet{wolbers_prognostic_2009} show in their review, most works tackle it using calibration plots at a given fixed time point using pseudo-observations to handle censoring, or with a marginal estimator \citep[\emph{e.g.}][]{aalen_empirical_1978,zhang_overview_2018}. The calibration plot can also be constructed using more complex methods \citep{austin_graphical_2022, gerds_calibration_2014}. \citet{booth_using_2023} used shrinkage to recalibrate competing risks methods as a post-hoc operation. But, to our knowledge, there is no real-valued global calibration metric in the competing-risks setting that has been formally defined and that addresses the multiple dimensions of the problem together, as illustrated graphically in Figure~\ref{fig:sotacalibs}.

\section{MEASURING CALIBRATION IN COMPETING-RISKS SETTINGS}
\subsection{Distribution calibration in the competing-risks setting (CR D-cal)}

\begin{figure}[b!]
    \includegraphics[width=\linewidth]{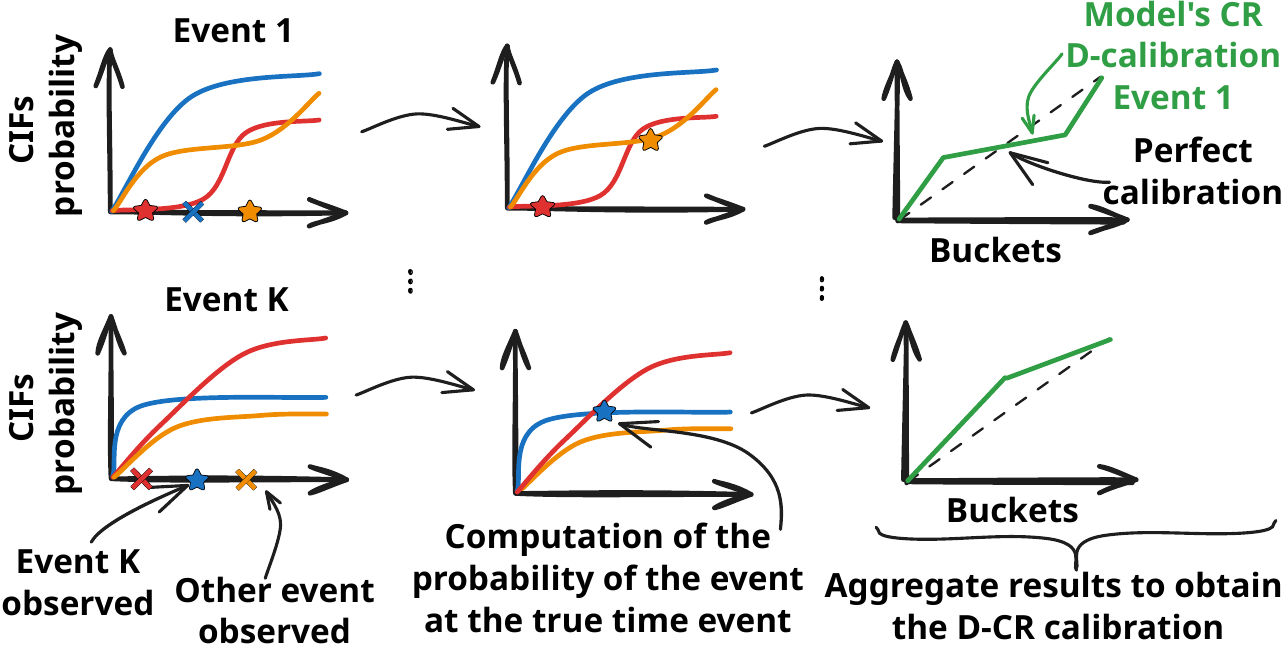}
    \caption{\textbf{CR D-calibration without censoring:} For each event, we compute the individualized CIFs. Then, for each observed individual $\mathbf{x}_i$ with event $k$, we compute $F_k(t_i|\mathbf{x}_i)$ and $F_k(\infty|\mathbf{x}_i)$. Following theorem \ref{thm:bigth}, we compare the $\nicefrac{F_k(t_i|\mathbf{x}_i)}{F_k(\infty|\mathbf{x}_i)}$ cumulative empirical distribution to an uniform cumulative distribution function. Summing this metric over all event types the results gives a measure of the CR D-calibration.%
    \label{fig:dcal-expla}}%
\end{figure}

\paragraph{Without censoring} For now, we suppose there are no censored individuals. Importantly, the D-calibration for survival analysis introduced in section~\ref{related_work} does not easily carry over to the competing-risks setting. Indeed, in the competing-risks setting, a patient does not have a 100\% chance of contracting one particular event and not the other (Assumption \ref{ass:nonnul}).
As a result, the key property for D-calibration in survival analysis \emph{i.e.} $\forall x, F(\infty|\mathbf{x}) = \mathbb{P}(T^* < \infty \mid \mathbf{X} = \mathbf{x}) = 1$ does not extend to competing risks: for each event $k$, $F_k(T^*, \mathbf{X}) \nsim \mathcal{U}(0, 1)$ because $F_k(\infty |\mathbf{X}) = \mathbb{P}(T^* < \infty, \Delta^* = k \mid \mathbf{X} = \mathbf{x}) = \mathbb{P}(\Delta^* = k | \mathbf{X} = \mathbf{x}) \neq 1$.
But as a direct extension, we see almost surely that $\forall k \in \llbracket 1, K \rrbracket, \, F_k(T^* |\Delta^*=k, \mathbf{X}) \, \sim \, \mathcal{U}(0, \mathbb{P}(\Delta^* = k | \mathbf{X}))$. However, this uniformity cannot be generalized to the population level, but an easier way to see this quantity is to take the ratio $\nicefrac{F_k(T^*|\mathbf{X})}{F_k(\infty|\mathbf{X})}$ which simply follows a uniform distribution between 0 and 1, \emph{i.e.}, $\left(\nicefrac{F_k(T^*|\mathbf{X})}{F_k(\infty|\mathbf{X})} | \Delta^* =k, \mathbf{X}\right) \, \, \sim \, \, \mathcal{U}(0, 1)$ (property of a cumulative distribution function, see Lemma \ref{lemm:ratiocdf}, Appendix \ref{dcalsurvival}). Conditioning on the event of interest and taking the ratio of the cumulative incidence functions allows us to use the probability integral transform and define the CR-D-calibration in the case where there is no censoring (illustrated on fig.\,\ref{fig:dcal-expla}). In the following, we will write $F_{k/\infty}(t \mid \mathbf{x}) \defeq \frac{F_k(t \mid \mathbf{x})}{F_k(\infty \mid \mathbf{x})}.$

\paragraph{Incorporating the censoring mechanism} Competing risks are concerned with censored individuals. 
Incorporating censored samples is more subtle than in the single-event case. For a censored individual, the potential event type is unobserved, and the subject will experience \textit{exactly one} of the competing events. Hence, unlike standard survival analysis, we cannot directly impute the conditional distribution of $T^*$ given $\mathbf{X}$ and $\Delta = 0$, as the terminal event remains unknown. Nonetheless, under the assumption that each individual will experience an event (\emph{i.e.}, $\sum_k F_k(\infty \mid \mathbf{X}) = 1$) (Assumption \ref{ass:dieatinfty}), and that the individual has survived up to the censoring time $C$, we define the competing risks D-calibration (CR D-calibration) by assigning, for each cause $k$, the distribution of $\mathbb{P}(T^* \leq \tau, \Delta^* = k | \mathbf{X} , T^* > C) = \nicefrac{\P(C\leq T^\star \leq \tau, \Delta^\star | \mathbb{X})}{\P(T^\star > C)}$ (Definition \ref{def:dcal}) hence attributing a contribution of each censored observation to all possible events following their relative probability distribution. This construction enables accounting for censored observations by distributing each event-specific risk beyond $C$ in a way that respects the model’s marginal predictions. With this added term, we recover a uniform distribution (see Theorem \ref{thm:bigth}). To compute the CR-D-calibration, we construct buckets (to compare them with those from a uniform distribution), $B_{[0, \rho]}$ with $\rho \in [0, 1]$. For a given bucket, we compute two terms. In the first one, we measure the patients who have experienced the event of interest $\mathbb{1}_{\hat{F}_{k/\infty}(T \mid \mathbf{X}) \in [0,\rho], \ \Delta = k}$. The second term represents the patient that have been censored ($\mathbb{1}_{C \leq T^*, \ \hat{F}_{k/\infty}(C\mid \mathbf{X}) \in [0,\rho]}$), this adjustment term is based on the expected value given their known censoring time:
$\nicefrac{\hat{F}_k(\infty\mid \mathbf{X})\rho - \hat{F}_k(C\mid \mathbf{X})}{\hat{S}(C\mid \mathbf{X})}$.
\begin{definition}[CR D-calibration]\label{def:dcal}
     For any estimator of the CIFs, we measure the following quantity, for a given $\alpha \in \mathbb{R}_+, \alpha > 1$ where $\alpha$ represents the $\alpha$-norm. 
    The numerator estimates the measure of the bucket, and the denominator normalizes by the total predicted mass of event (since some individuals may never experience k).
    \begin{flalign*}
        \text{\footnotesize{CR D-cal}} &&  \!\!D^{CR}_{\alpha} =\!  \sum_{k=1}^K  \!\bigg(\int_{0}^1 \!\Big|\! \underbrace{\frac{\mathbb{E}_{T, \Delta}(B^k_{[0, \rho ]})}{\E_X(\hat{F}_k(\infty |X))}}_{\defeq b^k_{[0, \rho]}} -  \rho \Big|^\alpha \!d\rho \bigg)^{\!\frac{1}{\alpha}}&&
    \end{flalign*}%
    where\; $\forall \rho \in [0, 1]$
    \begin{multline*}
        B^k_{[0, \rho]} = \tikzmarknode{observed}{\highlight{violet!50!white}{$\!\mathbb{1}_{\hat{F}_{k/\infty}(T\mid \mathbf{X})\in[0,\rho],~\Delta=k}\!$}} \\ + 
     \tikzmarknode{censored}{\highlight{green!80!white}{$\!\frac{\hat{F}_k(\infty\mid \mathbf{X})\rho - \hat{F}_k(C\mid \mathbf{X})}{\hat{S}(C\mid \mathbf{X})}\mathbb{1}_{C\leq T^*, \hat{F}_{k/\infty}(C\mid \mathbf{X})\in[0,\rho]}\!$}}
    \end{multline*} 
    \begin{tikzpicture}[overlay,remember picture,>=stealth,nodes={align=left,inner ysep=1pt},<-]
     \path (observed.north) ++ (4em, 2em) node[anchor=north,color=violet] (obs){\text{\parbox{20ex}{\small Observed samples}}};
     \draw [color=violet](observed.north) -- ([xshift=0.1ex,color=violet]obs.south);
    \path (censored.north) ++ (4.9em, 2.8em) node[anchor=north,color=green!50!black] (censor){\text{\parbox{16ex}{\small Probability after censoring}}};
     \draw [color=green!50!black](censored.north) -- ([xshift=-0.1ex,color=green!50!black]censor.south);
\end{tikzpicture} 
    A model is perfectly CR D-calibrated when $D_\alpha^{CR} = 0$which implies that the empirical distribution corresponds to the expected one. We can also define the intervals differently, between two reals $0\leq a < b\leq 1$ instead of $[0, \rho]$, $B^k_{[a,b]}=B^k_{[0,b]} - B^k_{[0,a]}$. This will add a new term to the construction of the interval $B_{[a, b]}$ detailed in Appendix \ref{proof:expec} (eq. \ref{eq:bucket_ab}).
\end{definition}

\begin{restatable}[Properness]{theorem}{bigth}\label{thm:bigth}
    Under assumptions \ref{ass:continous}, \ref{ass:noninfocensoring}, \ref{ass:exchangeability}, \ref{ass:nonnul}, the expectation of the bucket for the oracle function can be computed as: $\E(B_{[a,b]}\mid \mathbf{X}) = (b-a)F_k(\infty\mid \mathbf{X})$. 
    This implies that the oracle functions are CR D-calibrated (Def. \ref{def:dcal}) - \emph{i.e.}, $\forall \alpha \in \mathbb{R}^*, \, D^{CR}_\alpha$ (eq. 1) is minimized for the oracle CIFs. \\
    Under the same assumptions, a marginally consistent estimator is asymptotically CR D-calibrated. 
\end{restatable}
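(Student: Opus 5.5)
The plan is to compute $\E(B^k_{[0,\rho]}\mid \mathbf{X})$ directly for the oracle CIFs by conditioning on $\mathbf{X}$ and splitting into the two highlighted terms. Then we pass to $[a,b]$ by linearity, and finally integrate over $\mathbf{X}$ to obtain $b^k_{[0,\rho]}=\rho$.

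\textbf{Step 1: the observed term.} We have $\{\Delta=k\}=\{T^*_k\le C,\ \Delta^*=k\}$, and on this event $T=T^*$. By Assumption \ref{ass:noninfocensoring}, conditioning on $C=c$ and $\mathbf{X}$ gives
\[
\E\big(\mathbb{1}_{F_{k/\infty}(T^*\mid \mathbf{X})\le\rho,\ T^*\le c,\ \Delta^*=k}\mid \mathbf{X},C=c\big)=F_k\big(\min(q_\rho,c)\mid \mathbf{X}\big).
\]
Here $q_\rho$ is the $\rho$-quantile of $F_{k/\infty}(\cdot\mid \mathbf{X})$. Continuity (Assumption \ref{ass:continous}) gives $F_k(q_\rho\mid\mathbf{X})=\rho F_k(\infty\mid\mathbf{X})$, which is exactly Lemma \ref{lemm:ratiocdf}, and Assumption \ref{ass:nonnul} makes the ratio well defined. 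Flat stretches of $F_k$ carry zero $T^*_k$-mass, so they are harmless.

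\textbf{Step 2: the censored term.} The indicator $\mathbb{1}_{C\le T^*}$ has conditional expectation $S(c\mid\mathbf{X})$ given $C=c$. This cancels the denominator $S(C\mid\mathbf{X})$, leaving
\[
\big(\rho F_k(\infty\mid\mathbf{X})-F_k(c\mid\mathbf{X})\big)\mathbb{1}_{F_{k/\infty}(c\mid\mathbf{X})\le\rho}=\big(F_k(q_\rho)-F_k(c)\big)\mathbb{1}_{c\le q_\rho}.
\]
Adding Step 1 gives $F_k(\min(q_\rho,c))+(F_k(q_\rho)-F_k(c))^+=F_k(q_\rho)=\rho F_k(\infty\mid\mathbf{X})$ for every $c$. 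Integrating over $c$ then yields $\E(B^k_{[0,\rho]}\mid\mathbf{X})=\rho F_k(\infty\mid\mathbf{X})$.

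For general intervals, $B_{[a,b]}=B_{[0,b]}-B_{[0,a]}$ gives $(b-a)F_k(\infty\mid\mathbf{X})$. Taking $\E_X$ gives $b^k_{[0,\rho]}=\rho$ for all $\rho$ and $k$. Hence $D^{CR}_\alpha=0$, which is its minimum because it is a sum of nonnegative norms. This establishes properness.

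\textbf{Step 3: the asymptotic claim.} I would read ``marginally consistent'' as $\hat F_k(\cdot)\to F_k(\cdot)$ uniformly in $t$, where the estimator does not depend on $\mathbf{X}$ (as with Aalen–Johansen). In that case $\hat F_{k/\infty}$ and $\hat S$ are deterministic functions of $t$.

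Redoing Steps 1–2 with $\hat F$ in place of the oracle, we take expectations over $\mathbf{X}$ first. The marginal quantities then appear: the observed term becomes $F_k(\min(\hat q_\rho,c))$ and the censored term becomes $(\rho\hat F_k(\infty)-\hat F_k(c))S(c)/\hat S(c)$. These converge to the oracle computation, since $\hat q_\rho\to q_\rho$ when $F_k$ is continuous and strictly increasing, which is the monotonicity mentioned in the contributions. Dominated convergence, with $\hat S$ bounded away from $0$ on the support of $C$, gives $b^k_{[0,\rho]}\to\rho$ pointwise. Bounded convergence in $\rho$ then gives $D^{CR}_\alpha\to0$.

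\textbf{Main obstacle.} I expect the main difficulty to be this last step. It requires justifying the swap of limit and expectation, and controlling $1/\hat S(C)$ near the tail of the censoring distribution. A secondary subtlety is the handling of plateaus of $F_k$ in Step 1. There I would argue via the generalized inverse, noting that the set $\{F_{k/\infty}(T^*)\le\rho\}$ differs from $\{T^*\le q_\rho\}$ only on a null set.
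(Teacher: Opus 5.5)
Your Steps 1--2 are correct and are essentially the paper's proof of Lemma~\ref{lemm:exporacle}. The paper splits $\P\big(F_{k/\infty}(T^*\mid\mathbf{X})\in[a,b],\Delta^*=k\mid\mathbf{X}\big)$ into an uncensored part and two censored parts, according to where $F_{k/\infty}(C\mid\mathbf{X})$ falls. It then matches each part with a term of $B_{[a,b]}$ by conditioning on $C$. You compute $\E(B_{[0,\rho]}\mid\mathbf{X},C=c)$ directly and pass to $[a,b]$ by linearity. Your identity $F_k(\min(q_\rho,c))+(F_k(q_\rho)-F_k(c))^+=F_k(q_\rho)$ is a tidier way to organize the same computation.

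The gap is in Step 3. When you ``take expectations over $\mathbf{X}$ first'' and get $F_k(\min(\hat q_\rho,c))$ and $S(c)/\hat S(c)$ with the \emph{marginal} $F_k$ and $S$, you are assuming that the law of $(T^*,\Delta^*)$ given $C=c$ is its marginal law. That is, you are assuming $C\indep(T^*,\Delta^*)$ unconditionally. Assumption~\ref{ass:noninfocensoring} only gives this given $\mathbf{X}$, and the paper explicitly lets $C$ depend on $\mathbf{X}$. Under Assumption~\ref{ass:noninfocensoring} alone, the censored term integrates to $\E\big[(\rho\hat F_k(\infty)-\hat F_k(C))\,S(C\mid\mathbf{X})\,\hat S(C)^{-1}\mathbb{1}_{\hat F_{k/\infty}(C)\le\rho}\big]$, and nothing collapses.

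This is not a technicality. Take $K=1$ and $\mathbf{X}$ uniform on $\{0,1\}$. Let $T^*\mid\mathbf{X}=0\sim{\rm Exp}(1)$ be uncensored and $T^*\mid\mathbf{X}=1\sim{\rm Exp}(2)$ be censored at $c_1=1$. Let $\hat F=F$ be the exact marginal CIF, which is trivially marginally consistent. At the $\rho$ whose marginal quantile is $q_\rho=2$, one finds $\E(B_{[0,\rho]})-\rho=\tfrac12\big(e^{-4}-e^{-2}S(2)/S(1)\big)\approx-0.012$, so $D^{CR}_\alpha>0$. Hence the asymptotic claim cannot be proved under Assumptions 1--4 alone. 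The paper's proof of this part adds exactly the missing hypothesis (``the bigger assumption that $C\indep T^*$''), and you must state it too.

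With that hypothesis, your argument is a more careful version of the paper's dominated-convergence sketch, and your main obstacle disappears. On $\{\hat F_{k/\infty}(c)\le\rho\}$ you have $0\le\rho\hat F_k(\infty)-\hat F_k(c)\le\hat F_k(\infty)-\hat F_k(c)\le\hat S(c)$ as soon as $\hat S=1-\sum_j\hat F_j$ and $\sum_j\hat F_j(\infty)\le1$, which holds for Aalen--Johansen. So the censored weight lies in $[0,1]$, and no lower bound on $\hat S$ is needed. Similarly, uniform convergence and continuity of $F_k$ give $F_k(\hat q_\rho)\to\rho F_k(\infty)$ without strict monotonicity.
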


\begin{proof}[Proof sketch]
    We start by proving the stated expectation value (Lemma \ref{lemm:exporacle}) by studying the oracle function in those cases: if the patient is censored or if the event has been observed. Then, the result is straightforward with the Lemma \ref{lemm:exporacle}. 
    With Lebesgue's dominated convergence theorem, we show that a marginally consistent estimator is asymptotically CR D-calibrated.  For more details, see Appendix \ref{proof:thproperness}.
\end{proof}

\begin{corollary}
    As the Aalen-Johansen estimator is a consistent (see Lemma \ref{lemm:ajconsistent}) marginal estimator of the oracle functions, it is also asymptotically CR D-calibrated.
\end{corollary}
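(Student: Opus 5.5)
The plan is to first compute $\E(B^k_{[0,\rho]}\mid \mathbf{X})$ for the oracle CIFs, and then get the interval version from $B^k_{[a,b]}=B^k_{[0,b]}-B^k_{[0,a]}$ by linearity. Fix $k$ and $\mathbf{X}=\mathbf{x}$. I will split on whether the event is observed ($T^*\le C$, $\Delta=k$) or censored ($C<T^*$). By Assumption \ref{ass:noninfocensoring}, I can condition on $C=c$ and integrate over the conditional law of $C$ at the end, treating $c$ as fixed in both terms.

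\emph{Observed term.} By Assumption \ref{ass:continous}, $t\mapsto F_{k/\infty}(t\mid\mathbf{x})$ is continuous and nondecreasing. Let $t_\rho$ be its generalized inverse at $\rho$. Then $\{F_{k/\infty}(T^*\mid\mathbf{x})\le\rho\}$ coincides with $\{T^*\le t_\rho\}$ up to null sets, by continuity. Hence
\[
\E\big(\mathbb{1}_{F_{k/\infty}(T^*\mid\mathbf{x})\le\rho,\ \Delta^*=k,\ T^*\le c}\big)=F_k(\min(t_\rho,c)\mid\mathbf{x}).
\]
This equals $\min\big(\rho F_k(\infty\mid\mathbf{x}),\,F_k(c\mid\mathbf{x})\big)$, using Assumption \ref{ass:nonnul} so that $F_k(\infty\mid\mathbf{x})>0$ and the ratio is well defined.

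\emph{Censored term.} The indicator $\mathbb{1}_{F_{k/\infty}(c\mid\mathbf{x})\le\rho}$ is deterministic given $c$. Moreover $\P(c<T^*\mid\mathbf{x})=S(c\mid\mathbf{x})$ cancels the denominator. The censored term therefore contributes $\big(\rho F_k(\infty\mid\mathbf{x})-F_k(c\mid\mathbf{x})\big)\mathbb{1}_{F_k(c\mid\mathbf{x})\le\rho F_k(\infty\mid\mathbf{x})}$.

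Adding the two contributions gives
\[
\min(\rho F_\infty,F_c)+(\rho F_\infty-F_c)^+=\rho F_k(\infty\mid\mathbf{x})
\]
for every $c$. Integrating over $C\mid\mathbf{x}$ then yields $\E(B^k_{[0,\rho]}\mid\mathbf{X})=\rho F_k(\infty\mid\mathbf{X})$, and subtracting gives the stated $(b-a)F_k(\infty\mid\mathbf{X})$. This is Lemma \ref{lemm:exporacle}.

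Taking the expectation over $\mathbf{X}$ gives $b^k_{[0,\rho]}=\rho\,\E F_k(\infty\mid\mathbf{X})/\E F_k(\infty\mid\mathbf{X})=\rho$. So every integrand in $D^{CR}_\alpha$ vanishes, and since $D^{CR}_\alpha\ge0$, the oracle attains the minimum. The step I expect to be delicate here is boundary cases, such as $S(c\mid\mathbf{x})=0$, flat pieces of $F_k$, and ties on the event $\{F_{k/\infty}=\rho\}$. I would handle these by noting they are null sets under continuity, and that censoring after $S$ hits $0$ has probability zero.

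For the asymptotic claim, let $\hat F^{(n)}_k\to F_k$ and $\hat S^{(n)}\to S$ in the marginally consistent sense, pointwise in $t$ in probability or almost surely. Then the integrands of both bucket terms converge pointwise, for almost every $(\mathbf{X},T,\Delta,C)$, to their oracle counterparts. For the indicators this uses that $\{F_{k/\infty}(T\mid\mathbf{X})=\rho\}$ is null, again by continuity.

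Domination is the main obstacle, because the censored weight has $\hat S(C\mid\mathbf{X})$ in its denominator. I would first try to bound that weight directly. Since $\hat F_k(\infty)\rho-\hat F_k(C)\le\hat F_k(\infty)-\hat F_k(C)\le\hat S(C)$ when the estimated CIFs sum coherently, the weight lies in $[0,1]$ on the indicator's support. This gives domination by $2$, and Lebesgue's dominated convergence theorem yields $\E B^{(n)}\to\rho\,\E F_k(\infty\mid\mathbf{X})$. The denominator $\E\hat F^{(n)}_k(\infty\mid\mathbf{X})$ converges similarly. A second application of dominated convergence in $\rho$ then gives $D^{CR}_\alpha\to0$.
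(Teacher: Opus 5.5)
\textbf{Gap in the asymptotic step.} Your oracle computation is correct: you condition on $C=c$ and cancel $S(c\mid\mathbf{x})$ to get $\min(\rho F_\infty,F_c)+(\rho F_\infty-F_c)^+=\rho F_\infty$, which is a clean version of Lemma \ref{lemm:exporacle}. The problem is the asymptotic step, and that step is the whole content of this corollary. The Aalen--Johansen estimator does not depend on $\mathbf{X}$. By Lemma \ref{lemm:ajconsistent} it converges to the population CIF $F_k(\tau)=\P(T^*\le\tau,\Delta^*=k)$, and Kaplan--Meier converges to $S(\tau)$. It does not converge to $F_k(\tau\mid\mathbf{x})$ or $S(\tau\mid\mathbf{x})$. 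So your claim that the bucket integrands ``converge to their oracle counterparts'' is false here. They converge to the bucket of the covariate-free model
\[
\mathbb{1}_{F_{k/\infty}(T)\le\rho,\,\Delta=k}+\frac{\rho F_k(\infty)-F_k(C)}{S(C)}\,\mathbb{1}_{C\le T^*,\,F_{k/\infty}(C)\le\rho},
\]
with marginal $F_k$ and $S$. Your cancellation was only established for the conditional CIFs. Only the denominator matches for free, since $F_k(\infty)=\E F_k(\infty\mid\mathbf{X})$. The conclusion $\E B^{(n)}\to\rho\,\E F_k(\infty\mid\mathbf{X})$ is therefore unproved for the actual limit.

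\textbf{How to close it.} You must show $\E B^k_{[0,\rho]}=\rho F_k(\infty)$ for this limit model. Rerunning your computation with $\mathbf{x}$ deleted requires $\P(T^*\le t,\Delta^*=k\mid C=c)=F_k(t)$ and $\P(T^*>c\mid C=c)=S(c)$. That is, it requires $C\indep(T^*,\Delta^*)$ unconditionally. This is exactly the ``bigger assumption $C\indep T^*$'' the paper adds in the proof of Theorem \ref{thm:bigth} before doing ``the same computations with the marginal estimator''. It is also what makes AJ consistent for the true marginal CIF in the first place. Under Assumption \ref{ass:noninfocensoring} alone, $\P(T^*>c\mid C=c)=\E(S(c\mid\mathbf{X})\mid C=c)$, which differs from $S(c)$ in general. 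The weight then does not cancel, and the true marginal CIF need not be CR D-calibrated. For instance, take two covariate groups with different CIFs, one censored immediately and one never censored.

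Alternatively, under regularity conditions, one can check that for a covariate-free model $(G_k)_k$ with $G_S=1-\sum_j G_j$, the identity $\E B^k_{[0,\rho]}=\rho\,G_k(\infty)$ is exactly the redistribute-to-the-right self-consistency equation, which the AJ limit solves. Either way, this step is missing from your proposal.

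\textbf{A smaller point.} Your denominator involves $\hat F_k(\infty)$, but Lemma \ref{lemm:ajconsistent} only gives consistency on $[0,M]$. You therefore need a tail or support condition to get $\hat F^{\rm AJ}_k(\infty)\to F_k(\infty)$.
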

\paragraph{Estimator of the CR D-calibration}
 Extending Definition \ref{def:dcal} to the empirical measure, for a given population and a given model, we define the estimation of the CR D-calibration on given dataset (named $\hat{D}^{CR}_{\alpha}$). 

\begin{definition}[CR $\hat{D}^{CR}_\alpha$-calibration]\label{def:dcalestimation}
    Thus, we define, for a fixed $ \rho \in [0, 1]$, and event type $k$:
    \begin{multline*}
         \hat{b}^k_{[0, \rho]} = \frac{{\rm card}\left\{i ~\middle|~ \delta_i=k, \hat{F}_{k/\infty}(t_i\mid \mathbf{x}_i)\in [0, \rho]\right\}}{ \sum_{i}\{\hat F_k(\infty\mid \mathbf{x}_i)\}} \\
         + \frac{
            {\displaystyle\sum_{i}} \left\{\frac{\hat{F}_k(\infty\mid \mathbf{x}_i)\rho - \hat{F}_k(t_i\mid \mathbf{x}_i)}{\hat{S}(t_i\mid \mathbf{x}_i)} ~\middle|\!\begin{array}{l}
                 \small\delta_i=0  \\
                 \small\hat{F}_{k/\infty}(t_i\mid \mathbf{x}_i)\in[0,\rho])
            \end{array}\!\! \right\}
        }{
            {\sum_{i}}\{\hat F_k(\infty\mid \mathbf{x}_i)\}
        }
    \end{multline*}
    Where the first term counts individuals who experienced event $k$ and whose normalized prediction falls in $[0,\rho]$, scaled by the total predicted risk. The second term accounts for censored patients,  estimating their contribution. 
    Then, we can define an estimator of the CR $D^{CR}_\alpha$-calibration as: 
     \begin{flalign*}
        \hat{D}^{CR}_\alpha\text{-cal estimator} && \!\!\hat{D}^{CR}_\alpha \defeq \sum_{k=1}^K \! \left(\int_0^1\!\left| \hat{b}^k_{[0, \rho]} - \rho \right|^\alpha\! d\rho\right)^\frac{1}{\alpha}
\end{flalign*}
\end{definition}

\begin{proposition}[Consistency of the estimator]\label{prop:consisD}
    As the sample size ($m$) of the calibration set $\D_{cal}$ tends toward infinity (\emph{i.e.} $m \to \infty$), the estimator CR $\hat{D}$-calibration (Def. \ref{def:dcalestimation}) converges almost surely toward CR $D$-calibration (Def. \ref{def:dcal}), making it a consistent estimator of the CR D-calibration.  
\end{proposition}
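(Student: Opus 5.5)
The plan is to fix the estimator $\hat F$, which is trained on $\D_{train}$ and independent of $\D_{cal}$. Treating it as deterministic, I would show pointwise almost-sure convergence of $\hat b^k_{[0,\rho]}$ to $b^k_{[0,\rho]}$ via the strong law of large numbers, upgrade this to uniform convergence in $\rho$, and then pass to the limit inside the $L^\alpha$ norm and the finite sum over $k$.

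\textbf{Step 1 (pointwise SLLN).} Divide numerator and denominator of $\hat b^k_{[0,\rho]}$ by $m$. The denominator $\frac1m\sum_i \hat F_k(\infty\mid\mathbf{x}_i)$ converges a.s. to $\E_X(\hat F_k(\infty\mid X))$ by Assumption~\ref{ass:exchangeability}. This limit is positive: this holds for the oracle by Assumption~\ref{ass:nonnul}, and we require it of $\hat F$, which is harmless since otherwise $D^{CR}_\alpha$ is undefined. The numerator is $\frac1m\sum_i B^k_{[0,\rho]}(\mathbf{x}_i,t_i,\delta_i)$, where on a censored observation $T=C$ and $\Delta=0$ match the indicator $\mathbb{1}_{C\le T^*}$. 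Each summand is bounded, since $|\hat F_k(\infty)\rho-\hat F_k(t)|\le \hat F_k(\infty)-\hat F_k(t)\le \hat S(t)$ on the event $\hat F_{k/\infty}(t)\le\rho$. This bound holds whenever $\hat F$ satisfies Assumption~\ref{ass:dieatinfty}, so that $\hat S=\sum_k(\hat F_k(\infty)-\hat F_k)$. Hence the summands are integrable, the SLLN applies, and the ratio converges a.s. to $b^k_{[0,\rho]}$ for each fixed $\rho$.

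\textbf{Step 2 (uniformity in $\rho$).} This is the main obstacle, since a countable union of null sets only yields convergence on a countable set of $\rho$'s. I would run a Glivenko--Cantelli argument. The numerator splits into three parts:
\begin{itemize}
\item an empirical CDF of the variables $\hat F_{k/\infty}(t_i\mid\mathbf{x}_i)$ on $\{\delta_i=k\}$;
\item $\rho$ times an empirical weighted CDF with bounded weights $\hat F_k(\infty)/\hat S$ on $\{\delta_i=0\}$;
\item an empirical weighted CDF with bounded weights $-\hat F_k(t_i)/\hat S(t_i)$ on $\{\delta_i=0\}$.
\end{itemize}
Each is monotone in $\rho$, or a difference of monotone functions after splitting the weights into positive and negative parts. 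The weighted Glivenko--Cantelli theorem therefore gives a.s. uniform convergence on $[0,1]$. The proof goes through rationals plus monotonicity, treating possible atoms of the limit by also controlling left limits. Thus $\sup_\rho|\hat b^k_{[0,\rho]}-b^k_{[0,\rho]}|\to0$ a.s.

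\textbf{Step 3 (conclusion).} The map $g\mapsto(\int_0^1|g(\rho)-\rho|^\alpha d\rho)^{1/\alpha}$ is a norm distance. It is therefore $1$-Lipschitz with respect to the $L^\alpha$ norm, which is dominated by the sup norm on $[0,1]$. So each of the $K$ terms converges a.s. Summing finitely many terms gives $\hat D^{CR}_\alpha\to D^{CR}_\alpha$ a.s.

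As a fallback to Step 2, one can avoid uniformity. Pointwise a.s. convergence for almost every $\rho$ follows from Fubini on the product of the probability space and $[0,1]$. The integrands are uniformly bounded, because the summands are bounded and the denominator is eventually bounded below. Lebesgue's dominated convergence theorem then passes the limit under $\int d\rho$.
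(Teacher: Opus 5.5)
Your proposal is correct and follows the same route as the paper, whose proof is only a one-line appeal to the strong law of large numbers; your boundedness check, the positivity of the denominator, and the passage to the limit under $\int_0^1 d\rho$ (via uniformity or via your Fubini--dominated-convergence fallback) supply details the paper omits. One slip in Step 2: the separate weights $\hat F_k(\infty\mid\mathbf{x})/\hat S(t\mid\mathbf{x})$ and $\hat F_k(t\mid\mathbf{x})/\hat S(t\mid\mathbf{x})$ are not bounded (they blow up as $\hat S(t\mid\mathbf{x})\to 0$ and need not even be integrable), so you should not split the censored term. That term equals $\frac{\hat F_k(\infty\mid\mathbf{x})}{\hat S(t\mid\mathbf{x})}\bigl(\rho-\hat F_{k/\infty}(t\mid\mathbf{x})\bigr)_+$, which is continuous, nondecreasing in $\rho$ and bounded by $1$, so the monotone Glivenko--Cantelli argument, or your fallback, applies to it directly.
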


\begin{proof}[Proof Sketch]
    The proof relies primarily on the strong law of large numbers, which ensures that the consistent estimator converges almost surely to its expected value, the CR-D-calibration.
\end{proof}

\begin{algorithm}[H]
\begin{algorithmic}
\Require $\{(t_i, \delta_i, \mathbf{x}_i)\}_{i=1}^n$, $\{\hat{F}_k\}_{k =1}^K$, $\hat{S}$, $\alpha$, $M$.

\State $\hat{D}^{CR}_\alpha \gets 0$

\For{$k \gets 1$ to $K$}
    \State $w_{total} \leftarrow \sum_{i=1}^N \hat{F}_k(\infty \mid \mathbf{x}_i)$ \textcolor{teal}{\# Denominator}
    
    \State $I_k \leftarrow 0$ \textcolor{teal}{\# Initialize integral}

    \For{$\rho \gets 1/M$ to $1$ with step $1/M$} 
        
        \State$Count_{k} \gets 0$
        \State $Sum_{cens} \gets 0$
        
        \For{patient $i \gets 1$ to $N$}
            \State \textcolor{teal}{\# Compute normalized prediction}
            \State $\hat{F}_{k/\infty} \gets \hat{F}_k(t_i \mid \mathbf{x}_i) / \hat{F}_k(\infty \mid \mathbf{x}_i)$

            \If{$\hat{F}_{k/\infty} \in [0, \rho]$}
                \If{$\delta_i = k$}
                    \State $Count_{k} \gets Count_{k} + 1$
                \EndIf
                \If{$\delta_i = 0$} 
                    \State \textcolor{teal}{\# Adjust for censored samples}
                    \State $cens \gets \frac{\left(\hat{F}_k(\infty \mid \mathbf{x}_i) \cdot \rho - \hat{F}_k(t_i \mid \mathbf{x}_i)\right)}{\hat{S}(t_i \mid \mathbf{x}_i)} $
                    \State $Sum_{cens} \gets Sum_{cens} + cens$
                    
                \EndIf
            \EndIf
        \EndFor
        \State \textcolor{teal}{\# Calculate bias for current $\rho$}
        \State $\hat{b}^k_{[0, \rho]} \gets \frac{Count_{k} + Sum_{cens}}{w_{total}}$
        
        \State $I_k \gets I_k +  \frac{\left| \hat{b}^k_{[0, \rho]} - \rho \right|^\alpha}{M}$
    \EndFor
    \State $\hat{D}^{CR}_\alpha \gets \hat{D}^{CR}_\alpha + (I_k)^{1/\alpha}$
\EndFor
\State \Return{$\hat{D}^{CR}_\alpha $}
\end{algorithmic}

\caption{Computation of $\hat{D}^{CR}_\alpha$-Calibration}
\end{algorithm}

\subsection{${\rm cal}_{K}^\alpha$-calibration, another calibration metric using marginal probabilities}

While CR D-calibration focuses on the properties of the cumulative distribution function, we can define calibration in another way. As illustrated in Figure \ref{fig:meanincfunc}, a desirable property of a model is to have globally faithful probabilities (\emph{i.e.}, $\forall \tau \in \mathbb{R}_+, \, \forall k\in \llbracket 1, K \rrbracket, \, \hat{F}_k(\tau) \approx F_k(\tau)$). In other words, the predicted cumulative incidence functions $\hat{F}_k(\tau)$ should approximate the oracle marginal function $F_k(\tau)$. The following can be seen as the extension of the 1-calibration \citep{haider_eective_nodate} in survival analysis to the multi-class setting. Given the presence of multiple competing events, we adopt the terminology of multiclass calibration \citep[as defined in ][]{perez-lebel_beyond_2023}. We introduce three notions of calibration in the competing-risks setting, focusing on the marginal probabilities, ordered from the weakest to the strongest one:

\begin{definition}[${\rm cal}_{K}^\alpha$-calibration(s)]\label{def:kcalib} 
    Given a time $\tau$ and an event $k$, we introduce the ${\rm cal}_{k}(\tau)$ as:
    \begin{flalign*}
        \text{\footnotesize{${\rm cal}_{k}(\tau)$-calibration:}} && {\rm cal}_{k}(\tau) = \left|F_k(\tau) - \mathbb{E}_{\mathbf{X}}(\hat{F}_k(\tau|\mathbf{X}))\right| && 
    \end{flalign*}
    With the previous definition, we define the ${\rm cal}_{k}^\alpha$-calibration, that coincides with classwise-calibration in multiclass settings \citep{zadrozny_transforming_2002}:
    \begin{flalign*}
        \text{\parbox{\linewidth}{\footnotesize{${\rm cal}_{k}^\alpha$-calibration: (Classwise-Calibration)}}}\\[-.1em]  {\rm cal}_{k}^\alpha = \left(\int_{0}^{t_{max}} {\rm cal}_{k}(\tau)^\alpha \d\tau \right)^{1/\alpha}
    \end{flalign*}
    Extending the sum over the different events, we obtain a jointly calibrated measure \citep{kull_beyond_2019, vaicenavicius_evaluating_2019}: 
    \begin{flalign*}
        \text{\footnotesize{${\rm cal}_{K}^\alpha$-calibration: (Jointly Calibration)}} && {\rm cal}_{K}^\alpha = \sum_{k=1}^K {\rm cal}_{k}^\alpha &&
    \end{flalign*}
\end{definition}
A model is perfectly ${\rm cal}_{K}^\alpha$-calibrated (resp. ${\rm cal}_{k}^\alpha$-calibrated, ${\rm cal}_{k}(\tau)$-calibrated) if ${\rm cal}_{K}^\alpha = 0$ (resp. ${\rm cal}_{k}^\alpha = 0$, \mbox{${\rm cal}_{k}(\tau)=0$}).
Assessing the calibration should be aligned with the specific research question. When the focus is on understanding the behavior of the model at a specific time point $\tau$, the ${\rm cal}_{k}(\tau)$-calibration will be most useful. To evaluate the calibration of the model for a particular, potentially rare, risk event over time, the ${\rm cal}_{k}^\alpha$-calibration offers a better perspective. For a global assessment of a model's calibration, aggregating all available information, the ${\rm cal}_{K}^\alpha$-calibration is the appropriate metric. Each of these distinct calibration measures provides a specialized viewpoint, but in the following part of this paper, we will then focus on the ${\rm cal}_{K}^\alpha$-calibration. The choice of focusing on the ${\rm cal}_{K}^\alpha$-calibration can be justified by the fact that if a model is ${\rm cal}_{K}^\alpha$-calibrated, so it will be ${\rm cal}_{k}^\alpha$-calibrated for each event of event and ${\rm cal}_{k}(\tau)$-calibrated for any fixed $\tau$.

By definition, the oracle function is calibrated according to the different distributions. \\
Any consistent marginal estimator is also asymptotically calibrated according to all definitions \ref{def:dcal}. 

\paragraph{Plug-in (PI) estimators for the ${\rm cal}_{K}^\alpha$-calibration}
From a population point of view, these calibration quantities can be estimated using any marginally consistent estimator. To remain general, we present the definitions using an abstract “plug-in” estimator, but this plug-in estimator can be implemented as the \citet{aalen_empirical_1978} estimator in practice (see Appendix~\ref{def:aj} for its formal definition).
For each event type $k$, we denote the chosen marginally consistent estimator (PI) of cumulative distribution function by~$\hat{F}^k_{\rm PI}$.

Our plug-in ${\rm cal}_{k}^\alpha$-calibration estimator simply takes such an arbitrary marginal estimator $\hat{F}^k_{\rm PI}$ (e.g., Aalen–Johansen) and plugs it into the definition of ${\rm cal}_{K}^\alpha$-calibration to obtain an empirical estimate of the calibration error. Intuitively, this is natural because, under mild assumptions, $\hat{F}^k_{\rm PI}$ is a consistent estimator of the true marginal CIFs; hence, if a model is well ${\rm cal}_{K}^\alpha$-calibrated, its predicted marginal CIFs should match $\hat{F}^k_{\rm PI}$ in the limit, which is exactly what the plug-in estimator tests.

\begin{definition}[Plug-in calibrations]\label{def:ajcal}
    The ${\rm cal}_{k}(\tau)$ calibration can be approximated by:
    \begin{flalign*}
        \text{\parbox{\linewidth}{\footnotesize{Plug-in-${\rm cal}_{k}(\tau)$ calibration:}}}\\[-.1em] {{\rm PI}_k(\tau)} = \left| \hat F^k_{{\rm PI}(\D_{cal})}(\tau) - \frac{1}{|\D_{cal}|}\sum_{\mathbf{x}_i \in \mathcal{X}_{cal}}(\hat{F}_k(\tau|\mathbf{x}_i))\right|
    \end{flalign*}%
    \begin{flalign*}
        \text{\footnotesize{Plug-in-$k^\alpha$ calibration:}} && {{\rm PI}^\alpha_k} = \left(\int_0^{t_{max}} {\rm PI}_k(\tau)^\alpha \d \tau \right)^{1/\alpha} &&
    \end{flalign*}
    Where in practice, we compute the integral by a Riemman 
    In the same vein, we can introduce the approximation of the joint calibration as: 
    \begin{flalign*}
        \text{\footnotesize{Plug-in-${\rm cal}_{K}^\alpha$-calibration:}} &&  {{\rm PI}^\alpha_{cal}} = \sum_{k=1}^K  {{\rm PI}^\alpha_k} && 
    \end{flalign*}
\end{definition}
We obtain an equivalent proposition as \ref{prop:consisD} here with the PI-calibrations:
\begin{proposition}[Consistency of the PI-${\rm cal}_{K}^\alpha$-calibration]
    As the sample size ($m$) of the calibration set $\D_{cal}$ goes to infinity, for a fixed $\alpha$ the Plug-in-${\rm cal}_{K}^\alpha$-calibration (Def. \ref{def:ajcal}) ${\rm PI}_{cal}^\alpha$ converges almost surely toward ${\rm cal}_{K}^\alpha$-calibration (Def. \ref{def:kcalib}) ${\rm cal}_{K}^\alpha$, making it a consistent estimator of the ${\rm cal}_{K}^\alpha$-calibration. 
\end{proposition}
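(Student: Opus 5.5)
The plan is to reduce the statement to uniform-in-time convergence of two empirical quantities, and then pass to the limit in the integral over $[0,t_{max}]$ by dominated convergence. Throughout, the fitted model $\hat F_k$ is fixed: it is trained on $\D_{train}$ and is independent of $\D_{cal}$. For each event $k$ and time $\tau$, write
\[
\bar F_k^m(\tau) \defeq \frac{1}{m}\sum_{\mathbf{x}_i\in\mathcal{X}_{cal}} \hat F_k(\tau\mid\mathbf{x}_i),
\]
and decompose
\[
\big|{\rm PI}_k(\tau) - {\rm cal}_k(\tau)\big| \le \big|\hat F^k_{{\rm PI}(\D_{cal})}(\tau) - F_k(\tau)\big| + \big|\bar F_k^m(\tau) - \E_{\mathbf{X}}\hat F_k(\tau\mid\mathbf{X})\big| .
\]
This follows from the reverse triangle inequality $\big||a|-|b|\big|\le|a-b|$.

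\textbf{Step 1 (plug-in term).} The first term vanishes uniformly in $\tau\in[0,t_{max}]$ almost surely, by the assumed consistency of the marginal estimator. For Aalen--Johansen this is Lemma \ref{lemm:ajconsistent}. If that lemma only gives pointwise almost-sure convergence, I will upgrade it to uniform convergence with a Pólya/Glivenko--Cantelli argument. Both $\hat F^k_{\rm PI}$ and $F_k$ are nondecreasing, and $F_k$ is continuous by Assumption \ref{ass:continous}. So pointwise almost-sure convergence on a countable dense set of times (a countable intersection of probability-one events) implies uniform convergence on the compact interval $[0,t_{max}]$.

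\textbf{Step 2 (model-average term).} The second term is an average of i.i.d. variables (Assumption \ref{ass:exchangeability}) that are bounded in $[0,1]$. The strong law of large numbers therefore gives almost-sure convergence for each fixed $\tau$. To make this uniform I will again use monotonicity. Each map $\tau\mapsto \hat F_k(\tau\mid\mathbf{x}_i)$ is a CIF, hence nondecreasing in $\tau$, so $\bar F_k^m$ and its limit $\E_{\mathbf{X}}\hat F_k(\tau\mid\mathbf{X})$ are nondecreasing as well. Applying the Glivenko--Cantelli bracketing argument on a finite grid then yields $\sup_{\tau}|\bar F_k^m(\tau)-\E_{\mathbf{X}}\hat F_k(\tau\mid\mathbf{X})|\to 0$ almost surely.

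\textbf{Main obstacle.} This bracketing step is where I expect the difficulty, because it requires the limit $\tau\mapsto\E_{\mathbf{X}}\hat F_k(\tau\mid\mathbf{X})$ to be continuous (or jumps to be handled separately). If the model's CIFs have atoms, I would use the standard Glivenko--Cantelli construction, which brackets with both left and right limits. Alternatively, I would avoid uniformity altogether and conclude only from pointwise almost-sure convergence for Lebesgue-almost every $\tau$ (Fubini on the product of the probability space with $[0,t_{max}]$), which suffices for the integral.

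\textbf{Conclusion.} Since ${\rm PI}_k(\tau)$ and ${\rm cal}_k(\tau)$ both lie in $[0,1]$, the integrand $|{\rm PI}_k(\tau)|^\alpha$ is dominated by $1$ on the finite interval $[0,t_{max}]$. Dominated convergence (or directly the uniform bound) gives $\int_0^{t_{max}}{\rm PI}_k(\tau)^\alpha d\tau\to\int_0^{t_{max}}{\rm cal}_k(\tau)^\alpha d\tau$ almost surely. The map $x\mapsto x^{1/\alpha}$ is continuous, so ${\rm PI}^\alpha_k\to{\rm cal}^\alpha_k$ almost surely. Summing over the finitely many events $k\le K$ and intersecting the $K$ probability-one events gives ${\rm PI}^\alpha_{cal}\to{\rm cal}^\alpha_K$ almost surely.

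If ${\rm PI}_k^\alpha$ is in practice computed with a Riemann sum, I would add that the Riemann sum of a bounded monotone-difference integrand converges as the mesh goes to zero, and let the mesh shrink with $m$.
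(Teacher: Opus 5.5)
Your argument follows the paper's route, made explicit: the paper's entire proof is the sentence that the result follows from the strong law of large numbers. Most of it is sound:
\begin{itemize}
\item the reverse-triangle decomposition;
\item the SLLN for $\bar F_k^m(\tau)$, which is valid because $\hat F_k$ is fitted on $\D_{train}$, independent of $\D_{cal}$;
\item the monotone-bracketing or Fubini treatment of the dependence on $\tau$;
\item dominated convergence with the bound $1$, and continuity of $x\mapsto x^{1/\alpha}$.
\end{itemize}
The gap is Step~1. This is exactly where the \emph{almost sure} part of the statement has to come from, because the SLLN only controls the model-average term. You attribute almost-sure convergence of $\hat F^k_{{\rm PI}(\D_{cal})}(\tau)$ to Lemma~\ref{lemm:ajconsistent}. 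That lemma rests on uniform consistency \emph{in probability} of Kaplan--Meier, so it only gives convergence in probability, and only on $[0,M]$ under a positivity condition. Your P\'olya argument upgrades pointwise convergence to uniform convergence, but it cannot upgrade the mode of convergence. As written, you have proved ${\rm PI}^\alpha_{cal}\to{\rm cal}^\alpha_K$ in probability, not almost surely.

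There are two ways to close this.

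\emph{Weaken the conclusion to convergence in probability.} Your argument then goes through unchanged. P\'olya on a finite grid gives $\sup_{\tau\le t_{max}}|\hat F^k_{\rm PI}(\tau)-F_k(\tau)|\to0$ in probability. For $\alpha\ge1$, Minkowski's inequality gives $|{\rm PI}^\alpha_k-{\rm cal}^\alpha_k|\le t_{max}^{1/\alpha}\sup_{\tau\le t_{max}}|{\rm PI}_k(\tau)-{\rm cal}_k(\tau)|$, which finishes it.

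\emph{Keep almost-sure convergence.} Then make strong uniform consistency of the plug-in an explicit hypothesis, and for Aalen--Johansen actually supply it. This is presumably what the paper's appeal to the strong law is meant to cover. One route:
\begin{itemize}
\item apply the almost-sure Glivenko--Cantelli theorem to the empirical sub-distributions $\frac1m\sum_i\mathbb{1}\{t_i\le\tau,\delta_i=j\}$ and to the at-risk proportion $\frac1m\sum_i\mathbb{1}\{t_i\ge\tau\}$;
\item combine this with sup-norm continuity of the Aalen--Johansen map on $[0,t_{max}]$.
\end{itemize}
This route needs $\P(T\ge t_{max})>0$, so that the risk set does not empty before $t_{max}$.

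It also identifies the limit as the marginal $F_k$ only when censoring is independent of $(T^*,\Delta^*)$. Under Assumption~\ref{ass:noninfocensoring} alone ($T^*\indep C\mid\mathbf X$), the covariate-free Aalen--Johansen estimator is in general biased for $F_k$.

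Neither your proof nor the paper's states these two conditions, and both should be added as hypotheses.
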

\begin{proof}
    The proof is straightforward using the strong law of large numbers.
\end{proof}

Using the \citet{aalen_empirical_1978} estimator as the plug-in estimator, we define the \textit{AJ-$k$-calibration} for each event $k$ and the \textit{AJ-${\rm cal}_{K}^\alpha$-calibration}.

\paragraph{Plug-in-${\rm cal}_{K}^\alpha$-calibration vs CR D-calibration}
These calibrations differ in the sense that the ${\rm cal}_{K}^\alpha$-calibration (Def. \ref{def:ajcal}) compares the marginal probabilities in time and the CR D-calibration (Def. \ref{def:dcal}) assesses whether the estimator returns well-behaved cumulative distribution functions. Nonetheless, we can derive several asymptotic results that relate both calibrations and D-calibration \citep[ and Appendix \ref{def:dcalsurvival}]{haider_eective_nodate}. Appendix \ref{fig:linkcals} gives an explanatory diagram with these relationships.

\begin{restatable}[Equivalence between calibrations]{theorem}{kcaldcal}\label{thm:kcaldcal}
    If a model is ${\rm cal}_{K}^\alpha$-calibrated, then it is CR D-calibrated. \\
    Asymptotically, if a model is ${\rm cal}_{K}^\alpha$-calibrated, it implies that it is CR D-calibrated. \\
    Under the assumption that each $\hat{F}_k$ is strictly increasing and continuous, a model that is CR D-calibrated is also ${\rm cal}_{K}^\alpha$-calibrated. \\
    Additionally, if a model is CR D-calibrated, it is also calibrated according to the calibration plots. 
\end{restatable}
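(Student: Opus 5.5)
The plan is to compute the bucket mass $b^k_{[0,\rho]}$ for an arbitrary model and express it through marginal quantities. Then both directions reduce to comparing $F_k$ with $\bar F_k(\tau)\defeq\E_{\mathbf{X}}(\hat F_k(\tau\mid\mathbf{X}))$.

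\textbf{Step 1: rewrite the bucket.} For a fixed model $\hat F$, I would write $\tau_\rho(\mathbf{x})$ for the (generalized) inverse of $t\mapsto \hat F_{k/\infty}(t\mid\mathbf{x})$ at level $\rho$. The event $\{\hat F_{k/\infty}(T\mid\mathbf{X})\in[0,\rho]\}$ then becomes $\{T\le\tau_\rho(\mathbf{X})\}$.

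\begin{itemize}[leftmargin=1.7ex]
\item \emph{Observed term.} Conditioning on $\mathbf{X}$ and using Assumption~\ref{ass:noninfocensoring}, its expectation is $\int_0^{\tau_\rho(\mathbf{X})} G(t\mid\mathbf{X})\,dF_k(t\mid\mathbf{X})$. Here $G$ is the censoring survival function and $F_k$ is the true conditional CIF.
\item \emph{Censored term.} Its expectation is an integral against the censoring law of $\frac{\hat F_k(\infty)\rho-\hat F_k(c)}{\hat S(c)}\,S(c)\,\mathbb{1}_{c\le\tau_\rho}$.
\end{itemize}

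This mirrors the computation behind Lemma~\ref{lemm:exporacle} (proof of Theorem~\ref{thm:bigth}). There, with $\hat F=F$, integration by parts makes the two terms telescope to $\rho F_k(\infty\mid\mathbf{X})$. I would redo that integration by parts for general $\hat F$. The quantity I aim for is an identity of the form
\[
b^k_{[0,\rho]}-\rho=\frac{\E_{\mathbf{X}}\bigl[\Phi_k(\tau_\rho(\mathbf{X}),\mathbf{X})\bigr]}{\E_{\mathbf{X}}(\hat F_k(\infty\mid\mathbf{X}))},
\]
where $\Phi_k$ is built from the discrepancy $F_k(t\mid\mathbf{x})-\hat F_k(t\mid\mathbf{x})$ (and the analogous survival discrepancy) weighted by the censoring distribution.

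\textbf{Step 2: ${\rm cal}_K^\alpha$ implies CR D-calibration.} If ${\rm cal}_K^\alpha=0$, then $\bar F_k(\tau)=F_k(\tau)$ for almost every $\tau$, hence for every $\tau$ by continuity (Assumption~\ref{ass:continous}). Summing over $k$ also gives $\E_{\mathbf{X}}\hat S(\tau\mid\mathbf{X})=S(\tau)$.

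I would use these marginal equalities inside the expectation over $\mathbf{X}$ in Step 1, so that $\E_{\mathbf{X}}\Phi_k=0$ and therefore $b^k_{[0,\rho]}=\rho$ for every $\rho$. I would also note that the normalizer matches: $\E_{\mathbf{X}}\hat F_k(\infty)=F_k(\infty)$.

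The asymptotic version follows with the dominated-convergence argument already used for the marginally consistent estimator in Theorem~\ref{thm:bigth}: ${\rm cal}_K^\alpha\to0$ gives $b^k_{[0,\rho]}\to\rho$ uniformly in $\rho$, since everything is bounded by 1.

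\textbf{Step 3: the converse.} Assume each $\hat F_k$ is strictly increasing and continuous, so that $\tau_\rho$ is a genuine bijection between $\rho\in[0,1]$ and time. CR D-calibration gives $b^k_{[0,\rho]}=\rho$ for all $\rho$. I would differentiate in $\rho$ (equivalently, change variables $\rho\mapsto\tau$) to recover $\bar F_k(\tau)=F_k(\tau)$ pointwise, and hence ${\rm cal}_k(\tau)=0$ for all $\tau$ and ${\rm cal}_K^\alpha=0$.

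\textbf{Step 4: calibration plots.} A calibration plot at time $\tau$ compares, per event, predicted and observed incidence (estimated via pseudo-observations or Aalen–Johansen). The marginal equalities obtained in Step 3, applied per $\tau$ and per $k$, give agreement of these quantities.

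\textbf{Main obstacle.} I expect Step 3 to be the hardest, and I would attack it first. The bucket level $\rho$ corresponds to a \emph{different} time $\tau_\rho(\mathbf{x})$ for each individual, so $b^k_{[0,\rho]}=\rho$ is an average over individual-dependent times. It does not directly give $\bar F_k(\tau)=F_k(\tau)$ at a fixed $\tau$.

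My first attempt would be to work with the layer-cake form and use strict monotonicity to get a change of variables. Failing that, I would restrict to models whose normalized CIFs share a common time reparametrization, i.e.\ $\tau_\rho$ independent of $\mathbf{x}$, which is plausibly what the paper intends. In that case the identity becomes pointwise in $\tau$ immediately.
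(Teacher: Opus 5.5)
Your proposal has a genuine gap in Step~2, not only in Step~3. In your Step~1 identity, $\Phi_k$ is evaluated at the individual time $\tau_\rho(\mathbf{X})$. It is also built from the \emph{conditional} discrepancy $F_k(t\mid\mathbf{X})-\hat F_k(t\mid\mathbf{X})$, with $\mathbf{X}$-dependent censoring weights. The fixed-time marginal equalities $\mathbb{E}_{\mathbf{X}}\hat F_k(\tau\mid\mathbf{X})=F_k(\tau)$ therefore give no control over $\mathbb{E}_{\mathbf{X}}\Phi_k(\tau_\rho(\mathbf{X}),\mathbf{X})$. The obstacle you describe for Step~3 is symmetric and breaks Step~2 equally, and it is not a technicality. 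Take $K=2$, no censoring, and $\mathbf{X}$ uniform on $\{0,1\}$, $T^*\sim\mathcal{U}(0,1)$, $\Delta^*$ uniform on $\{1,2\}$, all independent, so $F_k(t\mid\mathbf{x})=t/2$ on $[0,1]$. The model $\hat F_k(t\mid 0)=t^2/2$, $\hat F_k(t\mid 1)=(2t-t^2)/2$ has $\mathbb{E}_{\mathbf{X}}\hat F_k(t\mid\mathbf{X})=t/2$, hence ${\rm cal}_K^\alpha=0$. Yet $b^k_{[0,\rho]}=\tfrac12\bigl(\sqrt{\rho}+1-\sqrt{1-\rho}\bigr)\approx 0.317$ at $\rho=1/4$. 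Conversely, take $\hat F_k(t\mid 0)=\sqrt{t}/2$, $\hat F_k(t\mid 1)=(1-\sqrt{1-t})/2$, which are strictly increasing and continuous on the support. This gives $b^k_{[0,\rho]}=\tfrac12\bigl(\rho^2+2\rho-\rho^2\bigr)=\rho$, while $\mathbb{E}_{\mathbf{X}}\hat F_k(1/4\mid\mathbf{X})\approx 0.158\neq 1/8$. So for covariate-dependent models neither implication holds. A structural restriction has to be imposed in \emph{both} directions, not just as a fallback for Step~3.

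That restriction is what the paper's proof uses, implicitly. Its appendix argument treats $\hat F^n_{k/\infty}$ and $\hat F^n_k(\infty)$ as covariate-free. It takes $\rho=1$ to get $\hat F^n_k(\infty)\to\mathbb{P}(\Delta^*=k)$. It then uses strict monotonicity to rewrite CR D-calibration as $F_{k/\infty}\circ(\hat F^n_{k/\infty})^{-1}(\rho)\to\rho$ and inverts via Dini's theorem. The forward direction is delegated to Theorem~\ref{thm:bigth} for marginal estimators. Your fallback ($\tau_\rho$ independent of $\mathbf{x}$) contains this setting. If you adopt it from the start, then without censoring your Step~1 identity reads $b^k_{[0,\rho]}-\rho=\bigl(F_k(\tau_\rho)-\bar F_k(\tau_\rho)\bigr)/\bar F_k(\infty)$. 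This gives both directions at once; the converse follows because strict monotonicity and continuity make $\rho\mapsto\tau_\rho$ onto. The asymptotic versions still need a uniformity argument.

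With censoring, however, the conclusion is not ``immediate'' even in this setting. Given $C=c\le\tau_\rho$ (and $\mathbf{X}$), the censored term contributes $\frac{\hat F_k(\tau_\rho)-\hat F_k(c)}{\hat S(c)}S^*(c)$ in expectation, rather than $F_k(\tau_\rho)-F_k(c)$. So once $\hat F\neq F$, $\mathbb{E}(B^k_{[0,\rho]})$ in general differs from $\mathbb{P}(\hat F_{k/\infty}(T^*)\le\rho,\Delta^*=k)$. The paper asserts this equality by analogy with Lemma~\ref{lemm:exporacle}, and the converse needs a separate argument. Two smaller points also need care. First, $\mathbb{E}_{\mathbf{X}}\hat F_k(\infty\mid\mathbf{X})=F_k(\infty)$ requires calibration beyond $t_{\max}$. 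Second, $\mathbb{E}_{\mathbf{X}}\hat S=S$ requires $\hat S=1-\sum_k\hat F_k$, which not all models satisfy.
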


\begin{proof}[Proof]
    Using the idea that when a model is ${\rm cal}_{K}^\alpha$-calibrated, the estimated marginal probabilities are equal to the oracle ones and that a consistent marginal estimator is marginally calibrated (Theorem~\ref{thm:bigth}), we obtain that if a model is ${\rm cal}_{K}^\alpha$-calibrated, it implies that the model will be D-calibrated. \\
    For the other way, see proof in Appendix \ref{proof:links}.
\end{proof}

\paragraph{Choice of $\alpha$:} We use $\alpha=2$ as the standard norm, but acknowledge that a more suitable value for $\alpha$ may exist depending on the specific use-case and could be found through further experimentation.

\paragraph{Test if a model is calibrated} 
While the associated metrics give a comprehensive evaluation, a statistical test can conclude on whether observed decalibration is explained by sampling noise (limited size of calibration set).
The presented CR-D-Calibration and PI-${\rm cal}_{K}^\alpha$-calibration can be tested by applying a \citet{kolmogorov}-\citet{smirnov} (KS) test (with multiplicity adjustment) to check for the uniform distribution or to compare the \citet{aalen_empirical_1978} estimator with predictions, respectively (details in Appendix \ref{sec:tests}).

\section{HOW TO RECALIBRATE A MODEL?}
Having defined and estimated two calibration measures for competing‑risks models, we now ask: How can we recalibrate these models while preserving desirable properties?

\subsection{Formulating recalibration as a minimization problem}\label{sec:recalibration_problem}
To recalibrate a competing-risks model $(\hat{F}_k)_k$, typically trained on a dataset $\D_{\text{train}}$, on a calibration set $\D_{\text{cal}} = \{(\mathbf{x}_i, t_i, \delta_i)\}_i$, a classical post-hoc procedure consists in learning a transformation of the model outputs that minimizes a calibration loss on $\D_{\text{cal}}$.

Formally, let $\ell$ be a loss function that takes as first argument the predicted CIFs on the calibration set, $(\hat{F}_k(\cdot \mid \mathbf{x}_i))_{i,k}$, and as second argument the corresponding observed outcomes, $(t_i, \delta_i)_i$. In what follows, we take $\ell$ to be one of the calibration metrics introduced in Section~3. Let $\mathcal{G}$ be a family of recalibration maps $g$ acting on the predicted CIFs. For any $g \in \mathcal{G}$, we define the recalibrated model by
\[
\hat F^{(g)}_k(\tau \mid \mathbf{x})\overset{\rm def}{=} g\bigl(\tau, k, \hat F_k(\tau \mid \mathbf{x})\bigr),
\]
and we then choose
\[
g^* \overset{\rm def}{=} \arg\min_{g \in \mathcal{G}}~\ell\Bigl((\hat F^{(g)}_k(\cdot \mid \mathbf{x}_i))_{i,k},~(t_i,\delta_i)_i\Bigr).
\]
The final calibrated estimator is $\hat F^{\rm cal}_k \overset{\rm def}{=} \hat F^{(g^*)}_k$.

\subsection{Recalibration with asymptotic conformal intervals (AJ-recalibration)}
We introduce \emph{AJ-recalibration}: a post-hoc recalibration method using the AJ-k calibration (Def. \ref{def:ajcal}) as the function $\ell$ as detailed below. 
After training a model on $\D_{train}$, for each event $k$, we compute the AJ-$k$ calibration at different times $\tau$ on $\D_{cal}$. At calibration time, the predictions are recalibrated simply by computing the AJ-$k$ calibrations at different times so that $g$ is simply $g\big(\tau, k, \hat F_k(\tau\mid x)\big) = \hat F_k(\tau\mid x) - {\rm AJ}_{k}(\tau)$ (taking \citet{aalen_empirical_1978} as the PI estimator in Def.\ref{def:ajcal}). We give in Appendix \ref{sec:aj-recal} a full presentation of the method.   

\paragraph{Asymptotic Conformal Intervals}
To our knowledge, no formal definition of conformal prediction has been given in the competing-risks setting. For a level $\gamma \in [0,1]$ and an event $k \in \llbracket 1, K \rrbracket$, following the works of Booth \citep{booth_using_2023, candes_conformalized_2023, farina_doubly_2025, qi_conformalized_2024, qin_conformal_2024} on conformal prediction in survival analysis, we define an asymptotically marginally calibrated Upper Predictive Bound (UPB) as a function $\hat{L}$ estimated from $\D_{cal}$ satisfying
\[
\liminf_{|\D_{cal}|\to\infty} \mathbb{P}\bigl(T^* \leq \hat{L}(\mathbf{X}) \mid \Delta^* = k\bigr) \ge 1 - \gamma,
\]
where $(\mathbf{X}, T^*, \Delta^*) \sim \mathcal{D}$ is an independent observation drawn from the same distribution.
\begin{lemma}[Asymptotic Conformal intervals]
    Under Assumption \ref{ass:exchangeability}, if a model is AJ-recalibrated, it implies that it has naturally asymptotically marginally calibrated UPBs.
\end{lemma}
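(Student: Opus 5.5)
The plan is to build the UPB directly from the recalibrated marginal CIF and then transfer the coverage guarantee from the Aalen--Johansen estimator to the true marginal CIF. After AJ-recalibration the recalibrated model satisfies, for every $\tau$,
\[
\frac{1}{|\D_{cal}|}\sum_{\mathbf{x}_i\in\mathcal{X}_{cal}}\hat F^{\rm cal}_k(\tau\mid\mathbf{x}_i)=\hat F^k_{\rm AJ(\D_{cal})}(\tau),
\]
by construction of $g(\tau,k,\cdot)=\hat F_k(\tau\mid\cdot)-{\rm AJ}_k(\tau)$, where ${\rm AJ}_k$ is taken as the signed version of ${\rm PI}_k$. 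In other words, the empirical marginal of the recalibrated CIF is exactly the AJ estimator on $\D_{cal}$.

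The bound is defined as a quantile of this normalized marginal. Set
\[
\hat L \defeq \inf\Bigl\{\tau:\ \frac{\hat F^k_{\rm AJ}(\tau)}{\hat F^k_{\rm AJ}(\infty)}\ge 1-\gamma\Bigr\}.
\]
This is the $(1-\gamma)$-quantile of the recalibrated model's averaged normalized CIF $F_{k/\infty}$. It can be taken constant in $\mathbf{X}$ for marginal coverage, or individualized through $\hat F^{\rm cal}_{k/\infty}(\cdot\mid\mathbf{x})$ with the same averaging argument.

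The key identity is
\[
\P(T^*\le\tau\mid\Delta^*=k)=\frac{F_k(\tau)}{F_k(\infty)}.
\]
Under Assumption~\ref{ass:exchangeability}, together with Assumptions~\ref{ass:noninfocensoring} and \ref{ass:nonnul}, the AJ estimator is uniformly consistent for $F_k$ by Lemma~\ref{lemm:ajconsistent} (Glivenko--Cantelli type). Since $F_k(\infty)>0$, the ratio $\hat F^k_{\rm AJ}/\hat F^k_{\rm AJ}(\infty)$ converges uniformly to $F_k/F_k(\infty)$. Because the new observation is independent of $\D_{cal}$, we can condition on $\D_{cal}$ to get
\[
\P\bigl(T^*\le\hat L\mid\Delta^*=k,\D_{cal}\bigr)=\frac{F_k(\hat L)}{F_k(\infty)}\ge \frac{\hat F^k_{\rm AJ}(\hat L)}{\hat F^k_{\rm AJ}(\infty)}-\epsilon_m\ge 1-\gamma-\epsilon_m,
\]
where $\epsilon_m\to0$ almost surely. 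Taking expectations over $\D_{cal}$ and applying dominated convergence, since all quantities are bounded by 1, gives the $\liminf\ge1-\gamma$.

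The main obstacle is the tail: the AJ estimator is only consistent up to the support of the observed times $t_{max}$, so $\hat F^k_{\rm AJ}(\infty)$ need not converge to $F_k(\infty)$. I would handle this by restricting to $[0,t_{max}]$ and interpreting $F_k(\infty)$ as $F_k(t_{max})$. Alternatively, I would assume the censoring support covers the event support, which is consistent with Assumption~\ref{ass:dieatinfty}. A second, minor issue is the right-continuity of the quantile, which Assumption~\ref{ass:continous} resolves, since it gives $F_k(\hat L)/F_k(\infty)$ exactly at level $1-\gamma$ in the limit.
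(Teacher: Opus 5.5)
Your bound that is constant in $\mathbf{X}$ is correct once you add the support hypothesis you mention. It takes a different route from the paper. The paper's proof is a one-line appeal to consistency \emph{plus} the Gaussian CLT of \citet{aalen_empirical_1978}, which gives a bound whose width depends on $|\D_{cal}|$. You take the plain $(1-\gamma)$-quantile of $\hat F^k_{\rm AJ}/\hat F^k_{\rm AJ}(\infty)$ and use consistency only. The UPB definition only asks for a $\liminf$, and a CLT margin vanishes as $m\to\infty$, so it is superfluous there. Your version therefore shows that consistency alone suffices, and it gives an explicit construction. The paper's CLT widening is what one would need for a finite-$m$ statement (coverage with prescribed confidence over $\D_{cal}$), which is stronger than what the lemma claims.

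You also do not need the uniform almost-sure convergence you attribute to Lemma~\ref{lemm:ajconsistent}, which is only stated pointwise and in probability. The following already give $\liminf\ge1-\gamma-\eta$ for every $\eta>0$: convergence in probability at the true $(1-\gamma-\eta)$-quantile $q_\eta$ and at $\infty$, monotonicity of $\hat F^k_{\rm AJ}$, continuity of $F_k$ (Assumption~\ref{ass:continous}), and bounded convergence. With probability tending to one the normalized AJ curve at $q_\eta$ is below $1-\gamma$, which forces $\hat L>q_\eta$.

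Three points need correcting.

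(i) The convergence $\hat F^k_{\rm AJ}(\infty)\to F_k(\infty)$ is the crux, and it really does require that censoring does not end before cause-$k$ events do. Otherwise $\hat F^k_{\rm AJ}(\infty)\to F_k(\tau_H)$, with $\tau_H$ the end of follow-up, and the coverage of your $\hat L$ tends to $(1-\gamma)F_k(\tau_H)/F_k(\infty)<1-\gamma$. Your first remedy, reading $F_k(\infty)$ as $F_k(t_{max})$, proves a different statement: coverage conditional on $T^*\le t_{max}$. Only the second remedy, stated as an explicit extra hypothesis, proves the lemma. The paper's sketch relies on the same hypothesis implicitly.

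(ii) The claim that an individualized bound built from $\hat F^{\rm cal}_{k/\infty}(\cdot\mid\mathbf{x})$ works ``with the same averaging argument'' is false. Its coverage is $\E[F_k(\hat L(\mathbf{X})\mid\mathbf{X})]/F_k(\infty)$, and the fixed-$\tau$ identity $\E_{\mathbf{X}}[\hat F^{\rm cal}_k(\tau\mid\mathbf{X})]=F_k(\tau)$ does not control it. Take two equiprobable covariate values $a,b$ with $F_k(t\mid a)=p\min(t,1)$ and $F_k(t\mid b)=p\min(\max(t-1,0),1)$. A model that swaps these two curves is exactly marginally calibrated, so AJ-recalibration asymptotically leaves it unchanged. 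Yet its individualized $0.9$-bounds have coverage $1/2$. Individualized coverage is a CR D-calibration property, which marginal calibration does not supply, so keep $\hat L$ constant in $\mathbf{X}$.

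(iii) Make the orientation of the shift explicit. Write $\overline{\hat F_k}(\tau)$ for the calibration-set average. The identity ``average of $\hat F^{\rm cal}_k$ equals $\hat F^k_{\rm AJ}$'' requires $\hat F^{\rm cal}_k=\hat F_k-\bigl(\overline{\hat F_k}-\hat F^k_{\rm AJ}\bigr)$. The appendix's formula $\hat F_k-\bigl(\hat F^k_{\rm AJ}-\overline{\hat F_k}\bigr)$ would average to $2\overline{\hat F_k}-\hat F^k_{\rm AJ}$ instead. Even with the correct orientation, the identity holds only on the recalibration grid.
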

\begin{proof}
    By combining the consistency and the Gaussian Central limit theorem for the Aalen–Johansen estimator established in \citet{aalen_empirical_1978}, we directly derive UPBs whose width depends on the sample size of $\mathcal{D}_{cal}$.
\end{proof}

\paragraph{Discrimination is unchanged} As mentioned before, discrimination of patients is an important property in the competing-risks setting. We use the definition of the C-index designed for the competing-risks setting introduced by \citet{wolbers_prognostic_2009} and recall its formula in Appendix \ref{def:cindex}. With our recalibration method, we can prove the following:

\begin{theorem}
    The C-index in the competing-risks setting is not affected by the AJ recalibration.
\end{theorem}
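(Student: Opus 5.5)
The argument rests on one observation: AJ-recalibration shifts every individual's predicted CIF for event $k$ at time $\tau$ by the same amount. From Section~4.2 the recalibration map is
\[
\hat F^{\rm cal}_k(\tau\mid \mathbf{x}) = \hat F_k(\tau\mid \mathbf{x}) - {\rm AJ}_k(\tau),
\]
where ${\rm AJ}_k(\tau)$ is computed once on $\D_{cal}$ and depends only on $(\tau,k)$, not on $\mathbf{x}$. The competing-risks C-index of \citet{wolbers_prognostic_2009} (recalled in Appendix~\ref{def:cindex}) depends on predictions only through comparisons of risk scores between comparable pairs at a common time. So the plan is to show that every indicator inside the C-index is unchanged by a common additive shift. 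I take as given that the sign convention of the shift (whether ${\rm AJ}_k(\tau)$ is signed or an absolute value) does not matter, since in either case it is independent of $\mathbf{x}$.

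\textbf{Step 1: write out the C-index.} For event $k$ and horizon $\tau$, the C-index is a weighted ratio whose numerator sums over comparable pairs $(i,j)$ the terms
\[
W_{ij}\,\mathbb{1}\{\hat F_k(\tau\mid \mathbf{x}_i) > \hat F_k(\tau\mid \mathbf{x}_j)\}\,A_{ij}.
\]
The denominator sums $W_{ij}A_{ij}$. Here $A_{ij}$ encodes comparability: $t_i<t_j$ with $\delta_i=k$, or $\delta_j\neq k$, $\delta_j\neq 0$. The weights $W_{ij}$ are inverse-probability-of-censoring weights. Both $A_{ij}$ and $W_{ij}$ depend only on the observed $(t,\delta)$ and on the censoring estimate, never on the model predictions. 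Tie-handling terms $\tfrac12\mathbb{1}\{\hat F_k(\tau\mid\mathbf{x}_i)=\hat F_k(\tau\mid\mathbf{x}_j)\}$, if present, are treated the same way.

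\textbf{Step 2: invariance of each indicator.} Since ${\rm AJ}_k(\tau)$ does not depend on the individual, it cancels when two individuals are compared:
\[
\hat F^{\rm cal}_k(\tau\mid\mathbf{x}_i) - \hat F^{\rm cal}_k(\tau\mid\mathbf{x}_j) = \hat F_k(\tau\mid\mathbf{x}_i) - \hat F_k(\tau\mid\mathbf{x}_j).
\]
The strict-inequality indicators and the tie indicators are therefore identical before and after recalibration. The numerator and denominator are unchanged term by term, hence so is the C-index. If the C-index is evaluated at each individual's own time $t_i$, as in time-dependent variants, both scores in a pair are still taken at the same time $t_i$. The shift ${\rm AJ}_k(t_i)$ is therefore still common to the pair and still cancels.

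\textbf{Main obstacle.} The algebra is trivial; the real issue is that the recalibrated outputs might be post-processed, for example clipped to $[0,1]$ or monotonized in $\tau$, to remain valid CIFs. Clipping is a non-decreasing map, so it can create ties but never reverses an order. I would therefore state the theorem for the unclipped shift as defined in Section~4.2, and add the remark that any strictly increasing post-processing also preserves the C-index. I would also confirm against Appendix~\ref{def:cindex} that the risk score used is $\hat F_k(\tau\mid\mathbf{x})$ at a time common to the pair, rather than an integrated quantity. Even for an integrated score $\int \hat F_k$, the shift integrates to a common constant, so the argument still goes through.
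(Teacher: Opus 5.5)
Your proposal is correct and takes the same approach as the paper. The AJ shift depends only on $(\tau,k)$, so it cancels in every pairwise comparison $Q_{ij}(\tau)$ made at the common time $\tau$; the comparability indicators $A_{ij}, B_{ij}$ and the censoring weights do not involve the predictions, so the C-index is unchanged.
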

\begin{proof}
    The proof is inspired by \citet{qi_conformalized_2024} in the survival setting.  The comparison between samples is performed at a fixed time point $\tau$, and we apply the same transformation at every point at $\tau$. Thus, the order is intact and the C-index is unchanged.
\end{proof} 

\subsection{Competing-risks temperature scaling}
Temperature scaling, proposed by \cite{guo_calibration_2017}, is a recalibration method adapted to multiclass classification. Due to the presence of time and censored people, we have adapted the temperature scaling to the competing-risks setting. Indeed, we propose here a slightly more complex post-hoc recalibration method using the temperature scaling framework.\\

We recalibrate independently at each prediction time $\tau$ using the general setup of Section~\ref{sec:recalibration_problem}, taking as loss function $\ell$ the plug-in ${\rm cal}_{k}(\tau)$-calibration ${\rm PI}_k(\tau)$, summed over all events $k$. As recalibration maps, we consider $\mathcal{G} = \{g_\beta\}_{\beta > 0}$, where for any vector of positive probabilities $p$,
\[
g_\beta(p) \overset{\rm def}{=} \softmax\bigl(\beta \cdot {\rm logit}(p)\bigr).
\]
In the competing-risks setting, we naturally extend this construction by applying $g_\beta$ at each time $\tau$ to the vector of CIFs values $(\hat F_k(\tau \mid \mathbf{x}))_k$, with the $\softmax$ taken over the event index $k$.

\begin{definition}[Temperature scaling]\label{def:tempsca}
    We define, for each $\tau$, temperature-scaling recalibration in the competing-risks setting as:
    \begin{align*}
        \Big(\hat F^{\rm TS-cal}_k(\tau)\Big)_k \overset{\rm def}{=} g_{\beta^*(\tau)}\Big(\hat F_\cdot(\tau)\Big)\\
        \text{with}\quad \beta^*(\tau)\overset{\rm def}{=} \argmin_{\beta>0}\sum_{k=1}^K {\rm PI}_k(\tau)\bigg(g_\beta\Big(\hat F_\cdot(\tau)\Big)\bigg).
    \end{align*}
\end{definition}

Instead of relying on a direct target (as done in the multiclass setting), our method utilizes a marginally consistent estimator of the whole population. We show in Appendix \ref{sec:temp-scaling} that, with the true probabilities or a consistent estimator of the CIFs, the probabilities remain unchanged ($\forall \tau, \beta(\tau) = 1$). \\
Finally, this design inherently ensures that the sum of probabilities for each individual equals 1.

\section{EXPERIMENTAL STUDY}
\paragraph{Code reproducibility} The code will be available online as an open-source Python library. For now, it is joint with the main paper as supplemental material.

\paragraph{Baselines}
We compare different state-of-the-art competing risks models. 
All models have been trained with their default parameters on an internal cluster (40 CPUs, 252 GB of RAM, 4 NVIDIA Tesla V100 GPUs). We used the \citet{aalen_empirical_1978} estimator. We also used a linear model, the \citet{fine_proportional_1999} estimator, which assumes proportional hazards (\emph{i.e.}, the instantaneous risk of every event) as \citet{cox_regression_1972}). We also trained a tree-based model \citep[Random Survival Forests,][]{ishwaran_random_2008} that optimizes the discrimination of patients. We also studied DeepHit \citep{lee_deephit_2018}, a neural network designed to optimize a loss function that emphasizes discrimination and negative cross-entropy. We also added SurvivalBoost \citep{alberge_survival_2025}, a tree-based model trained with a proper scoring rule. Finally, SurvTRACE \citep{wang_survtrace_2022} is a transformer model that models the observed quantiles of the events.

\paragraph{Datasets} 
The experiments were made on a synthetic dataset and two real-life datasets (see Appendix \ref{sec:datasetsdescr}):
\begin{itemize}[leftmargin=1.6ex,topsep=.2ex]
    \item Synthetic: We design a synthetic dataset with linear features with dependent censoring, and use it in the following with three competing events. 
    \item \cite{seer_dataset}: This real-life dataset contains 470,000 patients with a 10-year follow-up. After preprocessing, it yields three competing events: death from breast cancer, cardiac events, and other causes.
    \item METABRIC \citep{rueda2019dynamics}: With 1,700 patients, this real-life dataset contains two competing events, with a follow-up period of up to 30 years. 
\end{itemize}

\paragraph{$\hat{D}$-calibration and AJ-calibration on real-life datasets}
\autoref{tab:calibseer10k} shows the CR $\hat{D}$-calibration and the AJ-$k$-calibration for for the SEER dataset trained over 10k samples. We also show whether the KS test assessed the calibration of the model. In this context, we see that the model that is the most calibrated overall is the \citet{aalen_empirical_1978} estimator, and that DeepHit and SurvTRACE are the models that are less calibrated. Other results (METABRIC, SEER 100k training datapoints, and the synthetic dataset) can be found in Appendix \ref{sec:resultsappendix}. The method least CR $\hat{D}$-calibrated, SurvTRACE, is also the farthest from the \citet{aalen_empirical_1978} in Fig. \ref{fig:meanincfunc}. Across all datasets (Tables \ref{tab:metabriccalibperevent} through \ref{tab:seer100kcalibperevent}), the computation of the CR-$\hat{D}$-calibration for each event reveals a consistent trend: most models, excepted DeepHit, exhibit their highest calibration error for the minority class. This is a concerning issue in the competing risks setting.

\begin{table}[t!]
    
    \centering
     \caption{\textbf{Competing risks calibrations:} CR-$\hat{D}_{cal}$ and AJ-${\rm cal}_{K}^\alpha$-calibrations computed on the SEER dataset with 10k training samples with $\alpha = 2$ over 5 random seeds. We recover that \citet{aalen_empirical_1978} is the most calibrated model while SurvTRACE has the worst metrics (expected given Figure \ref{fig:meanincfunc}). The table also reports the proportion of seeds passing the 5\% calibration test (Appendix~\ref{sec:tests}), with higher values indicating better calibration.
     }
    \label{tab:calibseer10k}%
    \small
    \begin{tabular}{@{~}l@{~~~}r@{~}r@{~~~}r@{~}r@{~}}
    \toprule
    & \multicolumn{2}{c}{AJ-K-cal} & \multicolumn{2}{c}{CR-$\hat{D}_{cal}$} \\
    \cmidrule(r){2-3}\cmidrule(l){4-5}
    Model & mean $\pm$ std & test & mean $\pm$ std & test \\
    \midrule
    AalenJohansen & 0.00 $\pm$ 0.00 & 100\%
                   & 0.05 $\pm$ 0.01 & 100\% \\
    DeepHit       & 0.18 $\pm$ 0.13 & 0\%
                   & 1.13 $\pm$ 0.56 & 0\%\\
    FineGray      & 0.00 $\pm$ 0.00 & 100\%
                   & 1.18 $\pm$ 2.30 & 20\% \\
    RSF           & 0.00 $\pm$ 0.00 & 40\%
                   & 0.09 $\pm$ 0.04 & 80\% \\
    SurvTRACE     & 0.24 $\pm$ 0.17 & 0\%
                   & 1.34 $\pm$ 0.06 & 0\%\\
    SurvivalBoost & 0.01 $\pm$ 0.00 & 0\%
                   & 0.09 $\pm$ 0.01 & 100\% \\
    \bottomrule
    \end{tabular}
    
\end{table}

\paragraph{Recalibrations: impact on the metrics}
We study the probabilities before and after the recalibration. For this, we use the Integrated Brier score \citep[IBS, ][]{graf_assessment_1999, schoop_quantifying_2011}. This is the sum over time of the Brier Score re-weighted to account for the censoring. It is a strictly proper scoring rule: its minimum gives the true probabilities. Recalibration might influence the IBS as it alters the probabilities for a poorly calibrated model. Figure \ref{fig:metabricrecal} shows the metrics on the SEER dataset (Appendix \ref{sec:moreresults} gives the results on the METABRIC dataset, the synthetic dataset, and for different training sizes of the SEER dataset). Figure \ref{fig:metabricibs} shows the IBS (lower is better) before and after recalibration over 5 different seeds. It shows that, after recalibration, probability errors for the different models are smaller, while having the AJ-${\rm cal}_{K}^\alpha$-calibration and the D-CR-calibration smaller (Fig. \ref{fig:metabricajcal} \ref{fig:metabricdcr}). \\
As mentioned before, both recalibrations, applied to predicted probabilities, leave discrimination power (\emph{i.e.} C-index) unchanged. Nevertheless, if the corresponding input probabilities do not initially sum to 1 across predicted classes for each individual, their renormalization does change discrimination
(Figure \ref{fig:metrics_metabric}). 

\paragraph{Comparison between recalibration methods}
The best recalibration method depends on the dataset size and desired performance. Across our experiments (Section \ref{sec:moreresults}), AJ-recalibration yielded superior calibration metrics universally and lower IBS on smaller datasets (Fig. \ref{fig:metabricibs} and Fig. \ref{fig:metrics_synthe}). Conversely, the Temperature Scaling approach seems more effective for larger datasets (Fig. \ref{fig:metrics_seer100k}) and has the added benefit of producing meaningful individual probabilities.

\begin{figure}[t!]
    \begin{subfigure}{0.50\textwidth}
        \includegraphics[width=\textwidth]{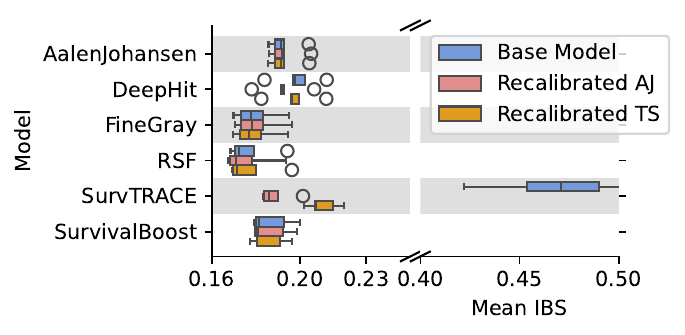}\vspace*{-4ex} 
        \caption{\sffamily Integrated Brier Score\hspace*{.55\linewidth}}\label{fig:metabricibs}%
        \vspace*{.5ex} 
    \end{subfigure}
    \begin{subfigure}{0.40\textwidth}
        \includegraphics[width=1.2\textwidth]{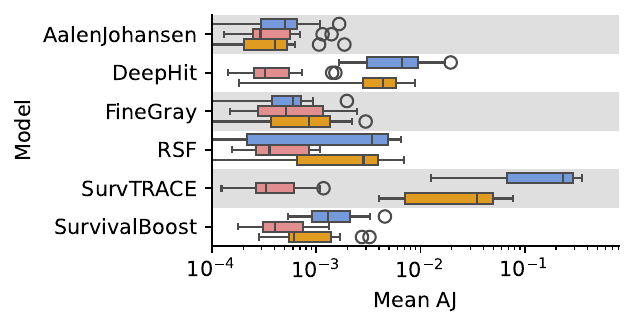}\vspace*{-4.3ex}
        \caption{\sffamily AJ-${\rm cal}_{K}^\alpha$-calibration\hspace*{.53\linewidth}}\label{fig:metabricajcal}%
        \vspace*{.5ex} 
    \end{subfigure}
    \begin{subfigure}{0.50\textwidth}
        \includegraphics[width=\textwidth]{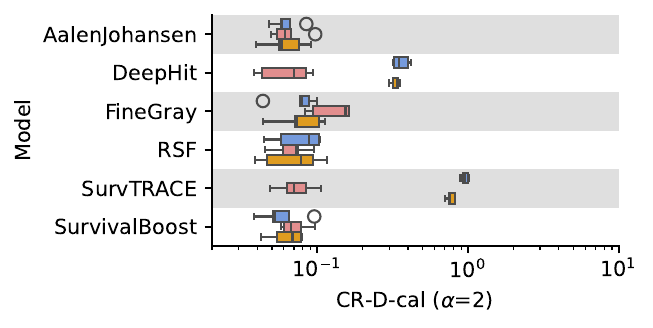}\vspace*{-4.3ex}
        \caption{\sffamily CR-D-calibration\hspace*{.59\linewidth}}\label{fig:metabricdcr}%
        \vspace*{.5ex} 
    \end{subfigure}
    \caption{\textbf{IBS and calibration on METABRIC.}
    Boxplots over 5 seeds of the integrated Brier score (IBS), AJ-${\rm cal}_{K}^\alpha$-calibration (with parameter $\alpha = 2$), and CR-D-calibration ($\alpha = 2$) for each model and its AJ-recalibration and temperature-scaling (TS) recalibration.
    Results on the synthetic and SEER datasets are reported in Appendix~\ref{sec:moreresults}.}\label{fig:metabricrecal}
\end{figure}

\section{DISCUSSION AND CONCLUSION}

\paragraph{Limitations} We have introduced 2 metrics, with complementary strengths and weaknesses.
A limitation of CR-D-calibration, as in \citet{haider_eective_nodate}, is that we assume a prediction horizon $t_{\max}$ large enough for cumulative incidence functions to approach the probability of the event of interest (\emph{i.e.} $\forall x, \hat{F}_k(t_{max}|\mathbf{x}) \approx \hat{F}_k(\infty|\mathbf{x})$), which may not hold in practice.
In contrast, the PI-${\rm cal}_{K}^\alpha$-calibration does not have this limitation by not relying on the behavior of the cumulative incidence function at $t_{\max}$. However, it requires sufficiently large data to ensure that the PI estimator closely approximates the true marginal distribution.

A natural direction for future work is a dedicated simulation study under known ground truth that systematically investigates the finite-sample behavior of our calibration estimators and recalibration procedures—varying sample sizes, as well as covariate, event-time, and censoring distributions—and assesses whether pathologies similar to those reported for classification calibration errors by \citet{gruber_Buettner_22} arise in our setting, in which case it would be important to explore more robust metrics and tests for competing risks.

\paragraph{Acknowledgments}

JA, JA, and GV acknowledge funding from the European Union’s Horizon Europe research and innovation program under grant agreement No 1010954433 (Intercept-T2D). TH acknowledge funding from the French Ministry of Health.

\paragraph{Conclusion} 
We addressed a critical gap in survival analysis with two novel, oracle-minimized calibration metrics for competing-risks settings: the \textit{CR-D-calibration} and the \textit{${\rm cal}_{K}^\alpha$-calibration}, proving their equivalence under a mild assumption. For practical use, we provided for both consistent estimators and corresponding recalibration formulation and procedures (one derived from AJ-$k$-calibration and an adaptation of the temperature scaling). We have shown that those methods improve model calibration while preserving discriminative capacity and increasing their abilities to predict individualized probabilities. \\
Our application of these new metrics to SOTA competing risks models on both synthetic and real datasets revealed heterogeneity in their calibration, increasing the urgent necessity of this evaluation criteria. Ultimately, relying on poorly calibrated predictions risks inadequate decisions in critical applications, emphasizing the need to consider our proposed calibration measures alongside other performance criteria.

\bibliography{conformelle}
\bibliographystyle{apalike}




    Yes, all proofs can be found in the appendix, and sketch proofs in the main manuscript. 
\addtocontents{toc}{\protect\setcounter{tocdepth}{2}}



\appendix
\onecolumn
\section*{SUPPLEMENTARY MATERIALS}

\section{FORMAL DEFINITIONS}
\begin{definition}[D-calibration: Survival setting]\label{def:dcalsurvival}
    For a dataset $D = \{ [\mathbf{x}_i, t_i, \delta_i] | i = 1, ..., n\}$, and any interval $[a, b] \subset [0,1]$: 
    \begin{equation}
        D([a,b]) = \{ [\mathbf{x}_i, t_i, \delta_i = 1] \in D \quad | \quad \hat{S}(t_i|\mathbf{x}_i) \in [a, b] \} 
    \end{equation}
\end{definition}

\begin{definition}[C-index CR \cite{wolbers_prognostic_2009}]\label{def:cindex}
    We recall the definition of the C-index at time $\tau$ for the $k^{th}$ competing event \cite{wolbers_prognostic_2009} as:
\begin{equation}
    \mathrm{C}(\tau) = \frac{\sum_{i=1}^n \sum_{j=1}^n (A_{ij}
        \hat{W}_{ij, 1}^{-1} + B_{ij} \hat{W}_{ij, 2}^{-1}) Q_{ij}(\tau)
        \mathbb{1}_{t_i \leq \tau, \delta_i = k}}
        {\sum_{i=1}^n \sum_{j=1}^n (A_{ij}
        \hat{W}_{ij, 1}^{-1} + B_{ij} \hat{W}_{ij, 2}^{-1}) \mathbb{1}_{t_i \leq \tau, \delta_i = k}}
\end{equation}
where:
\begin{align}
        A_{ij} &= \mathbb{1}_{t_i < t_j \cup (t_i = t_j \cap \delta_j = 0)} \\
        B_{ij} &= \mathbb{1}_{t_i \geq t_j, \delta_j \neq k, \delta_j \neq 0} \\
        \hat{W}_{ij,1} &= \hat{G}(t_i| \mathbf{X} = \mathbf{x}_i) \hat{G}(t_i|\mathbf{X} = \mathbf{x}_j) \\
        \hat{W}_{ij,2} &= \hat{G}(t_i|\mathbf{X} = \mathbf{x}_i) \hat{G}(t_j|\mathbf{X} =\mathbf{x}_j) \\
        Q_{ij}(t) &= \mathbb{1}_{F_k(\tau| \mathbf{X} = \mathbf{x}_i) > F_k(\tau| \mathbf{X} = \mathbf{x}_j)}
\end{align}
where $\hat{G}$ is one estimator of the censoring function, $G(\tau | \mathbf{X}= \mathbf{x}) = \mathbb{P}(C > \tau | \mathbf{X}= \mathbf{x})$.
\end{definition} 

\subsection{Marginal estimators in survival analysis and in the competing risks setting}
Here is the definition of the \citet{kaplan_nonparametric_1958} estimator, a marginal estimator in the survival analysis setting, 
\begin{definition}[\citet{kaplan_nonparametric_1958} estimator]\label{def:km}
\begin{align}
    \textbf{Kaplan-Meier (KM) estimator:} && \hat F_{\rm KM}(\tau) = 1 - \hat{S}_{KM}(\tau) = 1 - \prod\limits_{t_i \le \tau} \frac{Y(t_i)-d(t_i)}{Y(t_i)}&&
\end{align}
$d(t)$ is the number of events at time $t_i$ and $Y(t_i)$ is the number of individuals at risk at time $t_i$.
\end{definition}

And now, in the competing risks setting, where there exist more than one event of interest, we write the definition of the \citet{aalen_empirical_1978} estimator.
\begin{definition}[\citet{aalen_empirical_1978} estimator]\label{def:aj}
\begin{align}
    \textbf{Aalen-Johansen (AJ) estimator:} && \hat F^k_{\rm AJ}(\tau) = \sum_{t_i \leq \tau} \hat{S}(t_i^-) \frac{d_k(t_i)}{Y(t_i)}&&
\end{align}
where $\hat{S}(t_i^-)$ is the survival function estimated from the \citet{kaplan_nonparametric_1958} (see  \ref{def:km}) estimator, $d_k(t)$ is the number of events of type $k$ at time $t_i$ and $Y(t_i)$ is the number of individuals at risk at time $t_i$.
\end{definition}

\section{SURVIVAL ANALYSIS RECALIBRATION METHOD} 
Here is the result of the recalibration using \citet{qi_conformalized_2024}'s method. Using the D-calibration in survival analysis, onto the SEER dataset with 10k training samples and 3 competing risks, we clearly see that the method has a negative impact on the marginal calibration of the \citet{aalen_empirical_1978} estimator. 

\begin{figure}[h!]
    \centering
     \includegraphics[width=.3\textwidth]{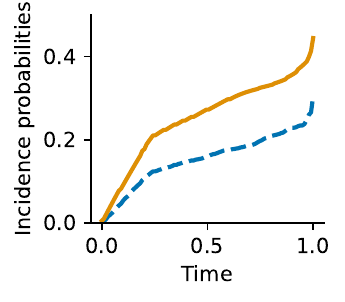}%
	\llap{\raisebox{.26\linewidth}{Event 1}\qquad}%
     \includegraphics[width=.3\textwidth]{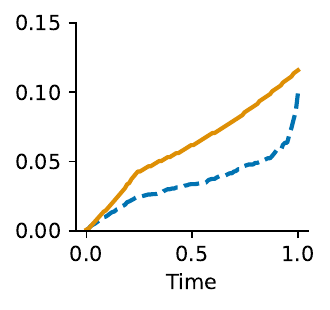}%
 	\llap{\raisebox{.26\linewidth}{Event 2}\qquad}%
     \includegraphics[width=.3\textwidth]{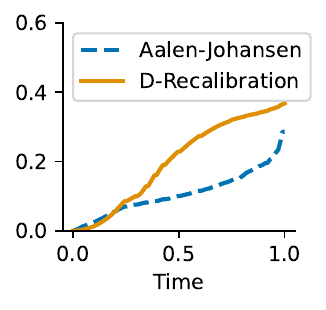}%
 	\llap{\raisebox{.26\linewidth}{Event 3}\qquad}%
     \caption{\textbf{Survival-analysis recalibration of good competing risk probabilities breaks them}. On this real-life dataset (SEER, 10k training samples), we apply the D-recalibration method in survival analysis \cite{qi_conformalized_2024} to \citet{aalen_empirical_1978} by treating the risks independently (cause-specific). It incorrectly changes the marginal probabilities, though these were correct, as we used a consistent marginal estimator.
 The root of the problem is that survival-risk recalibration is applied to events independently, while their probabilities are tied \cite{gorfine_frailty-based_2011}. \label{fig:recalsurvival}}
\end{figure}

\section{DISTRIBUTION CALIBRATION IN THE COMPETING-RISKS SETTING (CR D-Cal)} \label{dcalsurvival}
In this part, we reuse the notations that were explained in the main paper.
\subsection{Without censoring}
For now, we suppose there are no censored individuals. It is crucial to understand the hurdles in extending the D-calibration into the competing-risks setting. As mentioned earlier, in the survival analysis ~\ref{related_work}, the authors assume that the event of interest will almost surely happen, \emph{i.e.} $F(\infty |\mathbf{x}) = 1$, they had that $F(T^*)\sim \mathcal{U}(0, 1)$. Whereas, in the competing setting, that for each event $k$, $F_k(T^*, \mathbf{X}) \nsim \mathcal{U}(0, 1)$ because $F_k(T^*, \mathbf{X})\leq F_k(\infty |\mathbf{X}) = \mathbb{P}(\Delta^* = k | \mathbf{X} = \mathbf{x}) < 1$. \\
To overcome this issue, we introduce the D-calibration without any censored individuals as \ref{def:dcalnc}:

\begin{definition}[CR D-calibration - No Censoring]\label{def:dcalnc}
    Thus, in the case of non-censorship, we define the Competing Risks Distribution calibration (CR D-calibration NC) as:  
    \begin{align}
        \text{CR D-calibration NC} &&  D^{NC}_{cal} = \sum_{1}^K \left\lVert\frac{\mathbb{E}_{T^*, \Delta^*}(B^k_{[0, \rho]})}{\mathbb{E}_\mathbf{X}(\hat{F}(\infty |\mathbf{X}))} - \rho\right\rVert_{\alpha} && \\
        \text{where } && B^k_{[0, \rho]} = \mathbb{1}_{\hat{F}_{k/\infty}(T^*|\mathbf{X}) \in [0, \rho]}\mathbb{1}_{\Delta^*=k}
    \end{align}
     If the $k^{th}$ incidence function is D-calibrated, we have $B^k_{[0, \rho]} \simeq \rho$. \\ 
     This distance corresponds to integrating the differences between the uniform ratio and the cumulative incidence function of a uniform law. 
\end{definition}

The CR D-calibration only have a meaning if the oracle functions (\emph{i.e.} each of the CIF) are calibrated according to our definition. The following lemma \ref{lemm:ratiocdf} helps us to overcome this issue:

\begin{restatable}[Uniform Ratio CDF]{lemma}{lemmuniformratio}\label{lemm:ratiocdf}
    For the oracle function, conditioning of $\Delta^*$ and $\mathbf{X}$, this ratio is following a uniform distribution, \emph{i.e.}: 
    \begin{align}
       F_{k/\infty}(T^*|\mathbf{X}) ~\big|~ \Delta^*=k, \mathbf{X}  \quad \sim \quad \mathcal{U}(0, 1)
    \end{align}
\end{restatable}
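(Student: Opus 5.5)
The plan is to write down the conditional distribution function of $T^*$ given $\Delta^*=k$ and $\mathbf{X}$, observe that it coincides with $F_{k/\infty}(\cdot\mid\mathbf{X})$, and then apply the probability integral transform. Fix $\mathbf{X}=\mathbf{x}$ and $k$. By Assumption \ref{ass:nonnul}, $F_k(\infty\mid\mathbf{x})=\P(\Delta^*=k\mid\mathbf{X}=\mathbf{x})>0$, so we may condition on $\{\Delta^*=k\}$. For every $t\ge 0$,
\[
\P(T^*\le t\mid \Delta^*=k,\mathbf{X}=\mathbf{x})=\frac{\P(T^*\le t,\Delta^*=k\mid\mathbf{X}=\mathbf{x})}{\P(\Delta^*=k\mid\mathbf{X}=\mathbf{x})}=\frac{F_k(t\mid\mathbf{x})}{F_k(\infty\mid\mathbf{x})}=F_{k/\infty}(t\mid\mathbf{x}).
\]
Hence $G\defeq F_{k/\infty}(\cdot\mid\mathbf{x})$ is exactly the cumulative distribution function of $T^*$ under the conditional law given $\Delta^*=k$ and $\mathbf{X}=\mathbf{x}$. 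It is nondecreasing, starts at $0$, and tends to $1$.

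The second step is the probability integral transform. By Assumption \ref{ass:continous}, $F_k(\cdot\mid\mathbf{x})$ is continuous, so $G$ is a continuous CDF. For a real random variable $Y$ with continuous CDF $G$, we have $\P(G(Y)\le u)=u$ for all $u\in[0,1]$. We will prove this directly rather than assume $G$ is invertible, since $G$ may have flat pieces.

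Let $q(u)=\sup\{t: G(t)\le u\}$. Continuity of $G$ gives $G(q(u))=u$ for $u\in(0,1)$. Monotonicity then gives the equality of events $\{G(Y)\le u\}=\{Y\le q(u)\}$ up to a null set. The only discrepancy is $Y$ falling in a flat piece of $G$ at level $u$ beyond $q(u)$, and such a set has $G$-measure zero. Therefore $\P(G(Y)\le u)=G(q(u))=u$.

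The edge case $q(u)=+\infty$ (equivalently $u=1$) is trivial. Applying this with $Y=T^*$ under the conditional law yields
\[
F_{k/\infty}(T^*\mid\mathbf{X})\mid\Delta^*=k,\mathbf{X}\ \sim\ \mathcal{U}(0,1).
\]

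The main delicate step is this generalized-inverse argument, in particular handling the flat regions of $G$ and the value at infinity. On $\{\Delta^*=k\}$ we have $T^*<\infty$ almost surely, because $F_k(\infty\mid\mathbf{x})$ is the limit of $F_k(t\mid\mathbf{x})$ as $t\to\infty$ and so $\P(T^*=\infty,\Delta^*=k\mid\mathbf{x})=0$. This means no conditional mass sits at infinity to break uniformity. Measurability in $\mathbf{x}$ is routine, since every step holds pointwise for almost every $\mathbf{x}$.
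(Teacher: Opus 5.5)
Your argument is correct and follows the paper's proof: Bayes' rule identifies $F_{k/\infty}(\cdot\mid\mathbf{x})$ as the conditional CDF of $T^*$ given $\Delta^*=k,\mathbf{X}=\mathbf{x}$, and the probability integral transform finishes. Your generalized inverse $q(u)$ is a more careful version of the paper's direct use of $F_k^{-1}$, which tacitly ignores flat pieces. (Two minor points. With $q(u)$ defined as a supremum and $G$ continuous, $\{G(Y)\le u\}=\{Y\le q(u)\}$ holds exactly, so there is no null-set discrepancy. Also, the identity $F_k(\infty\mid\mathbf{x})=\P(\Delta^*=k\mid\mathbf{x})$ is an input that comes from Assumption~\ref{ass:dieatinfty} or the paper's convention, not from Assumption~\ref{ass:nonnul}, exactly as in the paper.)
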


\begin{proof}[Proof Sketch]
    Using the Uniform CDF theorem \citet{wikith} and the conditioning over $\Delta^*$ and $X$. The whole proof can be found in Appendix. 
\end{proof}

\begin{restatable}[CR D-calibration - No Censoring]{theorem}{thdcnc}\label{thm:dcnc}
    With lemma \ref{lemm:ratiocdf}, the oracle functions are CR D-calibrated. 
    This proves that the result that $D_{NC}^{CR}$ is proper. \\
    A marginally consistent estimator is asymptotically NC CR D-calibrated. 
\end{restatable}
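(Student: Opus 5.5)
The argument has two parts: an exact computation for the oracle CIFs, then a limiting argument for a marginally consistent estimator.

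\textbf{Oracle part.} For the oracle CIFs I would compute the numerator $\mathbb{E}_{T^*,\Delta^*}(B^k_{[0,\rho]})$ by conditioning on $\mathbf{X}$ and applying the tower property. Conditionally on $\mathbf{X}$ we have
\[
\mathbb{E}\big(\mathbb{1}_{F_{k/\infty}(T^*|\mathbf{X})\in[0,\rho]}\mathbb{1}_{\Delta^*=k}\mid \mathbf{X}\big)=\P(\Delta^*=k\mid\mathbf{X})\,\P\big(F_{k/\infty}(T^*|\mathbf{X})\le\rho\mid \Delta^*=k,\mathbf{X}\big).
\]
Assumption~\ref{ass:nonnul} guarantees $\P(\Delta^*=k\mid\mathbf{X})>0$, so the conditioning is well defined. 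By Lemma~\ref{lemm:ratiocdf}, the second factor equals $\rho$. Since $\P(\Delta^*=k\mid\mathbf{X})=F_k(\infty\mid\mathbf{X})$, the conditional expectation equals $\rho\,F_k(\infty\mid\mathbf{X})$. Taking the expectation over $\mathbf{X}$ gives $\rho\,\mathbb{E}_{\mathbf{X}}(F_k(\infty\mid\mathbf{X}))$. Dividing by the denominator of Definition~\ref{def:dcalnc}, which for the oracle is exactly $\mathbb{E}_{\mathbf{X}}(F_k(\infty\mid\mathbf{X}))$, gives $b^k_{[0,\rho]}=\rho$ for every $\rho\in[0,1]$. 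Hence each integrand $|b^k_{[0,\rho]}-\rho|^\alpha$ vanishes identically and $D^{NC}_{cal}=0$. Because $D^{NC}_{cal}$ is a sum of nonnegative norms, $0$ is its minimum, so the oracle minimizes it and the metric is proper.

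\textbf{Asymptotic part.} Let $\hat F^{(n)}_k$ be a sequence of estimators with $\hat F^{(n)}_k(t\mid\mathbf{x})\to F_k(t\mid\mathbf{x})$ pointwise, including at $t=\infty$. I would first show that $b^{k,(n)}_{[0,\rho]}\to\rho$ for almost every $\rho$, and then pass to the $\alpha$-norm.

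For the denominator, the estimators are bounded by $1$, so dominated convergence gives $\mathbb{E}_{\mathbf{X}}\hat F^{(n)}_k(\infty\mid\mathbf{X})\to\mathbb{E}_{\mathbf{X}}F_k(\infty\mid\mathbf{X})>0$.

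For the numerator, consistency and $F_k(\infty\mid\mathbf{X})>0$ give $\hat F^{(n)}_{k/\infty}(T^*\mid\mathbf{X})\to F_{k/\infty}(T^*\mid\mathbf{X})$ almost surely. Fix $\rho$. The indicator $\mathbb{1}_{\cdot\le\rho}$ converges wherever the limit differs from $\rho$. The event $\{F_{k/\infty}(T^*\mid\mathbf{X})=\rho,\ \Delta^*=k\}$ is null, since by Lemma~\ref{lemm:ratiocdf} the conditional law is uniform and so has no atoms. Dominated convergence, with bound $1$, then gives convergence of the numerator to $\rho\,\mathbb{E}_{\mathbf{X}}F_k(\infty\mid\mathbf{X})$.

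To conclude, $|b^{k,(n)}_{[0,\rho]}-\rho|^\alpha$ is bounded uniformly in $n$, because the denominator stays bounded away from $0$ for large $n$. A second dominated convergence, now in $\rho$ over $[0,1]$, then shows $D^{NC}_{cal}\to0$.

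\textbf{Main obstacle.} The delicate step is the pointwise convergence of the ratio $\hat F_{k/\infty}$. It needs $F_k(\infty\mid\mathbf{X})>0$, which Assumption~\ref{ass:nonnul} provides, and it needs to control the boundary event where the limit equals $\rho$; this is handled by the continuity in Assumption~\ref{ass:continous} through Lemma~\ref{lemm:ratiocdf}. If ``marginally consistent'' is only meant at the population level rather than for each individual $\mathbf{x}$, I would reinterpret the statement as concerning an estimator that converges to the oracle CIFs and run the same dominated-convergence argument.
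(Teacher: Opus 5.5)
Your oracle computation is the paper's argument. The paper obtains it by specializing Lemma~\ref{lemm:exporacle} to the uncensored case, which reduces to your tower-property step with Lemma~\ref{lemm:ratiocdf}. Your dominated-convergence treatment of the limit is more careful than the paper's, which only asserts $\mathbb{E}|\hat F^n_k(\tau)-F_k(\tau)|\to 0$ and says the computation carries over. Handling the null boundary event and keeping the denominator away from $0$ is what is actually needed to pass the limit through the indicator.

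The gap is in the hypothesis of the second part. In this paper a ``marginally consistent estimator'' is a population-level estimator, not depending on $\mathbf{x}$, that converges to the marginal CIF $F_k(\tau)=\mathbb{E}_{\mathbf{X}}F_k(\tau\mid\mathbf{X})$. Several places confirm this reading:
\begin{itemize}
\item the introduction describes Aalen--Johansen this way;
\item the paper's proof writes $\hat F^n_k(\tau)\to F_k(\tau)$ with no $\mathbf{x}$, and in the censored version it imposes unconditional independence of $C$ and $T^*$;
\item the corollary right after the theorem applies it to Aalen--Johansen.
\end{itemize}
Such an estimator does not converge to $F_k(\tau\mid\mathbf{x})$ once the covariates are informative. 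Your assumption $\hat F^{(n)}_k(t\mid\mathbf{x})\to F_k(t\mid\mathbf{x})$ therefore excludes exactly the intended case. ``Reinterpreting'' the statement proves a different, if reasonable, theorem. Your identification of the limit invokes Lemma~\ref{lemm:ratiocdf} conditionally on $\mathbf{X}$, so it says nothing about a limit built from marginal CIFs.

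The repair is short. For an $\mathbf{x}$-free $\hat F_k$, $b^k_{[0,\rho]}$ depends only on the law of $(T^*,\Delta^*)$, and $\mathbb{P}(T^*\le t\mid\Delta^*=k)=F_k(t)/F_k(\infty)$ with $F_k$ the marginal CIF. This $F_k$ is continuous by Assumption~\ref{ass:continous} and dominated convergence, and $F_k(\infty)=\mathbb{P}(\Delta^*=k)>0$ by Assumption~\ref{ass:nonnul}. Running the proof of Lemma~\ref{lemm:ratiocdf} with trivial conditioning gives $F_{k/\infty}(T^*)\mid\Delta^*=k\sim\mathcal{U}(0,1)$. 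Hence:
\begin{itemize}
\item the limiting numerator is $\rho\,\mathbb{P}(\Delta^*=k)$;
\item the denominator is $\hat F_k(\infty)\to\mathbb{P}(\Delta^*=k)$;
\item your null-boundary and bounded-convergence steps go through verbatim.
\end{itemize}
Both versions are the same lemma applied with different conditioning $\sigma$-fields. Without censoring, Aalen--Johansen is the empirical subdistribution function, so Glivenko--Cantelli gives almost-sure uniform convergence on $[0,\infty]$. This supplies your assumed convergence at $t=\infty$, and lets you run the argument on a probability-one event despite the randomness of the trained estimator.
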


\begin{proof}[Proof sketch]
    For the oracle functions, the result is straightforward with the Lemma \ref{lemm:ratiocdf}. 
    With Lebesgue's dominated convergence theorem, we show that a marginally consistent estimator is asymptotically CR D-calibrated.  For more details, see below.
\end{proof}

\begin{corollary}
    Because the Aalen-Johansen estimator is a consistent (see Lemma \ref{lemm:ajconsistent}) marginal estimator of the oracle functions, it is also asymptotically NC CR D-calibrated.
\end{corollary}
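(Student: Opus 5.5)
The plan is to check the corollary directly for the Aalen--Johansen estimator $\hat F^k_{\rm AJ}$ of Definition~\ref{def:aj}, using only two ingredients: marginal consistency (Lemma~\ref{lemm:ajconsistent}) and the argument behind the second part of Theorem~\ref{thm:dcnc}. The key structural remark is that $\hat F^k_{\rm AJ}$ does not depend on $\mathbf{x}$. Its plug-in ratio $\hat F^{\rm AJ}_{k/\infty}(t)=\hat F^k_{\rm AJ}(t)/\hat F^k_{\rm AJ}(\infty)$ is therefore a deterministic function of $t$ once the training sample is fixed. So I only need convergence to the \emph{marginal} CIF $F_k(t)=\P(T^*\le t,\Delta^*=k)$, and no conditional consistency in $\mathbf{x}$ is required.

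\textbf{Step 1 (convergence of the ingredients).} By Lemma~\ref{lemm:ajconsistent}, $\sup_{t}|\hat F^k_{\rm AJ}(t)-F_k(t)|\to 0$ almost surely. Here $\infty$ is understood as the horizon $t_{\max}$, as in the paper's limitations paragraph. By Assumption~\ref{ass:nonnul}, $F_k(\infty)=\P(\Delta^*=k)>0$. Hence the denominator $\E_{\mathbf{X}}(\hat F^k_{\rm AJ}(\infty))=\hat F^k_{\rm AJ}(\infty)$ converges to $F_k(\infty)$. It follows that $\hat F^{\rm AJ}_{k/\infty}\to F_{k/\infty}\defeq F_k/F_k(\infty)$ uniformly in $t$, almost surely.

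\textbf{Step 2 (the limiting bucket).} For fixed $\rho\in(0,1)$, by continuity of $F_k$ (Assumption~\ref{ass:continous}) choose $q_\rho$ with $F_k(q_\rho)=\rho F_k(\infty)$. Then
\[
\E\big(\mathbb{1}_{F_{k/\infty}(T^*)\le\rho}\mathbb{1}_{\Delta^*=k}\big)=\P(T^*\le q_\rho,\Delta^*=k)=\rho F_k(\infty).
\]
This uses the fact that any interval where $F_k$ is flat carries no mass of $\{\Delta^*=k\}$. This is the marginal analogue of Lemma~\ref{lemm:ratiocdf}, and it shows that the limiting ratio $b^k_{[0,\rho]}$ equals $\rho$.

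\textbf{Step 3 (passing to the limit; main obstacle).} I must show that $\E\big(\mathbb{1}_{\hat F^{\rm AJ}_{k/\infty}(T^*)\le\rho}\mathbb{1}_{\Delta^*=k}\big)\to\rho F_k(\infty)$. The indicator converges pointwise to $\mathbb{1}_{F_{k/\infty}(T^*)\le\rho}$ except on the event $\{F_{k/\infty}(T^*)=\rho,\ \Delta^*=k\}$. That event is exactly where uniform convergence of the ratio fails to decide the indicator, and controlling it is the delicate step. Under the law of $T^*$ restricted to $\{\Delta^*=k\}$, the variable $F_{k/\infty}(T^*)$ is uniform by Step~2, so this boundary event has probability zero. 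Dominated convergence (bound $1$) then gives the claim. Dividing by the convergent denominator yields $\hat b^k_{[0,\rho]}\to\rho$ for every $\rho$.

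\textbf{Step 4 (integrating over $\rho$).} The integrand $|b^k_{[0,\rho]}-\rho|^\alpha$ is bounded, since the bucket ratio is at most $1/\hat F^k_{\rm AJ}(\infty)$, which stays bounded for large samples. A second application of dominated convergence over $\rho\in[0,1]$, summed over the finitely many $k$, gives $D^{NC}_{cal}\to 0$. This is asymptotic NC CR D-calibration.
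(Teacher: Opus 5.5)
Your argument is correct in substance, but it is not the route the paper takes. The paper gets the corollary by citing Lemma~\ref{lemm:ajconsistent} and then the second half of Theorem~\ref{thm:dcnc}. That theorem's proof is deferred to the proof of Theorem~\ref{thm:bigth}, which only applies dominated convergence to obtain $\E|\hat F^n_k(\tau)-F_k(\tau)|\to 0$ and then asserts that ``exactly the same computations'' as for the oracle go through.

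You instead verify the claim directly for $\hat F^k_{\rm AJ}$, and in doing so you supply two things the paper's sketch skips.
\begin{itemize}
\item The bucket $\mathbb{1}_{\hat F_{k/\infty}(T^*)\le\rho}\mathbb{1}_{\Delta^*=k}$ is a discontinuous functional of $\hat F$, so $L^1$ convergence of the CIF does not transfer to it automatically. Your observation that the boundary event $\{F_{k/\infty}(T^*)=\rho,\ \Delta^*=k\}$ is null is exactly what licenses dominated convergence on the indicators. It holds because the limiting ratio has a continuous sub-distribution on $\{\Delta^*=k\}$. You also treat the random denominator explicitly.
\item Your remark that $\hat F^k_{\rm AJ}$ does not depend on $\mathbf{x}$ is what makes marginal consistency sufficient. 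If $\hat F$ depended on $\mathbf{x}$, agreement of its population average with $F_k$ would not by itself pin down the law of $\hat F_{k/\infty}(T^*\mid\mathbf{X})$ on $\{\Delta^*=k\}$.
\end{itemize}
The paper's route buys brevity and nominal generality. Yours gives an argument in which every step actually closes for AJ.

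Two points need tidying.
\begin{itemize}
\item Lemma~\ref{lemm:ajconsistent}, as stated, gives pointwise convergence in probability. It does not give $\sup_t|\hat F^k_{\rm AJ}(t)-F_k(t)|\to 0$ almost surely. You can recover what you need as follows: along almost-surely convergent subsequences (diagonalising over a countable dense set plus the endpoint), apply P\'olya's theorem. That theorem says monotone functions converging pointwise on a dense set to a continuous bounded limit converge uniformly. Your argument then yields $D^{NC}_{cal}\to 0$ in probability. Alternatively, cite a strong uniform consistency result for KM/AJ.
\item Consistency at the endpoint requires the observed time $T$ to exceed it with positive probability, i.e.\ the censoring support must reach far enough. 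Your reading of $\infty$ as $t_{\max}$ is harmless: replacing $\infty$ by $t_{\max}$ in both the ratio and the denominator still gives $\P(F_k(T^*)\le\rho F_k(t_{\max}),\,\Delta^*=k)=\rho F_k(t_{\max})$ exactly for $\rho<1$. But you should state it as an assumption, not as a reading of the definition.
\end{itemize}
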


For a given population and a given model, we also have to define the estimation of the D-calibration according to the population (named $\hat{D}$-calibration). 

\begin{definition}[CR $\hat{D}$-calibration - No Censoring]
    Thus, in the case of non-censorship, for a given population, we define: 
    \begin{align}
         && b^k_{[0, \rho]} = \frac{{\rm card}\{i\mid\delta_i=k~~{\rm and}~~ \hat F_{k/\infty}(t_i\mid \mathbf{x}_i)\in [0, \rho]\}}{{\rm sum}\{\hat F_k(\infty\mid \mathbf{x}_i)\}}
    \end{align}
    Then, we can define the estimation of the D-calibration as: 
     \begin{flalign}
        \hat{D}-\text{calibration estimation} && \hat{D} \defeq \sum_{k=1}^K \, \left\lVert b^k_{[0, \rho]}  - \rho \right\rVert_{\alpha} && 
\end{flalign}
\end{definition}

\section{PROOFS}
For the following proofs, we will use the definition of the \citet{aalen_empirical_1978} estimator (written in Appendix \ref{def:aj}) given an event of interest. We first recall that, for each event of interest $k$, the \citet{aalen_empirical_1978} estimator is consistent. 
\begin{lemma}\label{lemm:ajconsistent}
    For each event of interest $k \in \llbracket 1, K \rrbracket$, $\hat{F}^k_{\AJ}$ is consistent. 
\end{lemma}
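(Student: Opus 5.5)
To prove Lemma~\ref{lemm:ajconsistent}, I will write the Aalen--Johansen estimator as a functional of a few empirical processes and then combine uniform laws of large numbers with a continuity argument. Let $\Lambda_k(t)=\int_0^t \frac{dF_k(s)}{S(s^-)}$ be the marginal cause-specific cumulative hazard. Define the empirical at-risk process $\bar Y_n(t)=\frac1n\sum_i \mathbb{1}_{t_i\ge t}$ and the empirical counting processes $\bar N_{k,n}(t)=\frac1n\sum_i \mathbb{1}_{t_i\le t,\delta_i=k}$. With these, the estimator in Definition~\ref{def:aj} becomes
\[
\hat F^k_{\AJ}(\tau)=\int_0^\tau \hat S_{KM}(s^-)\,\frac{d\bar N_{k,n}(s)}{\bar Y_n(s)},
\qquad
\hat S_{KM}(\tau)=\prod_{s\le\tau}\Big(1-\frac{d\bar N_n(s)}{\bar Y_n(s)}\Big),
\]
where $\bar N_n=\sum_k \bar N_{k,n}$.

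The steps are as follows.

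First, by Assumption~\ref{ass:exchangeability} the observations are i.i.d. The Glivenko--Cantelli theorem then gives, almost surely,
\[
\sup_t|\bar Y_n(t)-y(t)|\to0 \quad\text{and}\quad \sup_t|\bar N_{k,n}(t)-N_k(t)|\to0,
\]
where $y(t)=\P(T\ge t)$ and $N_k(t)=\P(T\le t,\Delta=k)$.

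Second, I identify the limits through Assumption~\ref{ass:noninfocensoring}. Integrating over $\mathbf{X}$, the sub-distribution of observed type-$k$ events satisfies $dN_k(t)=G(t^-)\,dF_k(t)$ and $y(t)=G(t^-)S(t^-)$. This step needs care because the censoring is independent only conditionally on $\mathbf{X}$. Marginally, the identification holds if censoring does not depend on $\mathbf{X}$, i.e.\ $G(t|\mathbf{x})=G(t)$. I will state this as the condition under which marginal consistency of $\hat F^k_{\AJ}$ holds, since otherwise the marginal AJ estimator is generally biased. With it, $\int_0^\tau dN_k/y=\Lambda_k(\tau)$.

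Third, I restrict to $[0,\tau_0]$ with $y(\tau_0)>0$. There $1/\bar Y_n\to 1/y$ uniformly, so $\int_0^\tau d\bar N_{k,n}/\bar Y_n\to\Lambda_k(\tau)$ uniformly in $\tau\le\tau_0$. To see this, split
\[
\int \frac{1}{\bar Y_n}\,d\bar N_{k,n}-\int\frac1y\,dN_k
=\int\Big(\frac1{\bar Y_n}-\frac1y\Big)d\bar N_{k,n}+\int\frac1y\,d(\bar N_{k,n}-N_k).
\]
The first term is small by uniform convergence of $1/\bar Y_n$. The second term is small by integration by parts, since $1/y$ has bounded variation on $[0,\tau_0]$.

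Fourth, the product-integral map $\Lambda\mapsto\prod(1-d\Lambda)$ is continuous for the sup-norm on functions of uniformly bounded variation (Gill--Johansen). Hence $\hat S_{KM}\to S$ uniformly on $[0,\tau_0]$, which is also the classical consistency of Kaplan--Meier. A final application of the same splitting argument to $\int \hat S_{KM}(s^-)\,d\hat\Lambda_{k,n}(s)$ gives $\sup_{\tau\le\tau_0}|\hat F^k_{\AJ}(\tau)-F_k(\tau)|\to0$ almost surely. Assumption~\ref{ass:continous} (continuity) removes any issue with jumps at atoms.

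The main obstacle is the tail. Uniform convergence is only available on intervals where $y>0$, so the result should be stated either on $[0,t_{max}]$ with $\P(T\ge t_{max})>0$, or up to the end of follow-up via a Stute--Wang-type argument. The difficulty is essentially the same as for Kaplan--Meier. I would state the lemma on $[0,t_{max}]$, the range used for the ${\rm cal}_K^\alpha$ integrals, and cite \citet{aalen_empirical_1978} for the extension.
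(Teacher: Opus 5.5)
Your proposal is correct. It follows the same three-step skeleton as the paper's proof: Kaplan--Meier consistency, convergence of the empirical cause-specific hazard, and passage from the sum in Definition~\ref{def:aj} to $\int_0^\tau S(s^-)\,d\Lambda_k(s)$. The execution, however, is genuinely different. The paper cites Wang (1987) for uniform consistency of Kaplan--Meier on $[0,M]$. It then argues pointwise that $d_k(t_i)/Y(t_i)\to\lambda_k(t_i)\,dt$ by the law of large numbers, and that a Riemann-type sum converges to the integral. You work instead with cumulative processes: Glivenko--Cantelli, your splitting with integration by parts against the bounded-variation function $1/y$, and Gill--Johansen continuity of the product integral. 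That last step also gives you Kaplan--Meier consistency rather than citing it. Your route buys rigor and a stronger conclusion, namely almost-sure convergence uniformly on $[0,\tau_0]$. The paper's step~2 is only heuristic: under Assumption~\ref{ass:continous} a single ratio $d_k(t_i)/Y(t_i)$ is either $0$ or $1/Y(t_i)$, so it does not literally converge to anything. That step also presumes a hazard density, and the paper's final display keeps the hat on $S$ in the limit. More importantly, your identification step makes explicit an assumption the paper's argument uses silently. Equating the observed crude hazard with the true cause-specific hazard requires censoring to be independent of $(T^*,\Delta^*)$ marginally. This does not follow from the conditional independence of Assumption~\ref{ass:noninfocensoring}. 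The paper only invokes the stronger ``$C\indep T^*$'' later, inside the proof of Theorem~\ref{thm:bigth}. Your sufficient condition $G(t\mid\mathbf{x})=G(t)$, combined with conditional independence, gives exactly this marginal independence, so it is a correct fix; without some such condition the lemma fails in general. Your tail restriction $\P(T\ge\tau_0)>0$ is also the right one, whereas the paper's condition $\P(T^*\ge M,\Delta^*=k)>0$ ignores the censoring factor. Note, though, that neither argument reaches $\tau=\infty$. The CR D-calibration corollary normalizes by $F_k(\infty\mid\cdot)$, so that use of the lemma additionally requires the censoring support to extend beyond that of $T^*$.
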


\begin{proof}
    We fix an event of interest $k$. We consider the training dataset $\mathcal{D}_{train}$ with $n$ samples. In this proof, we write the \citet{aalen_empirical_1978} estimator for an event of interest as $F_n^{k}$ (resp. $S_n$). \\
    Given the definition of \citet{aalen_empirical_1978}, the proof can be detailed in 3 main points. 
    \begin{enumerate}
        \item \textbf{Consistency of the survival function:} \citet{kaplan_nonparametric_1958} estimator is consistent \cite{wang_km_consi_1987} in the sense that the survival function converges in probability for the uniform norm toward the oracle function: for all $M$ such that $\P(T^*\geq M,~\Delta^*=k)>0$, $$\forall\varepsilon>0, \lim_{n\to \infty} \P(\sup_{\tau\leq M}|S_n(\tau) - S(\tau)| \leq\varepsilon) = 1.$$
        \item \textbf{Convergence of the instant risk rate.} With the law of large numbers, we have: $\frac{d_k(t_i)}{Y(t_i)}$ that converges almost surely toward the instant risk of contracting the event of interest, \emph{i.e.}, almost surely $$\frac{d_k(t_i)}{Y(t_i)} \longrightarrow \lambda_k(t_i)dt = dt \,\lim_{dt \longrightarrow 0} \frac{\P(t_i \leq T^*\leq t_i + dt, \Delta=k) | T^* \geq t_i)}{dt}$$
        \item \textbf{Convergence of the sum to the integral.} Finally, because we have an integrable function (Riemman), the sum converges toward the integral.  
    \end{enumerate}
    With those steps, we obtain the convergence in probabilities: $$\hat F^k_{\rm AJ}(\tau) = \sum_{t_i \leq \tau} \hat{S}(t_i^-) \frac{d_k(t_i)}{Y(t_i)} \longrightarrow \int_{0}^\tau  \hat{S}(t) \lambda_k(t)dt = F^k(\tau)$$
    Finally, we have obtain that $$\lim_{n\to\infty} \P(|F_n^k(\tau) - F_k(\tau)| = 0) = 1 .$$ The Aalen-Johansen estimator is marginally consistent. 
\end{proof}

\subsection{CR D-calibration without censoring mechanism}
\lemmuniformratio*

\begin{proof}[Proof: distribution of $F_k/F_k^\infty$]\label{proof:lemmaratio}
    We prove that $F_{k/\infty}(T^*\mid \mathbf{X}) | \Delta^* =k , \mathbf{X} \sim \mathcal{U}(0, 1)$. 
    \begin{align}
        \forall t \in [0, 1]&, \\
        \P(\frac{F_k(T^*|\mathbf{X})}{F_k(\infty| \mathbf{X})} \leq t|\Delta^* = k, \mathbf{X}) &= \P(T^* \leq F_k^{-1} (t ~ F_k(\infty| \mathbf{X})|\mathbf{X})|\Delta^* = k, \mathbf{X}) \\
        &= \P(T^* \leq F_k^{-1} (t ~ F_k(\infty| \mathbf{X})|\mathbf{X})|\Delta^* = k, \mathbf{X}) \\
        &= \frac{\P(T^* \leq F_k^{-1} (t ~ F_k(\infty| \mathbf{X})|\mathbf{X}), \Delta^* = k| \mathbf{X})}{ \P(\Delta^*=k|\mathbf{X})}\\
        &= \frac{F_k(F_k^{-1}(t ~ F_k(\infty| \mathbf{X}) |\mathbf{X}))}{F_k(\infty| \mathbf{X})} \\
        &= t
    \end{align}

    Thus, this result is constant for all $\mathbf{X}$.
\end{proof}

\thdcnc*

\begin{proof}
    The proof is a special case of the setting with censoring. See Appendix \ref{proof:thproperness} for the whole proof. 
\end{proof}

\subsection{CR D-calibration with the censoring mechanism}

Let the bucket $B_{[a,b]}$ be an extension of $B_{[0,\rho]}$ defined by:
\[
B_{[a,b]} \overset{\rm def}{=} B_{[0,b]} - B_{[0,a]}.
\]
\begin{restatable}[Expectation with the oracle function]{lemma}{lemmexpect}\label{lemm:exporacle}
For the oracle function, we prove: $$\E(B_{[a,b]}\mid \mathbf{X}) = (b-a)F_k(\infty\mid \mathbf{X}) . $$
\end{restatable}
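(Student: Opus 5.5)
The plan is to prove the identity for buckets of the form $[0,\rho]$, namely $\E(B_{[0,\rho]}\mid\mathbf{X}) = \rho\,F_k(\infty\mid\mathbf{X})$, and then get the general case by linearity. Since $B_{[a,b]} = B_{[0,b]} - B_{[0,a]}$, this gives $\E(B_{[a,b]}\mid\mathbf{X}) = (b-a)F_k(\infty\mid\mathbf{X})$ directly.

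Fix $\mathbf{X}$ and write $q \defeq F_k(\infty\mid\mathbf{X})$, which is positive by Assumption \ref{ass:nonnul}. To turn the event $F_{k/\infty}(t\mid\mathbf{X})\le\rho$ into a time threshold, I will set $t_\rho \defeq \sup\{t : F_k(t\mid\mathbf{X}) \le \rho q\}$.
\begin{itemize}
\item Because $F_k$ is nondecreasing and continuous (Assumption \ref{ass:continous}), $F_k(t_\rho\mid\mathbf{X}) = \rho q$ when $t_\rho<\infty$.
\item For the same reason, $F_{k/\infty}(t\mid\mathbf{X})\le\rho$ holds if and only if $t\le t_\rho$.
\item Flat pieces of $F_k$ cause no difficulty, because I use the supremum rather than an inverse. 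The case $\rho=1$, where $t_\rho=\infty$, is handled the same way.
\end{itemize}

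The computation conditions on $C=c$ as well as $\mathbf{X}$. By Assumption \ref{ass:noninfocensoring}, the conditional law of $(T^*,\Delta^*)$ is then unchanged. Ties $T^*=C$ have probability zero by continuity, so I ignore them.

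For the observed term, $\{\Delta=k\} = \{T^*<c,\ \Delta^*=k\}$. Its expectation is therefore
\[
\P\bigl(T^*\le \min(c,t_\rho),\ \Delta^*=k \mid \mathbf{X}\bigr) = F_k\bigl(\min(c,t_\rho)\mid\mathbf{X}\bigr).
\]

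For the censoring term, the indicator equals $\mathbb{1}_{c<T^*}\mathbb{1}_{c\le t_\rho}$. Taking expectation gives $S(c\mid\mathbf{X})\,\mathbb{1}_{c\le t_\rho}$. Multiplying by the weight $\bigl(\rho q - F_k(c\mid\mathbf{X})\bigr)/S(c\mid\mathbf{X})$, the survival factor cancels, leaving $\bigl(\rho q - F_k(c\mid\mathbf{X})\bigr)\mathbb{1}_{c\le t_\rho}$.

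Adding the two terms gives two cases:
\begin{itemize}
\item If $c\le t_\rho$, the sum is $F_k(c) + \rho q - F_k(c) = \rho q$.
\item If $c>t_\rho$, the sum is $F_k(t_\rho) + 0 = \rho q$.
\end{itemize}
So $\E(B_{[0,\rho]}\mid\mathbf{X},C) = \rho q$ for every value of $C$. Integrating over the conditional law of $C$ given $\mathbf{X}$ gives the claim.

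The main obstacle is the censoring weight, which divides by $S(c\mid\mathbf{X})$. I will use the convention that the censoring term is $0$ whenever $S(c\mid\mathbf{X})=0$. This is harmless, since the indicator $\mathbb{1}_{c<T^*}$ then vanishes almost surely, but it must be stated explicitly so that the cancellation step is legitimate. A secondary point to verify carefully is the equivalence between $F_{k/\infty}(\cdot)\le\rho$ and $\cdot\le t_\rho$ on plateaus of $F_k$. On a plateau, though, $F_k$ takes the same value, so both terms are unaffected.
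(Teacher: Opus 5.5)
Your argument is correct, and it is organized differently from the paper's. The paper works backwards from the target. It takes $\mathbb{P}\bigl(F_{k/\infty}(T^*\mid\mathbf{X})\in[a,b],\,\Delta^*=k\mid\mathbf{X}\bigr)=(b-a)F_k(\infty\mid\mathbf{X})$ from Lemma~\ref{lemm:ratiocdf}. It then splits this probability into an uncensored piece $P_1$ and two censored pieces $P_2,P_3$, according to whether $F_{k/\infty}(C\mid\mathbf{X})$ lies in $[a,b]$ or below $a$. By conditioning on $C$ and using the law of $(T^*,\Delta^*)$ given $T^*>C$, it identifies each censored piece with the expectation of a weighted indicator, which yields the three-term formula \eqref{eq:bucket_ab}.

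You work forwards instead. You reduce to $[0,\rho]$ by linearity, which is legitimate since $B_{[a,b]}$ is defined as $B_{[0,b]}-B_{[0,a]}$. You compute $\E(B_{[0,\rho]}\mid\mathbf{X},C)$ directly, cancel $S(c\mid\mathbf{X})$ against $\mathbb{P}(T^*>c\mid\mathbf{X})$, and let the two contributions telescope. The identity $F_k(t_\rho\mid\mathbf{X})=\rho F_k(\infty\mid\mathbf{X})$ does the work of Lemma~\ref{lemm:ratiocdf}.

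The two routes buy different things. The paper's route explains the construction: it exhibits the censoring weight as the conditional probability that a censored subject's latent event falls in the bucket. Yours is shorter and gives the stronger fact that the bucket is unbiased conditionally on $C$ as well. It is also more careful on three points:
\begin{itemize}
\item $t_\rho$ is a well-defined generalized inverse on plateaus of $F_k$, where the paper writes $F_k^{-1}$ without comment.
\item You treat $S(c\mid\mathbf{X})=0$ explicitly. In that case $F_k(c\mid\mathbf{X})=F_k(\infty\mid\mathbf{X})$, so $c\le t_\rho$ forces $\rho=1$ and your cancelled expression $(\rho q-F_k(c\mid\mathbf{X}))\mathbb{1}_{c\le t_\rho}$ is $0$, consistent with your convention.
\item Every term of $B_{[0,\rho]}$ is nonnegative with conditional mean $\rho q\le 1$. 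So the possibly unbounded weight $1/S(C\mid\mathbf{X})$ causes no integrability problem in the tower and linearity steps.
\end{itemize}

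One thing you should state explicitly. Conditioning on $C=c$ without altering the law of $(T^*,\Delta^*)$ needs $(T^*,\Delta^*)\perp\!\!\!\perp C\mid\mathbf{X}$. This is slightly more than Assumption~\ref{ass:noninfocensoring} as literally written ($T^*\perp\!\!\!\perp C\mid\mathbf{X}$). The paper's computation of $P_2$ and $P_3$ uses the same strengthening, so this is not a weakness of your route relative to it.
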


\begin{proof}\label{proof:expec}
Starting from 
\begin{align}
    {\mathbb P}(T^*\leq t\mid\Delta^*=k,~\mathbf{X})&=\frac{F_k(t\mid \mathbf{X})}{F_k(\infty\mid \mathbf{X})} \\
    {\mathbb P}\left(\frac{F_k(T^*\mid \mathbf{X})}{F_k(\infty\mid \mathbf{X})}\in[a,b]\mid\Delta^*=k,~\mathbf{X}\right) &=b-a \quad \text{Assumption \ref{ass:continous}, Lemma \ref{lemm:ratiocdf}} \\
    \text{And} \quad {\mathbb P}\left(\frac{F_k(T^*\mid \mathbf{X})}{F_k(\infty\mid \mathbf{X})}\in[a,b],~\Delta^*=k\mid \mathbf{X}\right)& =(b-a)F_k(\infty\mid \mathbf{X}) \quad \text{Bayes' Theorem}
\end{align}

We can't work with (28) because we don't have access to ${\mathcal L}(T, \Delta\mid \Delta^*=k)$ because of the censored individuals, so we'll work with (29).

First, we study the left part of the equality, we will study several cases: $T^*<C$ (we have access to the expression) and $C\leq T^*$ (we will estimate the expression). To make the computations more readable, we will also separate the cases: when $\frac{F_k(C\mid \mathbf{X})}{F_k(\infty\mid \mathbf{X})}<a$ and $\frac{F_k(C\mid \mathbf{X})}{F_k(\infty\mid \mathbf{X})}\in [a,b]$, because $\frac{F_k(C\mid \mathbf{X})}{F_k(\infty\mid \mathbf{X})}>b$ is not possible (as $\frac{F_k(T^*\mid \mathbf{X})}{F_k(\infty\mid \mathbf{X})}\in[a,b]$) \emph{i.e.}, we can decompose the computation as:
\begin{multline}
    {\mathbb P}(\frac{F_k(T^*\mid \mathbf{X})}{F_k(\infty\mid \mathbf{X})}\in[a,b],~\Delta^*=k\mid \mathbf{X})
= \underbrace{{\mathbb P}(\cdots,~T^*<C\mid \mathbf{X})}_{\displaystyle P_1} \\ 
+ \underbrace{{\mathbb P}(\cdots,~C\leq T^*, \frac{F_k(C\mid \mathbf{X})}{F_k(\infty\mid \mathbf{X})}\in[a,b]\mid \mathbf{X})}_{\displaystyle P_2} \\ 
+ \underbrace{{\mathbb P}(\cdots,~C\leq T^*, \frac{F_k(C\mid \mathbf{X})}{F_k(\infty\mid \mathbf{X})}<a\mid \mathbf{X})}_{\displaystyle P_3}
\end{multline}

Now, $P_1$ is just:
$$
P_1 = {\mathbb P}(\frac{F_k(T\mid \mathbf{X})}{F_k(\infty\mid \mathbf{X})}\in[a,b],~\Delta=k\mid \mathbf{X})
$$

For the other parts $P_2$ and $P_3$,  we write the shifted distribution $T^*\mid \Delta^*=k, X$ :
$$
\mathbb{P}[T^*\leq t,~\Delta^*=k\mid T^*> s,~\mathbf{X}] = \frac{F_k(t\mid \mathbf{X}) - F_k(s\mid \mathbf{X})}{S^*(s\mid \mathbf{X})}
$$
And: 
\begin{align}
    \P(F_k(T^*\mid \mathbf{X})\leq u,~\Delta^*=k\mid T^*> s,~\mathbf{X}) &= \P(T^*\leq F_k^{-1}(u|\mathbf{X}), ~\Delta^*=k \mid T^*> s,~\mathbf{X})  \quad \text{\ref{ass:continous}} \\
    &= \P(s \leq T^*\leq F_k^{-1}(u|\mathbf{X}), ~\Delta^*=k \mid~\mathbf{X}) \frac{1}{S(s|\mathbf{X})} \\
    &= \frac{u - F_k(s\mid \mathbf{X})}{S(s\mid \mathbf{X})}\quad {\rm if}\quad F_k(s\mid \mathbf{X})\leq u\leq F_k(\infty\mid \mathbf{X})
\end{align}

We rewrite $P_2$ conditioning by $C$ and $C\leq T^*$ and we use Assumption \ref{ass:noninfocensoring} \emph{i.e.} $T^*\perp\!\!\!\perp C\mid \mathbf{X}$ :
\begin{align}
P_2 &= {\mathbb P}\left(\frac{F_k(T^*\mid \mathbf{X})}{F_k(\infty\mid \mathbf{X})}\leq b,~\Delta^*=k, C\leq T^*, \frac{F_k(C\mid \mathbf{X})}{F_k(\infty\mid \mathbf{X})}\in[a,b]\mid \mathbf{X}\right)\\
&= {\mathbb E}\left({\mathbb P}(\frac{F_k(T^*\mid \mathbf{X})}{F_k(\infty\mid \mathbf{X})}\leq b,\Delta^*=k\mid C, C\leq T^*, \mathbf{X}){\mathbb 1}(C\leq T^*, \frac{F_k(C\mid \mathbf{X})}{F_k(\infty\mid \mathbf{X})}\in[a,b])\mid \mathbf{X}\right)\\
&= {\mathbb E}\left(\frac{F_k(\infty\mid \mathbf{X})b - F_k(C\mid \mathbf{X})}{S^*(C\mid \mathbf{X})}{\mathbb 1}(C\leq T^*, \frac{F_k(C\mid \mathbf{X})}{F_k(\infty\mid \mathbf{X})}\in[a,b])\mid \mathbf{X}\right)
\end{align}

Idem, with $P_3$, we obtain that:
$$
P_3 = {\mathbb E}(\frac{F_k(\infty\mid \mathbf{X})(b - a)}{S^*(C\mid \mathbf{X})}{\mathbb 1}(C\leq T^*, \frac{F_k(C\mid \mathbf{X})}{F_k(\infty\mid \mathbf{X})}<a)\mid \mathbf{X})
$$

So, when we recall the definition of $B_{[a,b]}$: 
\begin{multline}\label{eq:bucket_ab}
    B_{[a,b]}\overset{\rm def}{=}{\mathbb 1}(\frac{F_k(T\mid \mathbf{X})}{F_k(\infty\mid \mathbf{X})}\in[a,b],~\Delta=k) \\
+ \frac{F_k(\infty\mid \mathbf{X})b - F_k(C\mid \mathbf{X})}{S^*(C\mid \mathbf{X})}{\mathbb 1}(C\leq T^*, \frac{F_k(C\mid \mathbf{X})}{F_k(\infty\mid \mathbf{X})}\in[a,b]) \\
+ \frac{F_k(\infty\mid \mathbf{X})(b - a)}{S^*(C\mid \mathbf{X})}{\mathbb 1}(C\leq T^*, \frac{F_k(C\mid \mathbf{X})}{F_k(\infty\mid \mathbf{X})}<a)\in\sigma(T, \Delta, \mathbf{X})
\end{multline}

We do obtain that:
$$
{\mathbb E}(B_{[a,b]}\mid \mathbf{X}) = (b-a)F_k(\infty\mid \mathbf{X})
$$
\end{proof}

\bigth* 

\begin{proof}\label{proof:thproperness}
    For the oracle functions, we have Lemma \ref{lemm:exporacle}: ${\mathbb E}(B_{[0, \rho]}\mid \mathbf{X}) = \rho F_k(\infty\mid \mathbf{X}) \Longrightarrow \E_\mathbf{X}({\mathbb E}(B_{[0, \rho]}\mid \mathbf{X})) = \rho \E(F_k(\infty)|\mathbf{X})$. So: 
    \begin{align}
        D^{CR} &= \sum_{1}^K \left\lVert\frac{\mathbb{E}_{T, \Delta}(B^k_{[0, \rho]})}{\E_\mathbf{X}(\hat{F}_k(\infty |\mathbf{X}))} - \rho\right\rVert_{\alpha} \\ 
        &= \sum_{1}^K \left\lVert\frac{\rho \E_\mathbf{X}(F_k(\infty|\mathbf{X}))}{\E_\mathbf{X}(F_k(\infty|\mathbf{X}))} - \rho\right\rVert_{\alpha} \\
        &= 0
    \end{align}

    Let's be $\hat{F}^n_k(\tau)$ a marginally consistent estimator of the cumulative incidence functions trained of $n$ samples. \\
    Let's suppose the bigger assumption that $C \indep T^*$.
    Because we have $\forall k, \forall \tau, \hat{F}^n_k(\tau) \to F_k(\tau)$, this convergence is point-wise. \\
    Moreover, we have $\forall k, \forall \tau, \mid \hat{F}^n_k(\tau) \mid \leq 1 $ by definition of $\hat{F}^n_k(\tau)$. \\
    So, by Lebesgue's dominated convergence theorem, we obtain that: 
    $$\forall k, \forall \tau, \quad \lim_{n\to\infty} \E(| \hat{F}^n_k(\tau) - F_k(\tau) | ) = 0$$
    With that, we obtain exactly the same computations as before with the marginal estimator. 
\end{proof}

\subsection{Links between calibration measures}
\kcaldcal*
\begin{proof}\label{proof:links}
    We will do the proof of the second part in the asymptotic case ; the non-asymptotic case is similar but simpler. \\
    Consider the sequence of estimators indexed by $n\to\infty$, which is asymptotically CR D-calibrated. The asymptotic CR D-calibrated refers to the property that 
    $$\forall \rho \in [0, 1], \E( B^{k,n}_{[0, \rho]})/\hat{F}^n_k(\infty) \longrightarrow_{n\to\infty} \rho .$$
   A straightforward computation following the same ideas as in Lemma \ref{lemm:exporacle} gives us
   \[
   \E( B^{k,n}_{[0, \rho]}) = \P(\hat{F}^n_{k/\infty}(T^*)\leq \rho,\Delta^*=k),
   \]
   so we obtain
   \begin{align}\label{eq:proof:thproperness2_1}
       \forall \rho \in [0, 1], \P(\hat{F}^n_{k/\infty}(T^*)\leq \rho,\Delta^*=k)/\hat{F}^n_k(\infty) \longrightarrow_{n\to\infty} \rho .
   \end{align}
   Applying \eqref{eq:proof:thproperness2_1} to $\rho=1$ yields $\P(\Delta^*=k)/\hat{F}^n_k(\infty) \to 1$ and so $\hat{F}^n_k(\infty)\to\P(\Delta^*=k)$ as $n\to\infty$.
   Hence, substituting the limit of the denominator into \eqref{eq:proof:thproperness2_1},
   \begin{align}\label{eq:proof:thproperness2_2}
       \forall \rho \in [0, 1], \P(\hat{F}^n_{k/\infty}(T^*)\leq \rho\mid\Delta^*=k) \longrightarrow_{n\to\infty} \rho .
   \end{align}
   Since the CIF of $T^*\mid\Delta^*=k$ is $F_{k/\infty}$ and the function $\hat{F}^n_{k/\infty}$ is stricly increasing, the equation \eqref{eq:proof:thproperness2_2} is equivalent to
   \begin{align}\label{eq:proof:thproperness2_3}
       \forall \rho \in [0, 1], F_{k/\infty}\circ(\hat{F}^n_{k/\infty})^{-1}(\rho) \longrightarrow_{n\to\infty} \rho .
   \end{align}
   Using the continuity of \(F_{k/\infty}^{-1}\), we have that \((\hat{F}^n_{k/\infty})^{-1}\) converges pointwise to \(F_{k/\infty}^{-1}\). By the classical Dini's theorem on the uniform convergence of monotone functions, this convergence is in fact uniform. Therefore, we can invert the functions and obtain the convergence of \(\hat{F}^n_{k/\infty}\) toward \(F_{k/\infty}\). This concludes the proof.
   
\end{proof}
\begin{figure}[h!]
    \centering
    \includegraphics[width=0.4\linewidth]{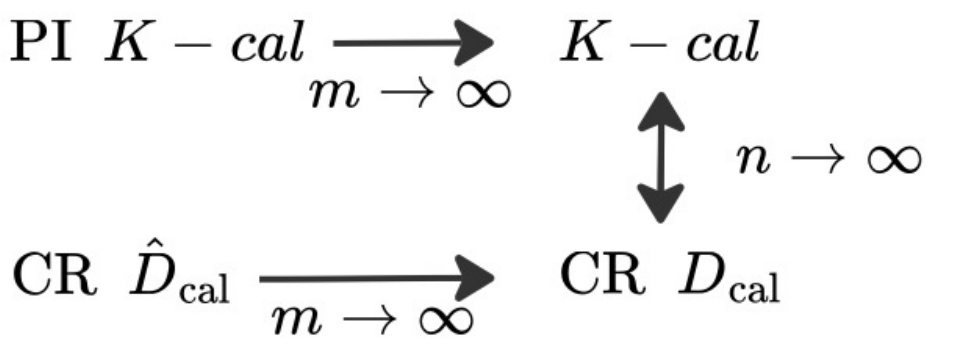}
    \caption{\textbf{Links between all calibrations.} $n$ represents the number of samples in the training set and $m$ represents the number of samples the calibration set. The equivalence between both measures is true with a mild assumption.}
    \label{fig:linkcals}
\end{figure}

\section{TESTING THE CALIBRATION}\label{sec:tests}
Here, we describe the process that we used to design, compute, and understand the limitations of the different tests.
\subsection{Designing the tests}
As proven before (Th. \ref{thm:bigth}), the event-specific bucket $B^k_{[0, .]}$ should follow a uniform distribution. With this, we can evaluate the CR-D-calibration using a \citep{kolmogorov}-\citep{smirnov} (KS) test (with our definition of CR–D–Calibration, taking $\alpha=\infty$ reduces the criterion to the KS statistic). An inspirational  alternative, in \citet{haider_eective_nodate}, the authors discretize $[0, 1]$ into a fixed number of buckets and apply a Pearson $\chi^2$ test. Because testing is conducted separately for each competing event $k$, we adjust for multiplicity across events (\emph{e.g.}, \cite{bonferroni}), acknowledging this yields a conservative decision rule. \\
In the same vein, for the PI-calibrations, we can assess the model's calibration with the KS test between the \citep{aalen_empirical_1978} estimator and the marginal predictions of the estimator. \\
More powerful tests could be defined by exploring the limit distribution, and/or integrating bootstrap techniques to better address the censoring distribution. 

\subsection{Computing the tests}
To compute the proposed calibration tests, we employ the same methodologie for CR-D-calibration and PI-${\rm cal}_{K}^\alpha$-calibration.

For both calibrations, we use a KS test. Here is the description on the scheme for the CR-D-calibration. This can be adapted to the PI-${\rm cal}_{K}^\alpha$-calibration, using the marginal predictions of the PI-estimator (in our case the \citet{aalen_empirical_1978} estimator) instead of the Uniform distribution. 

\begin{itemize}
    \item We predict the CIFs for each event on the $\mathcal{D}_{test}$.
    \item We compute the buckets from the CR-D)calibration. The buckets should follow a uniform distribution.
    \item We make a KS test onto each event to test the null hypothesis that they are drawn from a standard uniform distribution Uniform(0,1).
    \item Since a separate test is conducted for each competing event $k$, the resulting p-values must be adjusted for multiplicity (using the Bonferroni correction) to maintain the overall significance level across all tests. 
    \item The test is considered has passed if the p-values are higher than $0.05 \times K$, with $K$ the number of competing events.  
\end{itemize}

\subsection{Limitations}
For the tests, we want to acknowledge that the tests are not that powerful. Indeed, we see that most models may pass one or several tests while being insufficiently calibrated (as shown in Figure \ref{fig:meanincfunc}). We really want to recommend using the metrics instead of the tests to understand which model is more calibrated, etc. But, in a primary approach, the calibration tests can be applied to have a more precise idea.

\section{RECALIBRATION}
\subsection{Recalibration using AJ-k calibration: Framework}\label{sec:aj-recal}

We describe here our process to recalibrate competing risks models. The same process can be applied to survival models with only one event of interest.

We have $\mathcal{D} = (x_i, t_i, \delta_i)_{1 \leq i \leq N}$. 
Our dataset is split into three different sets: a training set $\mathcal{D}_{train}$, a testing set $\mathcal{D}_{test}$, and a third set: a calibration set $\D_{cal}$. \\
Our recalibration process can be explained as:
\begin{itemize}
    \item We train a given model onto $\mathcal{D}_{train}$. 
    \item At calibration time, we denote $\mathcal{D}_{cal} = (\mathcal{X}_{cal}, \mathcal{Y}_{cal})$. For each event (resp. survival to any event), we predict the incidence function for each individual $\mathbf{x}_i \in \mathcal{X}_{cal}$ at $d$ fixed times $(\tau_j)_{1\leq j \leq d}$. The different incidence functions (resp. survival to any event) for each individual $\hat{F}_k(.|\mathbf{X} = \mathbf{x}_i)$ (resp. $\hat{F}_0(.|\mathbf{X} = \mathbf{x}_i) \defeq \hat{S}(.|\mathbf{X} = \mathbf{x}_i)$) are called individualized incidence functions (IIFs) in the following. We choose the different times $(\tau_\nu)_{1\leq \nu \leq d}$ as the quantiles of the duration of the calibration set. We train an \citet{aalen_empirical_1978} estimator onto $\D_{cal}$ and predict the IIFs at the same times $(\tau_\nu)_{1\leq \nu \leq d}$.
    \item We compute the marginal differences between the \citet{aalen_empirical_1978} estimator and the model on $\D_{cal}$. \emph{I.e.} $\forall k \in [0, K],\forall \tau_j, \, d_{j}^k = \hat{F}_{k}^{AJ}(\tau_j) - \frac{1}{|\D_{cal}|} \sum_{\mathbf{x}\in \mathcal{X}_{conf}} \hat{F}_k(\tau_j |\mathbf{X} = \mathbf{x}_i)$ to approximate $\hat{F}^{AJ}_{k}(\tau_j) - \int_{\mathbf{x}\in \mathcal{X}_{cal}} \hat{F}_k(\tau_j| \mathbf{X}= \mathbf{x}) d\mathbf{x}$. 
    \item Then, on a new set of data $\mathcal{D}_{test}$, we predict the IIFs with the trained model at $(\tau_j)_{1\leq j \leq \nu}$ and obtain $(\hat{F}_k(\tau_j |\mathbf{X} = \mathbf{x}_i))_{1\leq j \leq \nu, \mathbf{x}_i \in \mathcal{X}_{test}, k \in \llbracket 0, K\rrbracket}$. 
    Then, for each $x_i$, we return the re-calibrated probabilities at each time $\tau_j$ defined as $$\forall k \in \llbracket 0, K \rrbracket, \tilde{F}_k(\tau_j| \mathbf{X}= \mathbf{x}_i) = \hat{F}_k(\tau_j| \mathbf{X}= \mathbf{x}_i) - d^k_{j}.$$
\end{itemize}
This framework is close to the one used in conformalized survival analysis \citet{qi_conformalized_2024, farina_doubly_2025}.

\subsection{Recalibration using temperature scaling}\label{sec:temp-scaling}
Here, we explain how we adapted the temperature scaling framework \cite{guo_calibration_2017} for recalibration in the competing risks setting, re-using the AJ-k recalibration framework.

This method works as follows:
\begin{itemize}
    \item We train a given model onto $\mathcal{D}_{cal}$
    \item We take the quantile of the duration distribution on $\mathcal{D}_{cal}$), that gives us a time grid on which we will perform the recalibration. This recalibration will be performed independently at each of these times.
    \item At each fixed time point $\tau$, a separate temperature scaling model is trained. This typically involves learning a single scalar parameter, $\beta(\tau)$ (the "temperature"), which is applied multiplicatively to the logits (or linear scores) of the original model's output before the final activation function (a softmax).
    \item The true or "actual proportion" of events at the chosen time point is estimated using a non-parametric method, such as the  \citep{aalen_empirical_1978} estimator or another form of marginally consistent PI estimator. This estimator provides the gold-standard, non-parametric survival probability at that specific time, serving as the calibration target for the temperature scaling model. 
    \item The temperature scaling model is trained to minimize the difference (\emph{e.g.}, using a cross-entropy loss) between the original model's marginal predictions and the PI-estimator's output for that same time $\tau$.
    \item To summarize, you are not scaling each of the cumulative incidence function with one temperature; you are finding an optimal temperature parameter $\beta(\tau)$ for each time $\tau$ to align the model's predicted marginal probability with the non-parametric marginal probability.
\end{itemize}

\section{EXPERIMENTAL PART}\label{sec:moreresults}
\subsection{Experimental details}
As said in the main part, we compare different state-of-the-art competing risks models. 
All models have been trained with their default parameters on an internal cluster (40 CPUs, 252 Gb of RAM, 4 NVIDIA Tesla V100 GPUs). The implementation was done in Python, using the lifelines library for the \citet{aalen_empirical_1978} estimator, the Pycox library \url{https://github.com/havakv/pycox} for the DeepHit model \citet{lee_deephit_2018}, the "cmprsk" from the R library using rpy2 to train the Random Survival Forests model \cite{ishwaran_random_2008} and the hazardous Python library \url{https://github.com/soda-inria/hazardous} to train SurvTRACE \cite{wang_survtrace_2022} and compute the C-index, and the IBS. The calibration metrics in survival analysis were computed thanks to the SurvivalEval Python library \url{https://github.com/shi-ang/SurvivalEVAL}. \\
Following \citet{candes_conformalized_2023}, to run the experiments, we have divided our dataset into three different dataset: 40\% was used for the training set ($\mathcal{D}_{train}$), 40\% to re-calibrate the models ($\mathcal{D}_{cal}$) and 20\% for the test set ($\mathcal{D}_{test}$).

\subsection{More results}\label{sec:resultsappendix}
Below are all the experimental results for the different datasets. For each dataset, we will show the calibration metrics, and then the IBS and the C-index before and after recalibration.

For the Table \ref{tab:seer100kdcrtest}, we see one limitation of the tests. Although the AJ-${\rm cal}_{K}^\alpha$-calibration is really small (around $10^{-4}$, Fig \ref{fig:metrics_seer100k}), not all tests are positive because of the sample size. A slight difference has an enormous impact on the test result. 

Tables \ref{tab:seer10kcalibperevent} and \ref{tab:seer100kcalibperevent} reveal that, with the exception of DeepHit, the minority class (event 2) of the SEER dataset (Fig. \ref{fig:distribdatasets}) exhibits the largest calibration error across all models. This heterogeneity suggests that current competing risks models generally struggle to maintain good calibration on less frequent events, which represents a significant potential issue.

\begin{figure}[h!]
    \begin{minipage}[b]{0.48\linewidth}
        \centering
    \includegraphics[width=1.05\textwidth]{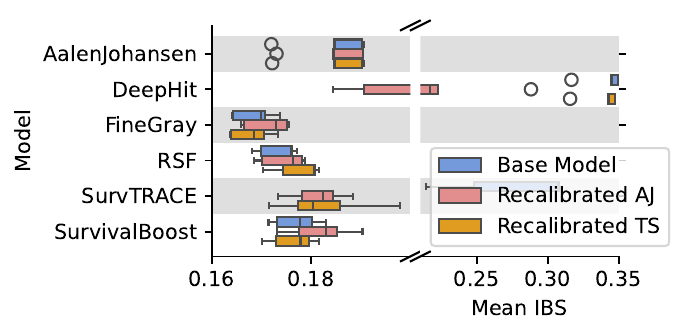}  
        \caption{Mean Integrated Brier Score}
    \end{minipage}
    \begin{minipage}[b]{0.48\linewidth}
        \centering
        \includegraphics[width=\textwidth]{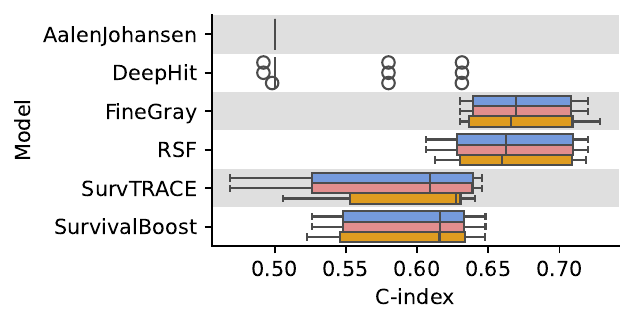}  
        \caption{C-index}
    \end{minipage}
    \caption{\textbf{Synthetic Dataset}, before and after recalibration, usual metrics.}
    \label{fig:metrics_synthe}
\end{figure}

\begin{figure}[h!]
    \begin{minipage}[b]{0.48\linewidth}
        \centering
    \includegraphics[width=\textwidth]{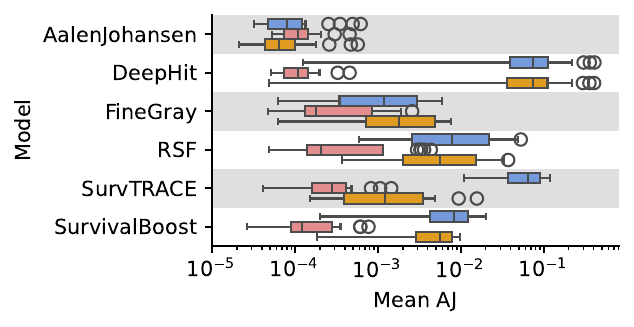}  
        \caption{AJ Calibration}
    \end{minipage}
    \begin{minipage}[b]{0.48\linewidth}
        \centering
        \includegraphics[width=\textwidth]{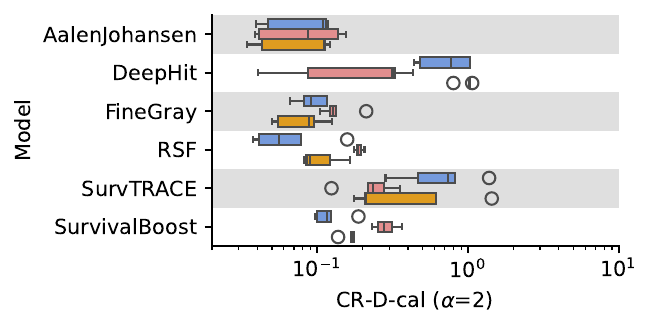}  
        \caption{CR-D-calibration}
    \end{minipage}
    \caption{\textbf{Synthetic Dataset}, before and after recalibration, calibration metrics.}
    \label{fig:calib_metrics_synthe}
\end{figure}

\begin{table}[h!]
    \centering
    \caption{\textbf{Synthetic Dataset} KS test of D-CR-calibration with the multi-test correction. Computations are made over 5 seeds and we show the percentage of tests that have passed. }
    \begin{tabular}{llrr}
    \toprule
    model & recalibration & Test D-CR-calibration & Test AJ-${\rm cal}_{K}^\alpha$-calibration  \\
    \midrule
    AalenJohansen & Base Model & 100\% & 100\% \\
    AalenJohansen & Recalibrated AJ & 60\%  & 100\%\\
    AalenJohansen & Recalibrated TS & 100\% & 100\% \\
    DeepHit & Base Model & 0\% & 20\% \\
    DeepHit & Recalibrated AJ & 40\% & 100\% \\
    DeepHit & Recalibrated TS & 0\%  & 20\%\\
    FineGray & Base Model & 80\%  & 100\%\\
    FineGray & Recalibrated AJ & 0\%  & 100\%\\
    FineGray & Recalibrated TS & 80\% & 100\% \\
    RSF & Base Model & 80\%  & 0\%\\
    RSF & Recalibrated AJ & 0\%  & 100\%\\
    RSF & Recalibrated TS & 80\%  & 0\%\\
    SurvTRACE & Base Model & 0\%  & 0\%\\
    SurvTRACE & Recalibrated AJ & 0\%  & 100\%\\
    SurvTRACE & Recalibrated TS & 0\%  & 60\%\\
    SurvivalBoost & Base Model & 80\%  & 0\%\\
    SurvivalBoost & Recalibrated AJ & 0\% & 100\% \\
    SurvivalBoost & Recalibrated TS & 20\% & 0\%\\
    \bottomrule
    \end{tabular}
    \label{tab:synthedcrtest}
\end{table}

\begin{table}[h!]
    \centering
    \caption{\textbf{Synthetic Dataset}: CR-$\hat{D}$-calibration per event.  We show the $\hat{D}_2^{CR}$. }
    \label{tab:synthecalibperevent}
    \begin{tabular}{llll}
    \toprule
    Model & Event 1 & Event 2 & Event 3 \\
    \midrule
    AalenJohansen & 0.03 ± 0.0 & 0.05 ± 0.03 & 0.04 ± 0.02 \\
    DeepHit & 0.31 ± 0.42 & 0.17 ± 0.22 & 0.2 ± 0.25 \\
    FineGray & 0.03 ± 0.01 & 0.05 ± 0.03 & 0.08 ± 0.02 \\
    RSF & 0.04 ± 0.01 & 0.05 ± 0.04 & 0.06 ± 0.03 \\
    SurvTRACE & 0.3 ± 0.14 & 0.34 ± 0.11 & 0.35 ± 0.13 \\
    SurvivalBoost & 0.09 ± 0.01 & 0.08 ± 0.04 & 0.06 ± 0.02 \\
    \bottomrule
    \end{tabular}
\end{table}

\vspace{10cm}

\begin{figure}[h!]
    \begin{minipage}[b]{0.48\linewidth}
        \centering
    \includegraphics[width=1.05\textwidth]{figs/metabric_ibs.pdf}  
        \caption{Mean Integrated Brier Score}
    \end{minipage}
    \begin{minipage}[b]{0.48\linewidth}
        \centering
        \includegraphics[width=\textwidth]{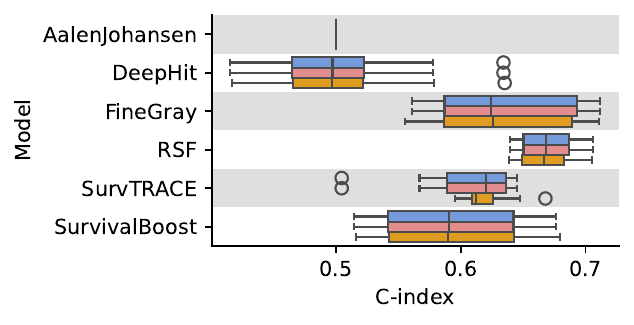}  
        \caption{C-index}
    \end{minipage}
    \caption{\textbf{METABRIC}, before and after recalibration, usual metrics.}
    \label{fig:metrics_metabric}
\end{figure}

\begin{figure}[h!]
    \begin{minipage}[b]{0.48\linewidth}
        \centering
    \includegraphics[width=\textwidth]{figs/metabric_aj_calibration.pdf}  
        \caption{AJ Calibration}
    \end{minipage}
    \begin{minipage}[b]{0.48\linewidth}
        \centering
        \includegraphics[width=\textwidth]{figs/metabric_d_calibration.pdf}  
        \caption{CR-D-calibration}
    \end{minipage}
    \caption{\textbf{METABRIC}, before and after recalibration, calibration metrics.}
    \label{fig:calib_metrics_metabric}
\end{figure}

\begin{table}[h!]
    \centering
    \caption{\textbf{METABRIC} KS test of calibrations with the multi-test correction. Computations are made over 5 seeds and we show the percentage of tests that have passed. }
    \begin{tabular}{llrr}
    \toprule
    Model & Recalibration & Test CR-D-Calibration & Test AJ-${\rm cal}_{K}^\alpha$-calibration\\
    \midrule
    AalenJohansen & Base Model & 100\% & 100\% \\
    AalenJohansen & Recalibrated AJ & 100\% & 100\%\\
    AalenJohansen & Recalibrated TS & 100\% & 100\%\\
    DeepHit & Base Model & 0\% & 0\%\\
    DeepHit & Recalibrated AJ & 100\% & 100\%\\
    DeepHit & Recalibrated TS & 0\% & 80\%\\
    FineGray & Base Model & 100\% & 100\%\\
    FineGray & Recalibrated AJ & 20\% & 100\%\\
    FineGray & Recalibrated TS & 20\% & 100\%\\
    RSF & Base Model & 100\% & 0\%\\
    RSF & Recalibrated AJ & 60\% & 100\%\\
    RSF & Recalibrated TS & 100\% & 0\%\\
    SurvTRACE & Base Model & 0\% & 0\%\\
    SurvTRACE & Recalibrated AJ & 100\% & 100\%\\
    SurvTRACE & Recalibrated TS & 0\% & 0\%\\
    SurvivalBoost & Base Model & 100\% & 80\%\\
    SurvivalBoost & Recalibrated AJ & 60\% & 100\%\\
    SurvivalBoost & Recalibrated TS & 100\% & 80\%\\
    \bottomrule
    \end{tabular}
    \label{tab:metabrictestdcr}
\end{table}

\begin{table}[h!]
    \centering
    \caption{\textbf{METABRIC}: CR-$\hat{D}$-calibration per event.  We show the $\hat{D}_2^{CR}$.}
    \label{tab:metabriccalibperevent}
    \begin{tabular}{lll}
    \toprule
    Model & Event 1 & Event 2 \\
    \midrule
    AalenJohansen & 0.05 ± 0.02 & 0.05 ± 0.01 \\
    DeepHit & 0.34 ± 0.06 & 0.14 ± 0.03 \\
    FineGray & 0.05 ± 0.02 & 0.07 ± 0.02 \\
    RSF & 0.04 ± 0.01 & 0.08 ± 0.03 \\
    SurvTRACE & 0.58 ± 0.06 & 0.77 ± 0.04 \\
    SurvivalBoost & 0.05 ± 0.01 & 0.04 ± 0.02 \\
    \bottomrule
    \end{tabular}
\end{table}

\begin{figure}[h!]
    \begin{minipage}[b]{0.48\linewidth}
        \centering
    \includegraphics[width=1.05\textwidth]{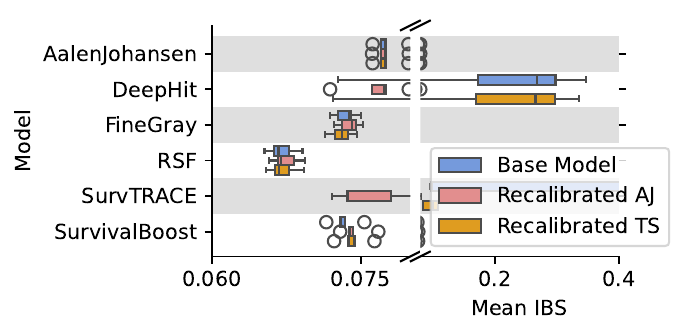}  
        \caption{Mean Integrated Brier Score}
    \end{minipage}
    \begin{minipage}[b]{0.48\linewidth}
        \centering
        \includegraphics[width=\textwidth]{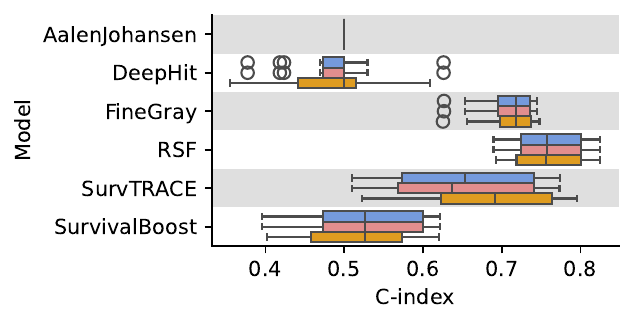}  
        \caption{C-index}
    \end{minipage}
    \caption{\textbf{SEER 10k training samples}, before and after recalibration, usual metrics.}
    \label{fig:metrics_seer10k}
\end{figure}

\begin{figure}[h!]
    \begin{minipage}[b]{0.48\linewidth}
        \centering
    \includegraphics[width=\textwidth]{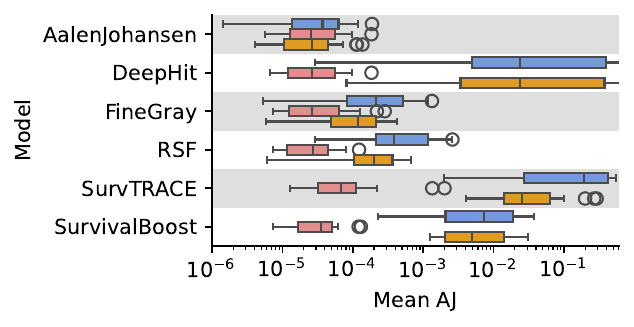}  
        \caption{AJ Calibration}
    \end{minipage}
    \begin{minipage}[b]{0.48\linewidth}
        \centering
        \includegraphics[width=\textwidth]{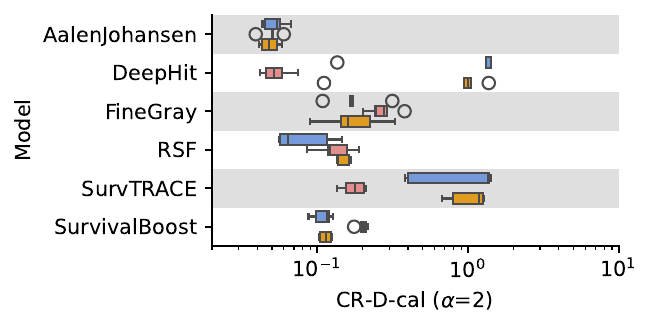}  
        \caption{CR-D-calibration}
    \end{minipage}
    \caption{\textbf{SEER 10k training samples}, before and after recalibration, calibration metrics.}
    \label{fig:seer10k_metrics_synthe}
\end{figure}

\begin{table}[ht!]
    \centering
    \caption{\textbf{SEER 10k training samples}, KS test of the calibrations with the multi-test correction. Computations are made over 5 seeds and we show the percentage of tests that have passed. }
    \begin{tabular}{llrr}
    \toprule
    Model & Recalibration & Test D-CR-Calibration & Test AJ-${\rm cal}_{K}^\alpha$-calibration \\ 
    \midrule
    AalenJohansen & Base Model & 100\% & 100\%\\
    AalenJohansen & Recalibrated AJ & 100\% & 100\%\\
    nAalenJohansen & Recalibrated TS & 100\% & 100\%\\
    DeepHit & Base Model & 0\% & 0\% \\
    DeepHit & Recalibrated AJ & 80\% & 100\%\\
    DeepHit & Recalibrated TS & 20\% & 0\%\\
    FineGray & Base Model & 20\% & 100\%\\
    FineGray & Recalibrated AJ & 0\% & 100\%\\
    FineGray & Recalibrated TS & 0\% & 100\%\\
    RSF & Base Model & 80\% & 40\%\\
    RSF & Recalibrated AJ & 0\% & 100\%\\
    RSF & Recalibrated TS & 0\% & 100\%\\
    SurvTRACE & Base Model & 0\% & 0\%\\
    SurvTRACE & Recalibrated AJ & 0\% & 100\%\\
    SurvTRACE & Recalibrated TS & 0\% & 0\%\\
    SurvivalBoost & Base Model & 100\% & 0\%\\
    SurvivalBoost & Recalibrated AJ & 0\% & 100\%\\
    SurvivalBoost & Recalibrated TS & 100\% & 0\%\\
    \bottomrule
    \end{tabular}
    \label{tab:seer10ktestdcal}
\end{table}

\begin{table}[h!]
    \centering
    \caption{\textbf{SEER 10k training points}: CR-$\hat{D}$-calibration per event.  We show the $\hat{D}_2^{CR}$.}
    \label{tab:seer10kcalibperevent}
    \begin{tabular}{llll}
    \toprule
    Model & Event 1 & Event 2 & Event 3 \\
    \midrule
    AalenJohansen & 0.03 ± 0.01 & 0.04 ± 0.01 & 0.03 ± 0.01 \\
    DeepHit & 0.77 ± 0.42 & 0.73 ± 0.38 & 0.79 ± 0.43 \\
    FineGray & 0.05 ± 0.03 & 0.15 ± 0.09 & 0.12 ± 0.06 \\
    RSF & 0.04 ± 0.01 & 0.08 ± 0.04 & 0.04 ± 0.01 \\
    SurvTRACE & 0.54 ± 0.28 & 0.61 ± 0.34 & 0.57 ± 0.31 \\
    SurvivalBoost & 0.07 ± 0.01 & 0.06 ± 0.02 & 0.07 ± 0.01\\
    \bottomrule
    \end{tabular}
\end{table}

\begin{figure}[h!]
    \begin{minipage}[b]{0.48\linewidth}
        \centering
    \includegraphics[width=1.05\textwidth]{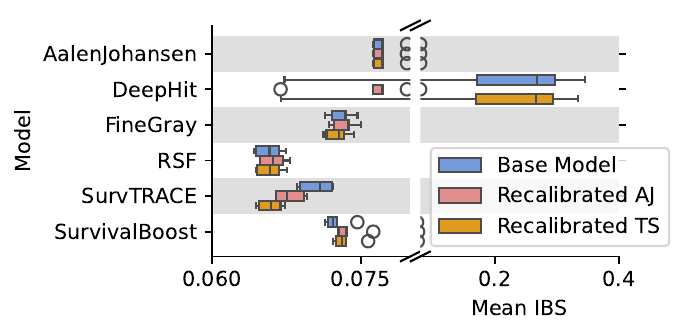}  
        \caption{Mean Integrated Brier Score}
    \end{minipage}
    \begin{minipage}[b]{0.48\linewidth}
        \centering
        \includegraphics[width=\textwidth]{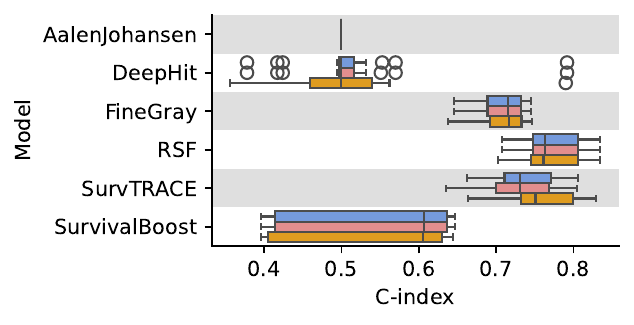}  
        \caption{C-index}
    \end{minipage}
    \caption{\textbf{SEER 100k training samples}, before and after recalibration, usual metrics.}
    \label{fig:metrics_seer100k}
\end{figure}

\begin{figure}[h!]
    \begin{minipage}[b]{0.48\linewidth}
        \centering
    \includegraphics[width=\textwidth]{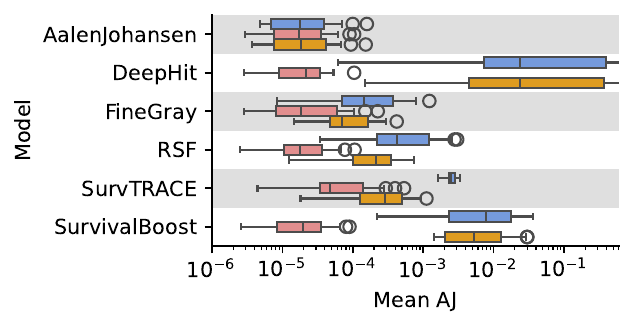}  
        \caption{AJ Calibration}
    \end{minipage}
    \begin{minipage}[b]{0.48\linewidth}
        \centering
        \includegraphics[width=\textwidth]{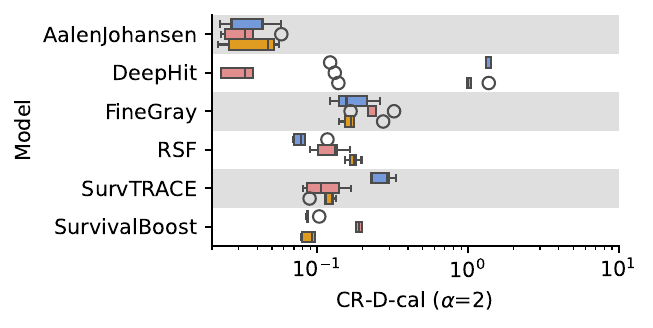}  
        \caption{CR-D-calibration}
    \end{minipage}
    \caption{\textbf{SEER 100k training samples}, before and after recalibration, calibration metrics.}
    \label{fig:seer100k_metrics_synthe}
\end{figure}

\begin{table}[h!]
    \centering
    \caption{\textbf{SEER 100k training samples}, KS test of the calibrations with the multi-test correction. Computations are made over 5 seeds and we show the percentage of tests that have passed. }
    \begin{tabular}{llrr}
    \toprule
    Model & recalibration & Test D-CR-Calibration & Test AJ-${\rm cal}_{K}^2$-calibration \\
    \midrule
    AalenJohansen & Base Model & 100\%& 100\% \\
    AalenJohansen & Recalibrated AJ & 100\% & 100\%\\
    AalenJohansen & Recalibrated TS & 100\% & 100\%\\
    DeepHit & Base Model & 0\% & 0\%\\
    DeepHit & Recalibrated AJ & 80\% & 100\%\\
    DeepHit & Recalibrated TS & 0\% & 0\%\\
    FineGray & Base Model & 20\% & 100\%\\
    FineGray & Recalibrated AJ & 0\% & 100\%\\
    FineGray & Recalibrated TS & 0.0 \% & 100\% \\
    RSF & Base Model & 80\% & 20\%\\
    RSF & Recalibrated AJ & 0\% & 100\%\\
    RSF & Recalibrated TS & 0\% & 60\%\\
    SurvTRACE & Base Model & 0\% & 40\%\\
    SurvTRACE & Recalibrated AJ & 0\% & 100\%\\
    SurvTRACE & Recalibrated TS & 20\% & 100\%\\
    SurvivalBoost & Base Model & 100\% & 0\%\\
    SurvivalBoost & Recalibrated AJ & 0\% & 100\%\\
    SurvivalBoost & Recalibrated TS & 100\% & 0\%\\
    \bottomrule
    \end{tabular}
    \label{tab:seer100kdcrtest}
\end{table}

\begin{table}[h!]
    \centering
    \caption{\textbf{SEER 100k training points}: CR-$\hat{D}$-calibration per event. We show the $\hat{D}_2^{CR}$.}
    \label{tab:seer100kcalibperevent}
    \begin{tabular}{llll}
    \toprule
    Model & Event 1 & Event 2 & Event 3 \\
    \midrule
    AalenJohansen & 0.02 ± 0.01 & 0.03 ± 0.01 & 0.02 ± 0.01 \\
    DeepHit & 0.77 ± 0.41 & 0.72 ± 0.39 & 0.79 ± 0.43 \\
    FineGray & 0.04 ± 0.02 & 0.14 ± 0.05 & 0.1 ± 0.04 \\
    RSF & 0.03 ± 0.0 & 0.08 ± 0.02 & 0.04 ± 0.02 \\
    SurvTRACE & 0.11 ± 0.03 & 0.19 ± 0.04 & 0.16 ± 0.02 \\
    SurvivalBoost & 0.04 ± 0.01 & 0.06 ± 0.01 & 0.05 ± 0.01 \\
    \bottomrule
    \end{tabular}
\end{table}

\FloatBarrier
\section{DATASETS DESCRIPTION} \label{sec:datasetsdescr}

For all datasets, the distributions of event types are summarized in Figure \ref{fig:hist_data}.

For the synthetic dataset, we consider a competing-risks setting with three causes and covariates $X = (\Lambda_1, \Lambda_3, S_1, S_2, S_3)$ with independent components drawn from uniform distributions:
\begin{itemize}
    \item $\Lambda_1 \sim \mathcal{U}(0.4, 0.9)$,
    \item $\Lambda_2 = 1$ (constant),
    \item $\Lambda_3 \sim \mathcal{U}(1.2, 3)$,
    \item $S_1 \sim \mathcal{U}(1, 20)$,
    \item $S_2 \sim \mathcal{U}(1, 10)$ and
    \item $S_3 \sim \mathcal{U}(1.5, 5)$.
\end{itemize}
Conditionally on $X$, we generate three event times $T_1, T_2, T_3$ and a censoring time $C$ as independent positive random variables with Weibull distributions with
\begin{itemize}
    \item $\mathbb{P}(T_k \le t \mid X)= 1 - \exp\bigl(- (t / \Lambda_k)^{S_k}\bigr)$ for $k = 1,2,3$, and
    \item $\mathbb{P}(C \le t \mid X)= 1 - \exp\bigl(- t / \Lambda_0\bigr)$ with $\Lambda_0 = 1.5 \cdot \mathbb{E}[T^*]$,
\end{itemize}
where $T^* = \min(T_1, T_2, T_3)$ and $\Delta^* = \arg\min_k T_k$.

\begin{figure}[h!]
    \begin{minipage}[b]{0.33\linewidth}
        \centering
    \includegraphics[width=\textwidth]{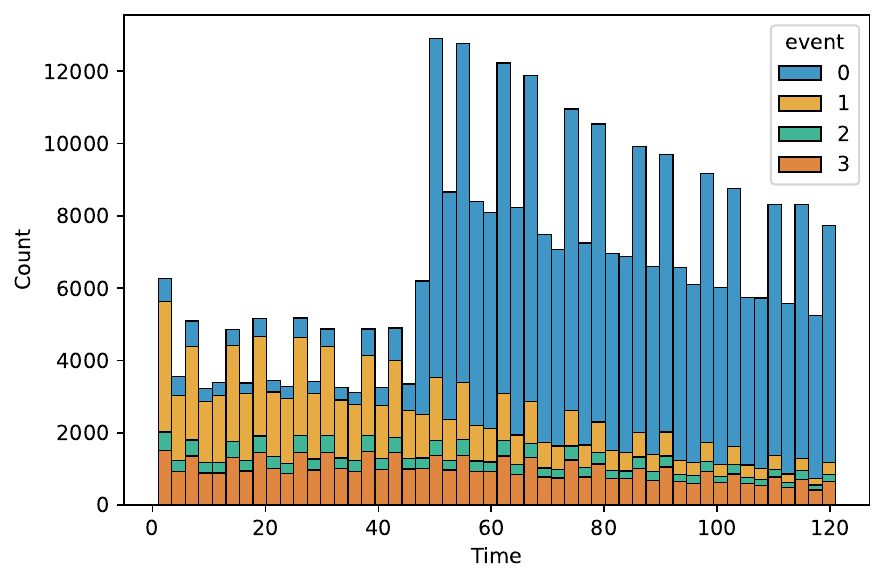}  
        \caption{SEER}\label{fig:hist_seer}
    \end{minipage}
    \begin{minipage}[b]{0.33\linewidth}
        \centering
        \includegraphics[width=\textwidth]{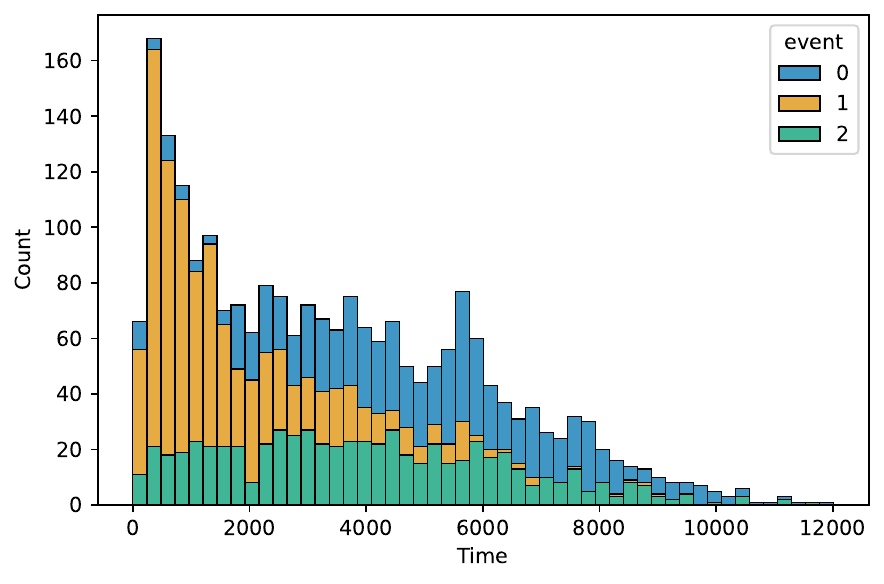}  
        \caption{METABRIC}\label{fig:hist_metabric}
    \end{minipage}
    \begin{minipage}[b]{0.33\linewidth}
        \centering
        \includegraphics[width=\textwidth]{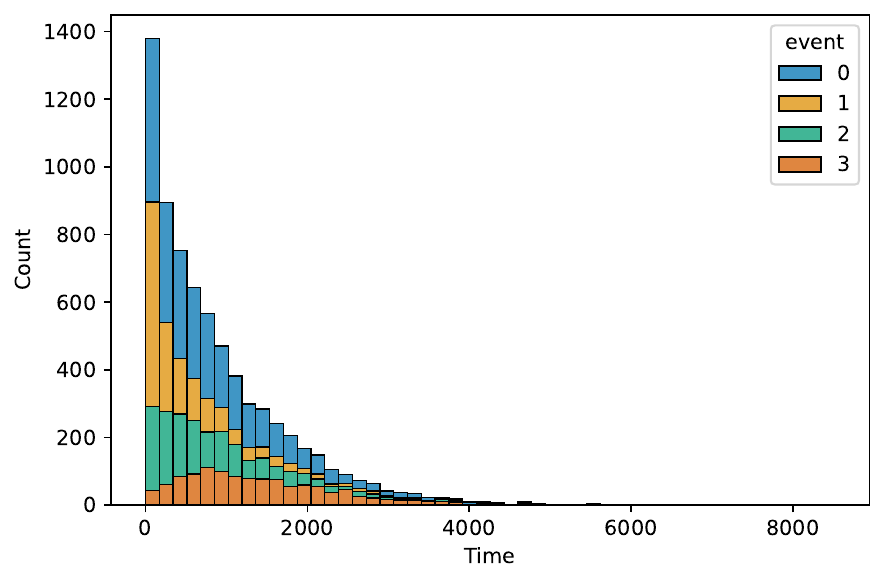}  
        \caption{Synthetic Dataset}\label{fig:hist_synthetic}
    \end{minipage}
    \caption{Histograms of the distribution of the events for each dataset}
    \label{fig:distribdatasets}\label{fig:hist_data}
\end{figure}

\FloatBarrier
\section{NOTATIONS}
\begin{table}[h!]
\caption{Notations used}
\label{tab:history}
\begin{tabularx}{\linewidth}{r r X}
\toprule
Maths Symbol & Domain & Description\\
\midrule 
    $K$ & $\mathbb{N}^*$ & number of competing events (events of interest)\\ 
    $\mathbf{X}$ & $\mathbb{R}^d$& the $d$ covariates the individual \\
    $T^*_k$ &$\mathbb{R}_+$& random variable of the time-to-event for event $k$\\
    $C$ &$\mathbb{R}_+$& random variable of the time-to-censoring\\
    $T^*$ & $\mathbb{R}_+$ & $\min (T^*_1, T^*_2, ..., T^*_K)$\\
    $T$ & $\mathbb{R}_+$ & $\min (T, C)$ \\
    $\Delta^*$ &$[1, K]$&$\argmin\limits_{k \in [1, K]} (T^*_k)$\\
    $\Delta$ &$[0, K]$&$\argmin (C, T^*_1, T^*_2, ..., T^*_K)$\\
    $\mu^\star$ & & Distribution of ($\mathbf{X}, (T^\star, \Delta^\star))$)\\
    $\mu$ & & Distribution of ($\mathbf{X}, (T, \Delta))$)\\
    \midrule
    S & $\mathbb{R}^{d}\longrightarrow \mathbb{R}_+$ & Survival function\\
    F & $\mathbb{R}^{d}\longrightarrow \mathbb{R}_+$ & Cumulative Incidence Function \\
    \midrule
    $\mathcal{D}$ & & Dataset considered \\
    $n$ &$\mathbb{N}^*$& number of individuals in the training set \\
    $m$ &$\mathbb{N}^*$& number of individuals in the calibration set \\
    $\mathbf{x}$ & & individuals observed \\
    $t$ &$\mathbb{R}_+^n$& time-to-event observed\\
    $\delta$ & $[0, K]$& event observed, 0 indicates censoring \\
    $B$ & $\mathbb{N}^*$ & Number of buckets \\
    \midrule
    $I_k \defeq [l_k, l_{k+1}]$ & & $k^{th}$ interval \\
    $F_T \defeq F_k(T|X)$ & $\mathbb{R}_+$& \\
    $F_{./\infty}  \defeq F_k(.|X)$ &$\mathbb{R}_+$ & \\
    PI & & Plug-in estimator \\
\bottomrule
\end{tabularx}
\end{table}

\end{document}